\newcounter{nnote_counter}
\newcommand{\ignore}[1]{}
\begin{document}

\title{Approximating $\{0,1,2\}$-Survivable Networks with Minimum Number of Steiner Points}

\author{Nachshon Cohen \and Zeev Nutov}
\institute{The Open University of Israel,\\
\email{nachshonc@gmail.com, nutov@openu.ac.il}}
\def \i {p}
\def \t {\lfloor \lg k \rfloor}
\maketitle

{\bf Key-words:} Wireless network, Relay nodes, Survivable network, Steiner tree, 
$2$-connectivity, Approximation algorithms. 

\begin{abstract}
We consider low connectivity variants of the 
{\sf Survivable Network with Minimum Number of Steiner Points} ({\sf SN-MSP}) problem: given 
a finite set $R$ of terminals in a metric space $(M,d)$, 
a subset $B \subseteq R$ of ``unstable'' terminals, and 
connectivity requirements $\{r_{uv}: u,v \in R\}$, 
find a minimum size set $S \subseteq M$ of additional points such that 
the unit-disc graph of $R \cup S$ contains $r_{uv}$ pairwise internally edge-disjoint and 
$(B \cup S)$-disjoint $uv$-paths for all $u,v \in R$.
The case when $r_{uv}=1$ for all $u,v \in R$ is the {\sf Steiner Tree with Minimum Number of Steiner Points}
({\sf ST-MSP}) problem, and the case $r_{uv} \in \{0,1\}$ 
is the {\sf Steiner Forest with Minimum Number of Steiner Points} 
({\sf SF-MSP}) problem.
Let $\Delta$ be the maxi\-mum number of points in a unit ball 
such that the distance between any two of them is larger than $1$.
It is known that $\Delta=5$ in $\mathbb{R}^2$. 
The previous known approximation ratio for {\sf ST-MSP} was  
$\lfloor (\Delta+1)/2 \rfloor+1+\epsilon$ in an arbitrary normed space \cite{NY},
and $2.5+\epsilon$ in the Euclidean space $\mathbb{R}^2$ \cite{cheng2008relay}.
Our approximation ratio for {\sf ST-MSP} is $1+\ln(\Delta-1)+\epsilon$  in an arbitrary normed space,
which in $\mathbb{R}^2$ reduces to $1+\ln 4+\epsilon < 2.3863 +\epsilon$.
For {\sf SN-MSP} with $r_{uv} \in \{0,1,2\}$, we give a simple $\Delta$-approximation algorithm.
In particular, for {\sf SF-MSP}, this improves the previous ratio $2\Delta$.
\end{abstract}

\section{Introduction}

\subsection{Problems considered}

A large research effort is focused on developing algorithms for finding a ``cheap'' 
network that satisfies a certain property.
In wired networks, where connecting any two nodes incurs a cost, many problems can be cast 
as finding a subgraph of minimum cost that satisfies some prescribed connectivity requirements. 
Following previous work on min-cost connectivity problems, 
we use the following generic notion of connectivity.

\begin{definition}
Let $G=(V,E)$ be a graph and let $Q \subseteq V$.
The $Q$-connectivity $\lambda^Q_G(u,v)$ of $u,v$ in $G$ is the maximum number 
of pairwise $(E \cup Q \setminus \{u,v\})$-disjoint $uv$-paths in $G$.
Given connectivity requirements $r=\{r_{uv}:u,v \in R \subseteq V\}$ on a subset $R \subseteq V$ of terminals, 
we denote by $D_r=\{uv:u,v \in R, r_{uv} >0\}$ the set of ``demand edges'' of $r$.
We say that $G$ is $(r,Q)$-connected,
or simply $r$-connected if $Q$ is understood, if $\lambda^Q_G(u,v) \geq r_{uv}$ for all $uv \in D_r$. 
\end{definition}

Note that edge-connectivity is the case $Q=\emptyset$ and node-connectivity is the case $Q=V$.
The members of $E \cup Q$ will be called {\em elements}, hence $\lambda^Q_G(u,v)$ is the maximum number 
of pairwise internally element-disjoint $uv$-paths in $G$. 
Variants of the following classic problem were extensively studied in the literature.

\begin{center} 
\fbox{
\begin{minipage}{0.960\textwidth}
\noindent
{\sf Survivable Network} ({\sf SN}) \\
{\em Instance:} 
A graph $G=(V,E)$ with edge costs, $Q \subseteq V$, and connectivity requirements 
$r=\{r_{uv}:uv \in R \subseteq V\}$. \\
{\em Objective:} Find a minimum-cost $(r,Q)$-connected subgraph $H$ of $G$.
\end{minipage}
}
\end{center}

In practical networks the connectivity requirements are rather small, 
usually $r_{uv} \in \{0,1,2\}$ -- so called {\sf $\{0,1,2\}$-SN}.
Particular cases in this setting are 
{\sf Minimum Spanning Tree} ({\sf MST}) ($r_{uv}=1$ for all $u,v \in V$),
{\sf Steiner Tree}                      ($r_{uv}=1$ for all $u,v \in R$) and 
{\sf Steiner Forest}                    ($r_{uv} \in \{0,1\}$ for all $u,v \in R$),
and {\sf $2$-Connected Subgraph}        ($r_{uv}=2$ for all $u,v \in V$).

In wireless networks, the range and the location of the 
transmitters determines the resulting communication network.
We consider adding a minimum number of transmitters such that the resulting
communication network is $(r,Q)$-connected.
If the range of the transmitters is fixed, our goal is to add a minimum number of transmitters, 
and we get the following type of problems.

\begin{definition}
Let $(M,d)$ be a metric space and let $V \subseteq M$. 
The {\em unit-disk graph} of $V$ has node set $V$ and edge set $\{uv: u,v \in V, d(u,v) \leq  1\}$.
\end{definition}

\begin{center} 
\fbox{
\begin{minipage}{0.960\textwidth}
\noindent
{\sf Survivable Network with Minimum Number of Steiner Points} ({\sf SN-MSP}) \\
{\em Instance:} 
\ A finite set $R \subseteq M$ of {\em terminals} in a metric space $(M,d)$, 
a set $B \subseteq R$ of ``unstable'' terminals, 
connectivity requirements $\{r_{uv}:uv \in R\}$. \\
{\em Objective:} Find a minimum size set $S \subseteq M$ such that the unit-disk graph 
of $R \cup S$ is $(r,Q)$-connected, where $Q=B \cup S$.
\end{minipage}
}
\end{center}

As in previous work, we will allow to place several points at the same location, and 
assume that the maximum distance between terminals is polynomial in the number of terminals.

\subsection{Previous work and our results}

On previous work on high connectivity variants of {\sf SN} problem we refer the reader 
to a survey in \cite{KN-sur} and here only mention some work relevant to this paper. 
The {\sf Steiner Tree} problem was studied extensively, c.f. \cite{Z116,Zln2,RZ,PS,BGRS,GORZ}, 
and the currently best approximation ratio for it is $\ln 4+\epsilon$ \cite{BGRS}.
Let $\tau^*$ denote the optimum value of a standard cut-LP relaxation for {\sf SN}
(see Section~\ref{s:main2}). In \cite{GW} is given a combinatorial primal-dual 
algorithm for {\sf Steiner Forest} that computes a solution of cost at most $2 \tau^*$.
For {\sf $\{0,1,2\}$-SN} a similar results is achieved by the iterative rounding method \cite{FJW};
a combinatorial primal-dual algorithm that computes a solution of cost at most $3\tau^*$ is given in \cite{RW}.


We survey some relevant literature on {\sf SN-MSP} problems.
{\sf ST-MSP} is NP-hard even in $\mathbb{R}^2$, 
and arises in various wireless network design problems, c.f. 
\cite{BDHR,Calinescu,CHEN-DU,cheng2008relay,LKN,KKS,MAN-ZEL,NY} for only a sample of papers in the area,
where it is studied both in $\mathbb{R}^2$ and in general metric spaces.
In the latter case, the approximation ratio is usually expressed in terms of the following parameter.
Let $\Delta$ be the maximum number of ``independent'' points in the unit ball,
such that the distance between any two of them is larger than $1$.
It is known \cite{ROB-SAL} that $\Delta$ equals the maximum degree of a minimum-degree
Minimum Spanning Tree in the normed space.
For Euclidean distances we have $\Delta=5$ in $\mathbb{R}^2$ and $\Delta=11$ in $\mathbb{R}^3$,
and in $\mathbb{R}^\ell$ $\Delta$ is at most the Hadwiger number \cite{ROB-SAL}; 
hence $\Delta \leq 2^{0.401\ell(1+o(1))}$, by \cite{KabV}.
  
In finite metric spaces, {\sf ST-MSP} is equivalent to the variant of the 
{\sf Node Weighted Steiner Tree} problem when all terminals have costs $0$ and the other nodes have cost $1$.
Klein and Ravi \cite{KRa} proved that this variant is {\sf Set-Cover} 
hard to approximate, and gave an $O(\ln |R|)$-approximation algorithm for general weights.
Hence up to constants, even for finite metric spaces,
the ratio $O(\ln |R|)$ of \cite{KRa} is the best possible unless P=NP.
Note however, that this does not exclude constant ratios for metric spaces with small $\Delta$,
e.g., $\Delta=5$ in $\mathbb{R}^2$.

Most algorithms for {\sf SN-MSP} problems applied the following reduction method,
by solving the corresponding {\sf SN} instance obtained as follows.

\begin{definition} \label{d:K}
Given a finite set $R$ of points in a metric space $(M,d)$ and an integer $k \geq 1$,
the (multi)graph $K_R$ has node set $R$ and $k$ parallel edges between every pair of nodes.
The costs of the $k$ edges between $u,v$ are defined as follows.
Let $\hat{d}_{uv}=\max\{\lceil d(u,v) \rceil-1,0\}$.
If $\hat{d}_{uv}>0$, then all the $k$ edges have cost $\hat{d}_{uv}$.
If $\hat{d}_{uv}=0$, then one edge has cost $0$ and the others have cost $1$.
\end{definition}

Let ${\sf opt}$ denote the optimal solution value of a problem instance at hand.
It is easy to see that any solution of cost $C$ to the corresponding {\sf SN} instance
with $k=\max_{uv \in D_r} r_{uv}$
defines a solution $S$ of size $C$ to the original {\sf SN-MSP} instance,
where every node in $S$ has degree exactly $2$; such a solution is called a {\em bead solution}.
Conversely, any bead solution $S$ can be converted into a solution to the {\sf SN} 
instance of cost at most $|S|$ (see \cite{LKN,Calinescu}).
Due to this bijective correspondence, we simply define a bead 
solution as a solution to the corresponding {\sf SN} instance,
and denote the optimal value of a bead solution to an instance $I$ by $\tau=\tau(I)$.
If the {\sf SN} instance admits a $\rho$-approximation algorithm, 
and if for the given {\sf SN-MSP} instance there exists a bead solution $S$ of size $\leq \alpha{\sf opt}$,
then we get a $\rho \alpha$-approximation algorithm for the {\sf SN-MSP} instance.
Equivalently, for a class ${\cal I}$ of {\sf SN-MSP} instances, define a parameter $\alpha$ by 
$\alpha=\alpha({\cal I})=\sup_{I \in {\cal I}} \frac{{\sf opt}(I)}{\tau(I)}$.
Then approximation ratio $\rho$ for {\sf SN} instances that correspond to the class ${\cal I}$ 
implies approximation ratio $\alpha \rho$ for {\sf SN-MSP} instances in class ${\cal I}$.

M\u{a}ndoiu and Zelikovsky \cite{MAN-ZEL} showed that for {\sf ST-MSP} $\alpha=\Delta-1$.
Since the instance of {\sf SN} that corresponds to {\sf ST-MSP} is the {\sf MST} problem 
that can be solved in polynomial time, this gives a $(\Delta-1)$-approximation algorithm for {\sf ST-MSP}.
A more general method, uses a reduction to the 
{\sf Minimum $k$-Connected Spanning Subhypergraph} problem, see Section~\ref{s:main1}.
This method was initiated by Zelikovsky \cite{Z116}, improved in a long series of papers 
(part of them are \cite{Z116,PS,RZ}), 
and culminated in the paper of Byrka, Grandoni, Rothvo{\ss}, and Sanit\`{a} \cite{BGRS}. 
For {\sf ST-MSP} in $\mathbb{R}^2$, Chen and Du \cite{cheng2008relay} applied this method to get the currently 
best known ratio $2.5+\epsilon$. 
In arbitrary metric spaces, the ratio  $\Delta-1$ of \cite{MAN-ZEL} was improved to 
$\lfloor (\Delta+1)/2 \rfloor+1+\epsilon$ in \cite{NY}, also using the same method.
These works assume that {\sf ST-MSP} instances with a constant number of terminals can be solved in polynomial time,
which holds in $\mathbb{R}^2$ if the maximum distance between terminals is polynomial in the number of terminals,
see \cite[Lemma~11]{CHEN-DU} and the discussion there.
In this paper we apply a variant due to Zelikovsky \cite{Zln2}, and obtain the following result.

\begin{theorem} \label{t:main1}
{\sf ST-MSP} with constant $\Delta$ admits an approximation scheme with ratio $1+\ln(\Delta-1)+\epsilon$,
provided that {\sf ST-MSP} instances with a constant number of terminals can be solved in polynomial time.
In particular, in $\mathbb{R}^2$ the ratio is $1+\ln 4+\epsilon < 2.3863 +\epsilon$.
\end{theorem}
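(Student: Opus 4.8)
The plan is to follow the standard reduction-plus-LP-relaxation paradigm that underlies all of the cited {\sf ST-MSP} algorithms, but to replace the greedy/local-search hypergraph step with Zelikovsky's LP-based approximation scheme from \cite{Zln2} for the {\sf Minimum Spanning Subhypergraph} (equivalently, {\sf Steiner Tree via full components}) problem. First I would recall the reduction: an {\sf ST-MSP} instance yields the {\sf MST} instance $K_R$ of Definition~\ref{d:K}, and by \cite{MAN-ZEL} one has $\alpha=\Delta-1$, i.e. there is a bead solution of size at most $(\Delta-1)\cdot{\sf opt}$. The key observation for beating $\Delta-1$ is that one does not need to route every bead path independently; instead one groups terminals into \emph{full components} of bounded size, places Steiner points to realize each full component, and charges the total number of Steiner points against an LP relaxation whose optimum is related to ${\sf opt}$.

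Concretely, I would proceed in the following steps. (1) For each fixed integer $p$, enumerate all subsets $T\subseteq R$ with $|T|\le p$ and, using the hypothesis that {\sf ST-MSP} instances with a constant number of terminals are solvable in polynomial time, compute an optimal Steiner-point set realizing a connected unit-disk graph on $T$; call its size $c(T)$. This defines a hypergraph/“full component” instance: cover $R$ by hyperedges $T$ minimizing $\sum c(T)$ over a spanning sub-hypergraph. (2) Argue a structural bound: there is a spanning sub-hypergraph of $\{T:|T|\le p\}$ whose total $c$-cost is at most $\bigl(1+\Theta(1/\log p)\bigr)$ — more precisely, some factor tending to $1$ as $p\to\infty$ — times the cost of the (unrestricted) bead optimum; this is the analogue of the classical “full components of size $k$ lose a factor $1+1/\lfloor\lg k\rfloor$” lemma, transported to the metric-Steiner-point setting. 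Here the independence bound $\Delta$ enters: when one optimally realizes a small full component $T$ versus routing beads along a tree, the loss per edge is controlled by $\Delta-1$, and the logarithmic averaging over component size replaces the leading constant $\Delta-1$ by $1+\ln(\Delta-1)$ in the limit. (3) Apply Zelikovsky's LP-relaxation algorithm of \cite{Zln2} to the bounded-size hypergraph instance: it returns a spanning sub-hypergraph of cost at most $\bigl(1+\ln(\Delta-1)\bigr)$ times the LP optimum $\tau^*$ (after the structural step of (2) has been folded in), and $\tau^*\le\tau\le(\Delta-1){\sf opt}$ gives the claimed ratio once the $p$-dependent slack is absorbed into $\epsilon$. (4) Translate the hypergraph solution back: realize each chosen full component by its precomputed Steiner points, take the union, and verify the resulting unit-disk graph on $R\cup S$ is connected — this is immediate since the hyperedges span $R$ and each realized component is internally connected. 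Finally, specialize $\Delta=5$ in $\mathbb{R}^2$ (using that constant-terminal instances are polynomially solvable there when the diameter is polynomial, per \cite[Lemma~11]{CHEN-DU}) to get $1+\ln 4+\epsilon<2.3863+\epsilon$.

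The main obstacle I expect is step (2): correctly proving that restricting to full components of size $\le p$ and replacing their tree-routing cost by their \emph{optimal} Steiner-point realization costs only a factor $1+\ln(\Delta-1)$ in the limit $p\to\infty$. This requires a careful contraction/decomposition argument on an optimal bead solution — cutting the optimal structure into pieces of bounded size, bounding how many Steiner points each piece needs versus its “share” of the global optimum, and showing the averaged loss over the logarithmic range of component sizes is exactly $\sum_{i=1}^{\lg p}\frac{1}{i}\cdot(\text{per-level loss})\to 1+\ln(\Delta-1)$. The accounting has to be done so that the $\Delta-1$ factor appears only at the smallest scale and is damped logarithmically at larger scales; getting the constants to line up (rather than, say, $1+\ln\Delta$ or $\Delta-1$ with no improvement) is the delicate part, and is presumably where the detailed use of \cite{Zln2} and the metric independence number $\Delta$ is essential. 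Everything else — the enumeration, the LP solving, the back-translation, and the diameter/polynomiality caveat — is routine given the cited results.
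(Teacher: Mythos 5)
Your high-level architecture matches the paper's: enumerate full components of size at most a constant $p$, solve each optimally using the hypothesis, reduce to a connected-spanning-sub-hypergraph problem, run Zelikovsky-style local replacement \cite{Zln2}, and control the loss from restricting component size. But the quantitative accounting of where the factor $1+\ln(\Delta-1)$ comes from is wrong in two places, and as written your plan does not yield the claimed ratio. First, in step (2) you attribute the $1+\ln(\Delta-1)$ to a ``logarithmic averaging over component sizes'' in the decomposition lemma, with a harmonic sum $\sum_{i=1}^{\lg p} 1/i$ converging to $1+\ln(\Delta-1)$. That sum diverges as $p\to\infty$, and no such lemma is provable or needed. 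The decomposition lemma's only job is to show that the restricted hypergraph optimum $\tau_k$ satisfies $\tau_k\le \alpha_k\,{\sf opt}$ with $\alpha_k\to 1$ (the paper proves $\alpha_k\le 1+2/\lfloor\lg\lfloor k/(\Delta-1)\rfloor\rfloor$, Theorem~\ref{t:decomposition}, by extending Du--Zhang to count Steiner points in a degree-$\Delta$ tree); it contributes only the $\epsilon$, not the logarithm. Second, in step (3) you bound the algorithm's output by $(1+\ln(\Delta-1))\cdot\tau^*$ and then invoke $\tau^*\le(\Delta-1){\sf opt}$; multiplying these gives $(1+\ln(\Delta-1))(\Delta-1){\sf opt}$, which is far worse than claimed. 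There is no residual $(\Delta-1)$ multiplier in the correct analysis.

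The missing idea is the precise form of the relative-greedy guarantee and how the two parameters $\alpha_2$ and $\alpha_k$ enter it \emph{separately}. The local replacement algorithm, started from a spanning tree $T^*$ of the $2$-element hyperedges (the MST / bead solution, with $c(T^*)\le\alpha_2\,{\sf opt}=(\Delta-1){\sf opt}$ by \cite{MAN-ZEL}), outputs a connected sub-hypergraph of cost at most $\tau_k\bigl(1+\ln(c(T^*)/\tau_k)\bigr)$ (Theorem~\ref{t:hypergraph}). Since ${\sf opt}\le\tau_k\le\alpha_k{\sf opt}$, the logarithm is at most $\ln\alpha_2=\ln(\Delta-1)$ and the leading factor is at most $\alpha_k{\sf opt}$, giving $\alpha_k(1+\ln(\Delta-1)){\sf opt}$ as in Corollary~\ref{c:GST}; choosing $k=2^{O(1/\epsilon)}$ absorbs $\alpha_k$ into $1+\epsilon$. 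So the $\ln(\Delta-1)$ is the greedy's price for starting from a tree that is a factor $\Delta-1$ off optimal, not a property of the component decomposition. (A minor point: \cite{Zln2} is the combinatorial relative-greedy heuristic, not an LP-based scheme; the paper uses no LP in this part.) Your step (4) and the $\mathbb{R}^2$ specialization are fine.
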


We now discuss {\sf SN-MSP} problems with $k=\max_{uv \in V} r_{uv} \geq 2$.
Bredin, Demaine, Hajiaghayi, and Rus \cite{BDHR} considered a related problem of adding a minimum size $S$
such that the unit disc graph of $R \cup S$ is $k$-node-connected
(note that we require $k$-connectivity only between terminals).
For this problem in $\mathbb{R}^2$, they gave an $O(k^5)$-approximation algorithm,
but essentially they implicitly proved that for this class of problems $\alpha=O(\Delta k^3)$.
Recently, it was shown in \cite{NY} that $\alpha=\Theta(\Delta k^2)$ for node-connectivity 
{\sf SN-MSP} instances in any normed space.

Kashyap, Khuller, and Shayman \cite{KKS} considered the $2$-edge/node-connectivity version of {\sf SN-MSP},
where $r_{uv}=2$ for all $u,v \in R$. They used the reduction method described in Definition~\ref{d:K},
namely, their algorithm constructs an {\sf SN} instance as in Definition~\ref{d:K}
and then converts its solution into a bead solution to the {\sf SN-MSP} instance.
Although they analyzed a performance of specific $2$-approximation algorithms -- 
the algorithm of Khuller and Vishkin \cite{KV} for $2$-edge-connectivity and 
the algorithm of Khuller and Raghavachari \cite{KR} for $2$-node-connectivity,
they essentially proved that $\alpha=\Delta$ in both cases.
This implies ratio $2\Delta$ in both cases. 
The analysis of these specific algorithms was recently improved by Calinescu \cite{Calinescu},
showing that their tight performance is $\Delta$ for node-connectivity and 
$2\Delta-1$ for edge-connectivity.
Note that the edge-connectivity version is {\em not} included in our model, 
since in our {\sf SN-MSP} instances every non-terminal node is in $Q$, 
namely, the paths are required to be $S$ disjoint. 

Let $\tau^*=\tau^*(I)$ denote the optimal value of a {\em fractional} bead solution of an {\sf SN-MSP} instance $I$,
namely, $\tau^*$ is the optimum of a standard cut-LP relaxation for the 
corresponding {\sf SN} instance (see Section~\ref{s:main2}).
Here we observe, that if the algorithm we use for the corresponding {\sf SN}
instance computes a solution of cost at most $\rho \tau^*$, then the relevant parameter is the following.

\begin{definition}
For a class ${\cal I}$ of {\sf SN-MSP} instances, let
$\alpha^*=\alpha^*({\cal I})=\sup_{I \in {\cal I}} \frac{{\sf opt}(I)}{\tau^*(I)}$.
\end{definition}


\begin{theorem} \label{t:main2}
For $Q$-connectivity {\sf $\{0,1,2\}$-\sf SN-MSP} $\alpha^*=\frac{\Delta}{2}$.
Thus if $Q$-connecti\-vity $\{0,1,2\}$-{\sf SN} admits a polynomial time algorithm that computes 
a solution of cost at most $\rho \tau^*$, 
then $Q$-connectivity $\{0,1,2\}$-{\sf SN-MSP} admits approximation ratio $\rho \cdot \frac{\Delta}{2}$.
In particular, for $\rho=2$ the ratio is $\Delta$, and thus $\{0,1,2\}$-{\sf SN-MSP} admits a 
$\Delta$-approximation algorithm.
\end{theorem}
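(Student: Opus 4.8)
The plan is to deduce Theorem~\ref{t:main2} from the single inequality $\tau^*(I)\le\frac{\Delta}{2}\cdot{\sf opt}(I)$, valid for every instance $I$ of $Q$-connectivity $\{0,1,2\}$-{\sf SN-MSP}, together with a matching tight example; these two facts give $\alpha^*=\frac{\Delta}{2}$. The algorithmic part is then immediate from the reduction recalled before the theorem: given a polynomial algorithm returning a solution of cost at most $\rho\tau^*$ for the {\sf SN} instance $K_R$ of Definition~\ref{d:K}, run it to obtain an {\sf SN}-solution of cost at most $\rho\tau^*\le\rho\frac{\Delta}{2}{\sf opt}$ and convert it to a bead solution of the same size; since $\{0,1,2\}$-{\sf SN} admits such an algorithm with $\rho=2$ by iterative rounding \cite{FJW} (with \cite{GW} covering the Steiner Forest special case and \cite{RW} a combinatorial alternative with $\rho=3$), this yields the claimed $\Delta$-approximation.

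For tightness, take $R$ to consist of $\Delta$ points lying in one unit ball and pairwise at distance strictly larger than $1$; such a configuration exists by the very definition of $\Delta$, and then every pairwise distance lies in $(1,2]$, so in $K_R$ each pair of terminals is joined by exactly one edge, of cost $\hat d=1$. With $r_{uv}=1$ for all pairs, the centre of the ball is within distance $1$ of every terminal, so one Steiner point there makes the unit-disk graph connected; hence ${\sf opt}=1$ (and ${\sf opt}\ge1$ since the terminals are pairwise non-adjacent). Summing the $\Delta$ singleton-cut constraints $x(\delta(v))\ge1$ of the cut-LP gives $2\sum_e x_e\ge\Delta$, so $\tau^*\ge\Delta/2$, with equality attained by weight $\frac12$ on a Hamiltonian cycle through $R$. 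Thus $\tau^*(I)/{\sf opt}(I)=\Delta/2$; the extremal instances should all look like this, a single Steiner point of ``full degree'' $\Delta$.

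The core is the upper bound $\tau^*\le\frac{\Delta}{2}{\sf opt}$. Fix an optimal $S^*$ and let $G^*$ be the unit-disk graph of $R\cup S^*$; it is $(r,Q)$-connected with $Q=B\cup S^*$. I would exhibit a feasible point $x$ of the cut-LP of $K_R$ of cost at most $\frac{\Delta}{2}|S^*|$. First pass to a minimal $(r,Q)$-connected subgraph $H^*$ of $G^*$; by suppressing every Steiner point of degree $\le2$ (a maximal Steiner subpath $a-s_1-\cdots-s_j-b$ between terminals becomes an edge $ab$, with $\hat d_{ab}\le j$ since $d(a,b)\le j+1$) and by local edge exchanges (adding the free edge $ab$ -- free because $d(a,b)\le1$ -- whenever $a,b$ are close neighbours of a common Steiner point, then discarding a redundant edge), arrange that every Steiner point has its $H^*$-neighbours pairwise at distance $>1$, hence at most $\Delta$ of them. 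Now let $x$ place weight $1$ on every terminal--terminal edge of $H^*$ (which costs nothing, as adjacent terminals have $\hat d=0$) and, for each connected component $C$ of the Steiner-point-induced subgraph of $H^*$, weight $\frac12$ on a cycle through the set $\partial_R(C)$ of terminals adjacent in $H^*$ to $C$, the cyclic order chosen along a traversal of $C$ so that consecutive terminals of the cycle are adjacent to a common vertex of $C$ (distance $\le2$) or to two adjacent vertices of $C$ (distance $\le3$), so each cycle edge has $\hat d\le2$. The cost accounting is meant to balance out: since $|\partial_R(C)|\le\sum_{s\in C}(\Delta-\deg_C(s))=\Delta|C|-2e_C$ ($\deg_C,e_C$ being the degrees and edge number inside $C$) while only $O(e_C)$ of the cycle edges are ``long'', the total over $C$ can be kept at $\frac{\Delta}{2}|C|$, and summing gives $\sum_e c_ex_e\le\frac{\Delta}{2}|S^*|=\frac{\Delta}{2}{\sf opt}$. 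For feasibility, take a demand pair $uv$ with $r_{uv}=r\in\{1,2\}$ and a cut $\delta_{K_R}(A)$, $u\in A$, $v\notin A$: $H^*$ has $r$ pairwise $(E\cup Q\setminus\{u,v\})$-disjoint $uv$-paths $P_1,\dots,P_r$, and each $P_i$ crosses $(A,R\setminus A)$ either through a terminal--terminal edge of $H^*$ (a weight-$1$ edge of $\delta_{K_R}(A)$) or while passing through some component $C$, which forces the cycle replacing $C$ to meet both sides of $A$ and hence to contribute at least two weight-$\frac12$ edges of $\delta_{K_R}(A)$; as the $P_i$ are edge- and $Q$-disjoint they use disjoint terminal edges and disjoint components, so these contributions lie on disjoint edges of $\delta_{K_R}(A)$, whence $x(\delta_{K_R}(A))\ge r$ (node-type constraints are handled the same way).

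The step I expect to be the main obstacle is precisely this cost accounting, carried out uniformly over components, hand in hand with the feasibility check. The exchanges that bring each neighbour set down to size $\le\Delta$ must preserve $(r,Q)$-connectivity even when the neighbours are themselves Steiner points or unstable terminals -- all in $Q$, hence not freely shareable between the two disjoint paths of a demand -- so one has to contract and reroute around whole clusters of $Q$-vertices; and the traversal order of a component $C$, the count and cost of its ``long'' cycle edges, and the bookkeeping of the parallel edges of $K_R$ must all be arranged so the per-component bound is exactly $\frac{\Delta}{2}|C|$ and not $\frac{\Delta}{2}|C|+O(|C|)$. This is where the improvement over the previous integral bound $\alpha=\Delta$ (cf.\ \cite{KKS}) comes from: the factor $2$ is gained by using fractional weight $\frac12$ on the replacement cycles, which is enough to carry the single unit of connectivity that is all a Steiner point can be asked to provide (requirements are $\le2$ and Steiner points lie in $Q$), combined with the at-most-$\Delta$ bound on each neighbour set.
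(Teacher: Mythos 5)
Your overall strategy coincides with the paper's: pass to an inclusion\-/minimal solution, bound the degree of every Steiner point by $\Delta$, replace each Steiner component by a cycle of capacity $\frac12$, and match with the $\Delta$\-/points\-/in\-/a\-/unit\-/ball example; the algorithmic part via \cite{FJW} is also as in the paper. But your feasibility argument contains a genuine error. You claim that because the two $uv$\-/paths $P_1,P_2$ of a demand with $r_{uv}=2$ are $Q$\-/disjoint, they ``use disjoint components'' of $G\setminus R$. This is false: $Q$\-/disjointness only forbids sharing a \emph{node} of $Q$, so $P_1$ and $P_2$ may pass through two different Steiner nodes of the \emph{same} component $C$, and then the single capacity\-/$\frac12$ cycle replacing $C$ contributes only $2\cdot\frac12=1$ to the separating cut, not $2$. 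Ruling this out is precisely the content of the paper's Theorem~\ref{t:C} and Lemma~\ref{l:replace-cycle}: in a \emph{minimally} $(r,Q)$\-/connected graph with $Q\cup R=V$ and $r_{uv}\le 2$, every cycle contains at least two terminals and every chord path of a cycle contains an internal terminal (Lemma~\ref{l:cycle}, proved by reducing to $2$\-/connected blocks via Lemma~\ref{l:block} and invoking Calinescu's ear\-/decomposition lemma~\ref{l:calinescu-cycle}). This one structural fact is what shows both that each component of $G\setminus R$ is a tree met at most once by each terminal (so the DFS/Euler\-/tour accounting of Lemma~\ref{l:KKS} gives cost $\sum_{v\in C}\deg(v)\le\Delta|C|$ per component --- note your per\-/edge claim $\hat d\le 2$ is also false for a traversal that must cross a long Steiner path, though the aggregate degree\-/sum bound survives) and that no demand pair can be separated by removing two elements of a single component, which is exactly why capacity $\frac12$ suffices. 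Without minimality and this chord\-/path lemma the construction simply is not feasible.

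The second gap is the one you flagged yourself: the bound $\deg(v)\le\Delta$ for Steiner points. The exchange ``replace $sa$ by $ab$ when $d(a,b)<d(a,s)$'' must preserve $(r,Q)$\-/connectivity even though $s$ and possibly $a,b$ lie in $Q$; this is the paper's Theorem~\ref{t:D}, proved by reducing to the blocks (where \cite{Calinescu} applies directly) and then handling cut\-/nodes of $S$ via the biset case analysis of Lemma~\ref{l:replace}. This is not a routine verification, and your preprocessing step of suppressing degree\-/$2$ Steiner points is neither needed nor helpful once the tree structure of the components is established. In short, the skeleton is the paper's, the tight example and the reduction to the cut\-/LP are fine, but the two structural theorems that carry all the weight are missing, and the shortcut you substitute for one of them (disjoint paths use disjoint components) is incorrect as stated.
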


Theorems \ref{t:main1} and \ref{t:main2} are proved in 
Sections \ref{s:main1} and \ref{s:main2}, respectively.

\section{Proof of Theorem~\ref{t:main1}} \label{s:main1}

We consider a generic problem defined in \cite{NY}, 
that includes both {\sf ST-MSP} and the classic {\sf Steiner Tree} problem.

\begin{center} 
\fbox{
\begin{minipage}{0.960 \textwidth}
\noindent
{\sf Generalized Steiner Tree}  \\
{\em Instance:} 
\  A (possibly infinite) graph $G=(V,E)$, a finite set $R \subseteq V$ of terminals, and 
a monotone subadditive cost function $c$ on 
subgraphs of $G$. \\
{\em Objective:} Find a minimum-cost connected finite subtree $T$ of $G$ containing~$R$.
\end{minipage}
}
\end{center}

Instead of considering optimal connections only between pairs of terminals, 
we consider optimal connections of terminal subsets of size at most $k$.

\begin{definition} \label{d:Q}
For an instance of {\sf Generalized Steiner Tree} and an integer $k$, $2 \leq k \leq |R|$,
the hypergraph ${\cal H}_k=(R,{\cal E}_k)$ has node set $R$ and hyperedge set 
${\cal E}_k=\{A \subseteq R:2 \leq |A| \leq k\}$.
The cost $c^*(A)$ of $A \in {\cal E}_k$ is the cost of an optimal solution $T_A$ 
to the {\sf Generalized Steiner Tree} instance with terminal set $A$.
\end{definition}

Given a hypergraph ${\cal H}$ with hyperedge costs, 
the {\sf Minimum Connected Spanning Sub-hypergraph} problem seeks a 
minimum cost subset of hyperedges that connects any two nodes.
The construction in Definition~\ref{d:Q} converts the {\sf Generalized Steiner Tree}
problem into the {\sf Minimum Connected Spanning Sub-hypergraph} problem 
in a hypergraph ${\cal H}_k$ of rank $k$.
Any solution of cost $C$ to this problem 
correspond to a solution of value at most $C$ to {\sf Generalized Steiner Tree},
by the aubadditivity and monotonicity of the cost function in the {\sf Generalized Steiner Tree} problem.
The inverse is not true in general, and this reduction 
invokes a fee in the approximation ratio, given in the following definition. 

\begin{definition} \label{d:steiner-ratio}
Given an instance $I$ of {\sf Generalized Steiner Tree} let $\tau_k(I)$ denote the minimum cost
of a connected spanning sub-hypergraph of ${\cal H}_k$.
The {\em $k$-ratio} for a class ${\cal I}$ of {\sf Generalized Steiner Tree} instances is defined by 
$\alpha_k=\sup_{I \in {\cal I}} \frac{\tau_k(I)}{{\sf opt}(I)}$.
\end{definition}

Note that for ${\cal I}$ being the class of {\sf ST-MSP} instances, 
$\alpha_2$ is the parameter $\alpha$ defined in the introduction, and that by \cite{MAN-ZEL} we have
$\alpha_2 = \alpha = \Delta-1$. 
We have $\alpha_k=1$ for instances with $|R|=k$, and in general $\alpha_k$ is monotone decreasing and 
approaching $1$ when $k$ becomes larger. 

In Section~\ref{s:hypergraph} we prove the following statement, which is of independent interest,
and may find applications in other network design problems.

\begin{theorem} \label{t:hypergraph} 
There exists polynomial time algorithm that given 
a hypergraph ${\cal H}=(R,{\cal E})$ with hyper-edge cost $\{c(A):A \in {\cal E}\}$ and 
a spanning tree $T^*$ of (edges of size $2$ of) ${\cal H}$ 
computes a spanning connected sub-hypergraph ${\cal T}$ of ${\cal H}$ of cost at most 
$\tau \left(1+ \ln \frac{c(T^*)}{\tau} \right)$,
where $\tau$ is the minimum-cost of a connected spanning sub-hypergraph of ${\cal H}$.
\end{theorem}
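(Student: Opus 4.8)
The plan is to run a greedy algorithm in the style of Zelikovsky's relative-greedy (or equivalently, a set-cover-like procedure) on top of the starting spanning tree $T^*$. Throughout we maintain a connected spanning sub-hypergraph ${\cal T}$ of ${\cal H}$, initialized to $T^*$. At each step we look for a hyperedge $A \in {\cal E}$ whose addition to ${\cal T}$ allows us to drop a set of edges $F$ from the current ${\cal T}$ while keeping ${\cal T}$ spanning and connected; the "gain" of such a move is $c(F) - c(A)$, i.e. the reduction in the cost of ${\cal T}$. We repeatedly perform the move maximizing the ratio of cost-decrease to some potential, and stop when no move has positive gain. The output is the final ${\cal T}$.

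First I would set up the potential function that the greedy step optimizes. Let $\tau$ be the cost of an optimal connected spanning sub-hypergraph ${\cal T}_{\mathrm{opt}}$, and fix such a ${\cal T}_{\mathrm{opt}}$ for the analysis. The standard argument: at any intermediate stage where the current solution has cost $c({\cal T})$, the optimal hyperedges of ${\cal T}_{\mathrm{opt}}$ can be "inserted" into ${\cal T}$ one by one, each contracting some sub-forest of the current tree, and by an averaging/exchange argument at least one of them achieves gain-to-cost ratio at least $(c({\cal T}) - \tau)/\tau$ — because the hyperedges of ${\cal T}_{\mathrm{opt}}$ together can reduce the current cost all the way down to at most $\tau$, while their total cost is $\tau$. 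Hence the greedy step, which picks the best available move, reduces the excess cost $c({\cal T}) - \tau$ by at least a $(1 - c(A)/\cdots)$-type multiplicative factor; more precisely, writing $\Phi_i = c({\cal T}_i) - \tau$ for the excess after $i$ steps, one gets a recurrence of the form $\Phi_{i+1} \le \Phi_i - \frac{\text{(cost spent at step } i)}{\tau}\,\Phi_i$, or in continuous form $\frac{d\Phi}{d(\text{cost})} \le -\Phi/\tau$. Integrating from the initial excess $c(T^*) - \tau$ down to the point where $\Phi$ reaches $0$ (or where greedy stops) shows the total cost spent beyond $\tau$ is at most $\tau \ln\frac{c(T^*)-\tau}{0^+}$ — which is handled by stopping the continuous analysis at excess $\tau$ and bounding the tail crudely, yielding total cost at most $\tau\bigl(1 + \ln\frac{c(T^*)}{\tau}\bigr)$. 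This is the classical "$1 + \ln$" bound from greedy covering; the "$+1$" absorbs the last unit of excess that cannot be charged logarithmically.

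The step I expect to be the main obstacle is making the exchange argument fully rigorous in the hypergraph setting: given the current connected spanning sub-hypergraph ${\cal T}$ and the optimal one ${\cal T}_{\mathrm{opt}}$, I must show that processing the hyperedges of ${\cal T}_{\mathrm{opt}}$ in some order, each one contracts a set of components of ${\cal T}$ in a way that lets us delete edges of total cost equal to the number of components merged (so that the bookkeeping "total deletable cost $\ge c({\cal T}) - \tau$" goes through), and that these deletions can be realized by actual available greedy moves of the form (add one hyperedge, delete a forest). Here one uses that ${\cal T}$ restricted to size-$2$ edges contains a spanning tree, so after contracting along hyperedges of ${\cal T}_{\mathrm{opt}}$ we can always find tree edges to remove; this is why the hypothesis supplies $T^*$. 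I would phrase this as a matroid-exchange / contraction lemma on the graphic matroid of the size-$2$ edges, carefully tracking that each greedy move keeps ${\cal T}$ both spanning and connected and that its cost-decrease is exactly (tree edges removed) $-$ (hyperedge added). Once that lemma is in place, the numeric part is the routine $\sum \frac{x_i}{\tau}\prod(1-\cdot) $ telescoping/integration, and the polynomial running time follows since each step decreases $c({\cal T})$ by at least $1$ (costs being integers, after a standard preprocessing) and each step is a polynomial-time search over hyperedges and spanning-forest computations.
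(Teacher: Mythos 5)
Your proposal is essentially the paper's proof: the same relative-greedy / local-replacement algorithm seeded with $T^*$, the same averaging exchange argument against an optimal connected spanning sub-hypergraph, and the same $1+\ln$ recursion. Two remarks. First, the exchange step you flag as the main obstacle is exactly the paper's Lemma~\ref{l:sum-bridges}: for the current tree $T=(R,F)$ and any connected spanning hypergraph $(R,{\cal E}')$ one has $\sum_{A\in{\cal E}'}c(F(A))\ge c(F)$, where $F(A)$ is a maximum-cost edge set whose removal, together with shrinking $A$, leaves a tree. The paper proves this by replacing each edge of $F(A)$ with an edge inside $A$ and invoking the cycle property of minimum spanning trees; since $A$ is shrunk after each greedy step, all deletions are always tree edges of the contracted tree and one never needs to delete previously added hyperedges, so the full matroid-exchange machinery you anticipate is not required. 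Second, there is a bookkeeping slip: you run the recurrence on the total excess $\Phi_i=c({\cal T}_i)-\tau$, but the exchange argument only guarantees a gain of at least $s_i\bigl(c(F_i)-\tau\bigr)/\tau$ at step $i$, where $c(F_i)\le c({\cal T}_i)$ is the cost of the \emph{surviving tree edges only}; hence $\Phi_{i+1}\le\Phi_i(1-s_i/\tau)$ does not follow as stated. The standard repair --- and what the paper does --- is to run the recursion on $f_i=c(F_i)$, obtaining $f_i\le f_{i-1}-s_i\max\{f_{i-1}/\tau,1\}$, and then bound the output cost by $f_q+\sum_{i=1}^q s_i\le\tau\bigl(1+\ln(f_0/\tau)\bigr)$ with $f_0=c(T^*)$. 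With that substitution your argument coincides with the paper's.
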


\begin{corollary} \label{c:GST}
For any constant $k$, {\sf Generalized Steiner Tree} admits an approximation ratio
$\alpha_k \left(1+ \ln \alpha_2\right)$, provided that 
for any $A \in {\cal E}_k$, the instance with the terminal set $A$ can be solved in polynomial time.
\end{corollary}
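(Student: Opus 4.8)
The plan is to derive Corollary~\ref{c:GST} by combining Theorem~\ref{t:hypergraph} with the reduction through the hypergraph ${\cal H}_k$ from Definition~\ref{d:Q}. First I would set up the hypergraph instance: given a {\sf Generalized Steiner Tree} instance $I$, form ${\cal H}_k=(R,{\cal E}_k)$, where each hyperedge cost $c^*(A)$ is obtained by solving the {\sf Generalized Steiner Tree} instance restricted to terminal set $A$; by the hypothesis that instances with at most $k$ terminals are solvable in polynomial time and $k$ is constant, all $|{\cal E}_k|=O(|R|^k)$ hyperedge costs are computed in polynomial time. Then $\tau=\tau_k(I)$ is the minimum cost of a connected spanning sub-hypergraph of ${\cal H}_k$, and by Definition~\ref{d:steiner-ratio} we have $\tau_k(I)\le \alpha_k\cdot{\sf opt}(I)$.

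Next I would supply the spanning tree $T^*$ required as input to Theorem~\ref{t:hypergraph}. The natural choice is a minimum-cost spanning tree on $R$ using only the size-$2$ hyperedges of ${\cal H}_k$, i.e.\ the graph ${\cal H}_2$; this can be computed in polynomial time by any {\sf MST} algorithm. The key observation is that this $T^*$ is itself a connected spanning sub-hypergraph of ${\cal H}_2$, so $c(T^*)=\tau_2(I)$, and hence $c(T^*)\le\alpha_2\cdot{\sf opt}(I)$ by definition of $\alpha_2$. Running the algorithm of Theorem~\ref{t:hypergraph} on ${\cal H}_k$ with this $T^*$ yields a connected spanning sub-hypergraph ${\cal T}$ of cost at most $\tau\left(1+\ln\frac{c(T^*)}{\tau}\right)$.

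It remains to bound this quantity by $\alpha_k(1+\ln\alpha_2)\cdot{\sf opt}(I)$ and convert ${\cal T}$ back to a {\sf Generalized Steiner Tree} solution. For the bound, I would use that $x\mapsto x\ln(c/x)$ is monotone increasing for $x\le c/e$ combined with $\tau\le c(T^*)$; more directly, since $\tau\le\alpha_k\cdot{\sf opt}$ and $c(T^*)/\tau\le c(T^*)/{\sf opt}\le\alpha_2$ (using ${\sf opt}\le\tau$ since the tree-solution is a feasible sub-hypergraph, actually ${\sf opt}\le\tau_k\le\tau_2$ so ${\sf opt}\le\tau$ and also ${\sf opt}\le c(T^*)$), we get $\tau\left(1+\ln\frac{c(T^*)}{\tau}\right)\le\alpha_k{\sf opt}\left(1+\ln\frac{c(T^*)}{{\sf opt}}\right)\le\alpha_k{\sf opt}(1+\ln\alpha_2)$, where the first inequality uses monotonicity of $t\mapsto t(1+\ln(c/t))$ on the relevant range. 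Finally, by the subadditivity and monotonicity of the cost function $c$ (as noted after Definition~\ref{d:Q}), the sub-hypergraph ${\cal T}$ corresponds to an actual connected subtree of $G$ containing $R$ of cost at most the cost of ${\cal T}$, i.e.\ at most $\alpha_k(1+\ln\alpha_2)\cdot{\sf opt}(I)$, giving the claimed ratio.

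The main obstacle I anticipate is the monotonicity step: $t\mapsto t(1+\ln(c/t))$ is increasing only for $t\le c$, so I must verify $\tau\le c(T^*)$, which holds because $T^*$ is one particular connected spanning sub-hypergraph of ${\cal H}_k$ (every size-$2$ edge of ${\cal H}_2$ lies in ${\cal E}_k$), hence $\tau=\tau_k\le c(T^*)=\tau_2$. A secondary point worth stating carefully is that the approximation ratio is measured against ${\sf opt}(I)$, not against $\tau$, so I need the chain ${\sf opt}(I)\le\tau$ which again follows because any connected spanning sub-hypergraph of ${\cal H}_k$ yields a feasible {\sf Generalized Steiner Tree} solution of no larger cost. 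Everything else is bookkeeping of polynomial running time, which is immediate since $k$ is constant.
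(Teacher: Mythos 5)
Your proposal is correct and follows essentially the same route as the paper's proof: build ${\cal H}_k$ with costs $c^*(A)$, feed a minimum spanning tree of ${\cal H}_2$ (with $c(T^*)\le\alpha_2\,{\sf opt}$) into Theorem~\ref{t:hypergraph}, bound $\tau\left(1+\ln\frac{c(T^*)}{\tau}\right)$ using ${\sf opt}\le\tau\le\alpha_k\,{\sf opt}$, and convert ${\cal T}$ back via monotonicity and subadditivity of $c$. Your explicit verification that $\tau\le c(T^*)$ (so the logarithmic factor is nonnegative) is a point the paper leaves implicit, but otherwise the two arguments coincide.
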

\begin{proof}
By the assumptions, the hypergraph ${\cal H}_k$,
and the costs $c^*(A)$ with the corresponding trees $T_A$ for $A \in {\cal E}_k$, 
can be computed in polynomial time. 
We can also compute in polynomial time an optimal spanning tree $T^*$ in ${\cal H}_2$; 
note that $c(T^*) \leq \alpha_2 \cdot {\sf opt}$.
Then we apply the algorithm in Theorem~\ref{t:hypergraph} to compute 
a sub-hypergraph ${\cal T}$ of ${\cal H}_k$ of $c^*$-cost at most 
$\tau \left(1+ \ln \frac{c(T^*)}{\tau} \right)$,
where $\tau$ is the minimum-cost of a connected spanning sub-hypergraph of ${\cal H}_k$.
Let ${\sf opt}$ denote the optimal solution value for the {\sf Generalized Steiner Tree} instance.
Note that ${\sf opt} \leq \tau \leq \alpha_k {\sf opt}$.
Let $T=\cup_{A \in {\cal T}} T_A$.
Since ${\cal T}$ is a connected hypergraph, $T$ is a feasible solution to the 
{\sf Generalized Steiner Tree} instance.
We have
$c(T) \leq \sum_{A \in {\cal T}} c(T_A)=c^*({\cal T})$,
by the monotonicity and the subadditivity of the $c$-costs. Thus we have:
$$
c(T) \leq c^*({\cal T}) \leq \tau \left(1+ \ln \frac{c(T^*)}{\tau} \right)=
\tau \left(1+ \ln \frac{c(T^*)/{\sf opt}}{\tau/{\sf opt}} \right) \leq
\alpha_k {\sf opt} \left(1+ \ln \alpha_2\right) \ .
$$
\qed
\end{proof}

Du and Zhang \cite{DuZhang} showed that for the classic {\sf Steiner Tree} problem, 
$\alpha_k \leq 1 + 1/ \lfloor \lg k \rfloor$, where $\lg k =\log_2 k$ denotes logarithm base $2$.
In Section~\ref{s:decomposition} 
we prove the following.

\begin{theorem}\label{t:decomposition}
For {\sf ST-MSP}, $\alpha_k \leq 1+\frac{2}{\lfloor\lg \lfloor k/(\Delta-1)\rfloor\rfloor} $ 
for any integer $k \geq 2\Delta-2$.
\end{theorem}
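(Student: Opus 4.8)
The plan is to bound $\alpha_k$ for {\sf ST-MSP} by reducing it to the corresponding bound for the classic {\sf Steiner Tree} problem and exploiting the result of Du and Zhang quoted above. The key structural observation is that a {\sf ST-MSP} instance on terminal set $R$ in a normed space, where the relevant cost of a terminal subset $A$ is the minimum number of Steiner points needed to connect $A$, is ``close'' to a classic {\sf Steiner Tree} instance: namely, by \cite{MAN-ZEL}, for any terminal set $A$ the optimal {\sf ST-MSP} solution has size at most $(\Delta-1)$ times the optimal hyperedge cost $\tau_2$ of the {\sf MST}-type connection (and at least that cost, up to the $\hat d$ normalization). First I would make this precise by comparing, for each fixed subset $A$, the cost $c^*(A)$ used in ${\cal H}_k$ for {\sf ST-MSP} with the optimal {\sf Steiner Tree} cost on $A$ in the cost model of Definition~\ref{d:K}; the factor-$(\Delta-1)$ slack of \cite{MAN-ZEL} is exactly what lets an optimal {\sf Steiner Tree} hyperedge of rank $k$ be ``simulated'' by a {\sf ST-MSP} hyperedge of rank roughly $k/(\Delta-1)$.

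Concretely, I would argue as follows. Fix the optimal {\sf ST-MSP} solution, realize it as a bead solution, i.e.\ as a Steiner tree $T$ in the graph $K_R$ of Definition~\ref{d:K} of cost ${\sf opt}$. Apply the Du--Zhang decomposition to $T$: for the parameter $k' = \lfloor k/(\Delta-1)\rfloor$ there is a connected spanning sub-hypergraph of the rank-$k'$ Steiner-hypergraph of $K_R$ of cost at most $\left(1 + 1/\lfloor \lg k'\rfloor\right){\sf opt}$. Now each hyperedge $A$ of that sub-hypergraph is a terminal set of size at most $k'$, and the corresponding {\sf Steiner Tree} on $A$ in $K_R$ costs $c_{K_R}(A)$; by the \cite{MAN-ZEL} bound (applied to the sub-instance on $A$), an optimal {\sf ST-MSP} connection of $A$ uses at most $(\Delta-1)\,c_{K_R}(A)$ Steiner points — but more importantly, the bead version of that {\sf ST-MSP} connection can be taken as a Steiner tree with at most $(\Delta-1)$ times as many ``internal'' vertices, hence as a hyperedge of the rank-$k$ hypergraph ${\cal H}_k$ for {\sf ST-MSP} whose $c^*$-cost is at most $c_{K_R}(A)$ itself (since a bead realization turns $j$ Steiner points into cost $j$, and $(\Delta-1)\cdot$something of size $\le k'$ has size $\le k$). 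Summing over the hyperedges $A$ and using subadditivity gives a connected spanning sub-hypergraph of ${\cal H}_k$ of total $c^*$-cost at most $\left(1+1/\lfloor\lg k'\rfloor\right){\sf opt}$, whence $\alpha_k \le 1 + 1/\lfloor \lg \lfloor k/(\Delta-1)\rfloor\rfloor$. The extra factor of $2$ in the stated bound (versus $1$) I expect comes from the rounding $\hat d_{uv} = \max\{\lceil d(u,v)\rceil - 1, 0\}$ in Definition~\ref{d:K}: converting between a geometric {\sf ST-MSP} solution on $A$ and a bead/Steiner-tree solution on $A$ loses a small additive amount per hyperedge, which across the decomposition is absorbed into doubling the $1/\lfloor\lg k'\rfloor$ term; I would track this loss carefully when identifying hyperedge $A$ of ${\cal H}_k$ with a subtree of a bead realization.

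The main obstacle I anticipate is exactly this last bookkeeping step: making the {\sf MZ}-style simulation of a rank-$k'$ {\sf Steiner Tree} hyperedge by a rank-$k$ {\sf ST-MSP} hyperedge quantitatively tight enough to yield the constant $2$ rather than something larger. The \cite{MAN-ZEL} result says $\alpha_2 = \Delta-1$ for {\sf ST-MSP}, i.e.\ ${\sf opt}$ on any terminal set is at most $(\Delta-1)$ times the {\sf MST} cost in $K_R$ — but here I need a ``local'' version stating that a {\em subtree} of the fixed optimal bead realization spanning a terminal subset $A$ of size $\le k'$ can itself be replaced by (or already is) a {\sf ST-MSP} connection of $A$ realizable as a bead solution whose cost is what it is, with $|A|\le k'$ forcing the bead size $\le k'-1 \le k/(\Delta-1)$ — wait, that is automatically $\le k$, so the rank constraint is not even the binding issue; the binding issue is that the Du--Zhang decomposition is stated for the {\em classic} Steiner tree hypergraph and I must verify it applies verbatim to the (multigraph, zero-cost-edge) instance $K_R$, and that the resulting hyperedge trees, when re-expanded into {\sf ST-MSP} bead solutions, still fit inside rank $k$. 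I would handle this by noting $K_R$ is itself a metric-like instance (the $\hat d$ costs satisfy a relaxed triangle inequality), so Du--Zhang applies with at most the constant-factor loss already accounted for in passing from $1$ to $2$. Modulo this verification, the chain of inequalities above delivers the theorem.
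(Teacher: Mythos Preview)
Your proposal has a genuine gap at its first step. You write ``realize [the optimal {\sf ST-MSP} solution] as a bead solution, i.e.\ as a Steiner tree $T$ in the graph $K_R$ \ldots\ of cost ${\sf opt}$'', but $K_R$ has node set $R$ only, so any spanning tree there \emph{is} a bead solution, and the cheapest one costs $\tau_2$, which can be as large as $(\Delta-1)\,{\sf opt}$. Applying Du--Zhang to such a tree bounds everything relative to $\tau_2$, not ${\sf opt}$; the per-hyperedge inequality $c^*(A)\le c_{K_R}(A)$ you then use is the trivial direction (not the content of \cite{MAN-ZEL}) and does nothing to close that gap. Carried through, your chain yields at best $\tau_k\le(1+1/\lfloor\lg k'\rfloor)\,\tau_2$, i.e.\ $\alpha_k\le(1+1/\lfloor\lg k'\rfloor)(\Delta-1)$, far from the claim. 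Your guesses for where the $(\Delta-1)$ inside $k/(\Delta-1)$ and the factor $2$ come from are both off target.

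The paper never passes through $K_R$. It works directly with an optimal tree $T$ on $R\cup S$, which has maximum degree $\le\Delta$ and exactly $|S|={\sf opt}$ non-terminal nodes. After reducing to a single $R$-component (so $R$ is the leaf set of $T$), it strips the leaves to obtain the tree $T'=T\setminus R$ on $S$ with unit edge costs, and declares as ``terminals'' of $T'$ the set $R'=\{v\in S:\Gamma_T(v)\cap R\neq\emptyset\}$. An extension of Du--Zhang is then applied to $T'$ with parameter $p=\lfloor k/(\Delta-1)\rfloor$. The factor $\Delta-1$ is a pure degree count: in a non-star component each $v\in R'$ has a neighbour in $S$, hence at most $\Delta-1$ neighbours in $R$, so a hyperedge $A'\subseteq R'$ of size $\le p$ expands to $A=\bigcup_{v\in A'}(\Gamma_T(v)\cap R)\subseteq R$ of size $\le p(\Delta-1)\le k$. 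The factor $2$ arises because the {\sf ST-MSP} cost satisfies $c^*(A)\le|V_A\cap S|$, a count of Steiner \emph{nodes}, whereas Du--Zhang controls \emph{edges}; a subtree has one more node than edges, so decomposing $T'$ into $|{\cal E}|$ pieces overcounts nodes by $|{\cal E}|-1$. The paper proves the needed strengthening $\sum_{A}c(F_A)+|{\cal E}|-1\le(1+2/\lfloor\lg p\rfloor)c(T')$ by choosing the Du--Zhang proper mapping so that every terminal-connecting path contains an edge of cost $\ge 1$, which forces $|{\cal E}|-1\le c(T')/\lfloor\lg p\rfloor$ and supplies the second $1/\lfloor\lg p\rfloor$.
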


Note that $k \geq \Delta$ is necessary if we want $\alpha_k<2$.
Otherwise, for an instance $I$ of $\Delta$ points on the unit ball 
we have $\frac{\tau(I)}{{\sf opt}(I)}=\frac{k}{\Delta}$,
so $\alpha_k \geq \frac{k}{\Delta}$ if $k \leq \Delta$. 

From Corollary~\ref{c:GST} and Theorem~\ref{t:decomposition} we conclude that for any constant 
$k \geq 2\Delta-2$, it is possible to compute in polynomial time 
a solution to an {\sf ST-MSP} instance of size at most
$\alpha_k  \left(1+ \ln (\Delta-1)\right){\sf opt}$,
where $\alpha_k$ is as in Theorem~\ref{t:decomposition}. 
For the metric space $\mathbb{R}^2$, and given a constant $\epsilon>0$ 
let $k=2^{O(1/\epsilon)}$ with sufficient large constant.
Then by Theorem~\ref{t:decomposition}, 
$\alpha_k \leq 1+\epsilon/\left(1+\ln 4\right)$, and the approximation
ratio of our algorithm is $1+\ln 4 + \epsilon$.
This completes the proof of Theorem~\ref{t:main1}.

\subsection{Proof of Theorem~\ref{t:hypergraph}} \label{s:hypergraph}

For the proof of Theorem~\ref{t:hypergraph} we need the following definition.

\begin{definition}
Given a tree $T=(R,F)$ we say that $A \subseteq R$ {\em overlaps} $F' \subseteq F$ 
if the graph obtained from $T \setminus F'$ by shrinking $A$ into a single node is a tree.
Given edge cost $\{c(e):e \in F\}$ let $F(A)$ be a maximum cost edge set overlapped by $A$. 
\end{definition}

Note that $F \setminus F(A)$ is an edge set of a minimum cost 
spanning tree in the graph obtained from $T$ by shrinking $A$ into a single node;
hence $F(A)$ can be computed in polynomial time.
The following statement appeared in \cite{Z116} (see also \cite{BGRS});
we provide a proof for completeness of exposition.

\begin{lemma}\label{l:sum-bridges}
Let $T=(R,F)$ be a tree with edge costs $\{c(e):e \in F\}$ and let $(R,{\cal E})$ be a connected hypergraph.  
Then $\sum_{A \in {\cal E}} c(F(A)) \geq c(F)$.
Thus there exists $A \in {\cal E}$ such that 
$$\frac{c(F(A))}{c(A)} \geq \frac{c(F)}{c({\cal E})} \ .$$
\end{lemma}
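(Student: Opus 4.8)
The plan is to prove the summation inequality $\sum_{A\in{\cal E}}c(F(A)) \ge c(F)$ by ``slicing'' the edge costs by thresholds, and then to derive the ratio bound by a one-line averaging argument. First I would record the matroidal meaning of $F(A)$: a set $F'\subseteq F$ is overlapped by $A$ exactly when $F\setminus F'$ is the edge set of a spanning tree of $T/A$ (the tree $T$ with $A$ contracted), i.e. exactly when $F'$ is a basis of the cographic matroid $M^*(T/A)$. Hence $F(A)$ is a \emph{maximum-weight} basis of $M^*(T/A)$, and the greedy algorithm for maximum-weight bases gives, for every threshold $t\ge 0$, that $|F(A)\cap F_{\ge t}| = \mathrm{rank}_{M^*(T/A)}(F_{\ge t})$, where $F_{\ge t}=\{e\in F:c(e)\ge t\}$ (the high-cost edges are processed first by greedy, so $F(A)\cap F_{\ge t}$ is a maximal independent subset of $F_{\ge t}$).

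Next I would evaluate this rank. By the rank formula for the cographic (dual graphic) matroid, $\mathrm{rank}_{M^*(T/A)}(F_{\ge t}) = |F_{\ge t}| - \bigl(c\bigl((T/A)\setminus F_{\ge t}\bigr) - 1\bigr)$, where $c(\cdot)$ counts connected components and $T/A$ is connected. Since $(T/A)\setminus F_{\ge t}=(T\setminus F_{\ge t})/A$, and $T\setminus F_{\ge t}$ is a forest with exactly $|F_{\ge t}|+1$ components, contracting $A$ merges precisely the $k_A^{(t)}$ of those components that meet $A$, so $c\bigl((T/A)\setminus F_{\ge t}\bigr) = |F_{\ge t}|+2-k_A^{(t)}$, giving $\mathrm{rank}_{M^*(T/A)}(F_{\ge t}) = k_A^{(t)}-1\ge 0$. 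Now contract each component of $T\setminus F_{\ge t}$ to a single node: the images of the hyperedges of ${\cal E}$ form a connected hypergraph on the $p:=|F_{\ge t}|+1$ resulting nodes, and the elementary fact that a connected hypergraph on $p$ nodes with hyperedges $B_1,\dots,B_m$ satisfies $\sum_{i=1}^m(|B_i|-1)\ge p-1$ (each hyperedge of size $s$ merges at most $s-1$ components, starting from $p$ of them) yields $\sum_{A\in{\cal E}}\bigl(k_A^{(t)}-1\bigr)\ge |F_{\ge t}|$ for every $t$. Integrating over $t\ge 0$ (edge costs being nonnegative),
$$
\sum_{A\in{\cal E}}c(F(A)) = \int_0^\infty \sum_{A\in{\cal E}}|F(A)\cap F_{\ge t}|\,dt = \int_0^\infty \sum_{A\in{\cal E}}\bigl(k_A^{(t)}-1\bigr)\,dt \ge \int_0^\infty |F_{\ge t}|\,dt = c(F)\ .
$$

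Finally, for the ``thus there exists $A$'' part I would clear denominators: the desired conclusion is that $c(F(A))\,c({\cal E}) \ge c(F)\,c(A)$ for some $A\in{\cal E}$, where $c({\cal E})=\sum_{A\in{\cal E}}c(A)$. If the reverse strict inequality held for every $A$, summing and using $\sum_{A}c(A)=c({\cal E})$ would give $c({\cal E})\sum_A c(F(A)) < c(F)\,c({\cal E})$, contradicting $\sum_A c(F(A))\ge c(F)$ when $c({\cal E})>0$ (and the case $c({\cal E})=0$ is trivial). I expect the bulk of the work to lie in the two matroid-flavored steps — identifying $F(A)$ as a maximum-weight cographic basis, and computing $\mathrm{rank}_{M^*(T/A)}(F_{\ge t})=k_A^{(t)}-1$ — whereas the contracted-hypergraph count and the concluding averaging are routine; the payoff of the threshold slicing is that it never requires us to \emph{choose} a good overlapped edge set per hyperedge, which is where a naive direct argument runs into trouble.
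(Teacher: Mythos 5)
Your proof is correct, but it takes a genuinely different route from the paper's. You argue matroid-theoretically: the sets overlapped by $A$ are exactly the bases of the cographic matroid of $T/A$, so $F(A)$ is a maximum-weight basis; the standard threshold characterization of maximum-weight bases (which holds for \emph{every} maximum-weight basis, not just the greedy one, so it does apply to an arbitrary maximizer $F(A)$) together with the layer-cake integration reduces the weighted inequality to the unweighted count $\sum_{A}(k_A^{(t)}-1)\ge |F_{\ge t}|$ at each threshold $t$, which you correctly derive from the connectivity of the quotient hypergraph on the components of $T\setminus F_{\ge t}$. The paper instead argues by exchange: for each $e\in F(A)$ joining components $C_u,C_v$ of $T\setminus F(A)$ with $u,v\in A$, it introduces a replacement edge $y(e)=uv$ of the same cost, observes that $y(e)$ is a heaviest edge on the tree cycle it closes (this is where the maximality of $c(F(A))$, i.e.\ the fact that $F\setminus F(A)$ is a minimum spanning tree of $T/A$, enters), notes that $y({\cal E})$ spans $R$, and invokes the MST cycle property to conclude that $T$ remains a minimum spanning tree of $T\cup y({\cal E})$, whence $c(F)\le c(y({\cal E}))=\sum_A c(F(A))$. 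The paper's argument is shorter and more elementary; yours makes the role of the maximality of $F(A)$ completely transparent and trades the ``heaviest edge on a cycle'' verification for a purely combinatorial component count, at the price of importing the cographic-matroid and greedy machinery. Your concluding averaging step is the same as the paper's implicit one, and the shared, harmless caveat that both arguments tacitly assume nonnegative edge costs is satisfied in every application in the paper.
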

\begin{proof}
For a node $v\in A$, let $C_v$ be the connected component in $T\setminus F(A)$
that contains $v$.
For an edge $e\in F(A)$ that connects two components $C_u,C_v$, let
$y(e)=uv$ be the replacement edge of $e$, of cost $c(y(e))=c(e)$.
The graph $T\cup \{y(e)\}$ contains a single cycle and
$y(e)$ is the heaviest edge in this cycle, since otherwise $F(A)$ is not minimal.
For a hyperedge $A\in {\cal E}$ let $y(A)=\cup_{e\in F(A)}y(e)$ be the replacement 
set of $A$, and let $y({\cal E})=\cup_{A\in {\cal E}} y(A)$.
It is easy to see that $y(A)$ span $A$, and $y({\cal E})$ span $R$.
Consider a MST on $T\cup y({\cal E})$. By the cycle property of a MST,
no edge from $y({\cal E})$ would participate in that MST,
so $c(T)\leq c(y({\cal E}))$. 
Finally, $c(y({\cal E}))=\sum_{A\in {\cal E}}y(A)=\sum_{A\in{\cal E}} c(F(A))$, 
and the lemma follows.
\qed
\end{proof}

\begin{center} 
\fbox{
\begin{minipage}{0.960\textwidth}
\noindent
{\bf Local Replacement Algorithm} \\
{\em Input:} A hypergraph ${\cal H}=(R,{\cal E})$ with hyper-edge cost $\{c(A):A \in {\cal E}\}$, 
and a spanning tree $T^*=(R,F^*)$ of (edges of size $2$ of) ${\cal H}$. \\
{\em Initialization:} ${\cal J} \gets \emptyset$, $F \gets F^*$, $T \gets (R,F)$. \\
{\em While} $c(F)>0$ do: \\
\hphantom{\em While} Find $A \in {\cal E}$ with $\frac{c(F(A))}{c(A)}$ maximum. \\
\hphantom{\em While} - {\em If} \ $c(F(A)) >c(A)$ then do: \\
\hphantom{\em While - {\em If}} - Update $T,{\cal H}$: remove $F(A)$ and shrink $A$ into a single node. \\
\hphantom{\em While - {\em If}} - $F \gets F\setminus F(A)$ and ${\cal J} \gets {\cal J} \cup \{A\}$. \\
\hphantom{\em While} - {\em Else} STOP and {\em Return} ${\cal T}=(R,F \cup {\cal J})$. \\
{\em EndWhile} \\
{\em Return} ${\cal T}=(R,F \cup {\cal J})$.
\end{minipage}
}
\end{center}

At every iteration $|F|$ decreases by at least $1$, hence the algorithm runs in polynomial time,
and clearly it computes a feasible solution. 
We prove the approximation ratio.
Let $F_i$ and ${\cal J}_i$ be the set stored in $F$ and ${\cal J}$, respectively, 
at the beginning of iteration $i+1$,
and let $A_i$ be the hyperedge picked at iteration~$i$.
Denote $f_i=c(F_i)$ and $s_i=c(A_i)$, and recall that $\tau$ denotes 
the minimum cost of a connected spanning sub-hypergraph of ${\cal H}$.
At iteration $i$ we remove $F_{i-1}(A_i)$ from $F_{i-1}$ after verifying that 
$c(F_{i-1}(A_i)) > c(A_i)=s_i$. Hence 
$$f_i \leq f_{i-1}-\max\{c(F_{i-1}(A_i)),c(A_i)\} = f_{i-1}-s_i \cdot \max\left\{\frac{c(F_{i-1}(A_i))}{c(A_i)},1\right\}$$
By Lemma~\ref{l:sum-bridges}, $\frac{c(F_{i-1}(A_i))}{c(A_i)} \geq \frac{f_{i-1}}{\tau}$.
Thus we have
\begin{equation} \label{e:recursion}
f_i \leq f_{i-1}-s_i \cdot \max\{f_{i-1}/\tau,1\} \ .
\end{equation}
The algorithm stops if either $c(F_q)=0$ or $c(F(A))\leq c(A)$ at iteration $q+1$.
In the latter case, $1\geq c(F_q)/\tau$ follows by Lemma~\ref{l:sum-bridges}.
In both cases, we have that there exists an index \ $q$ such that $f_{q-1} > \tau \geq f_q$ holds.
Now we use the following statement from \cite{TAP}.

\begin{lemma} \label{l:TAP}
Let $\tau>0$ and $f_0, \ldots, f_q$ and $s_1, \ldots, s_q$ be sequences of positive 
reals satisfying $f_0 > \tau \geq f_q$, such that (\ref{e:recursion}) holds.
Then $f_q+\sum_{i=1}^q s_i \leq \tau(1+\ln(f_0/\tau))$.
\end{lemma}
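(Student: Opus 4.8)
The plan is to partition the steps into a ``multiplicative'' phase, in which $f_{i-1}>\tau$, and an ``additive'' phase, in which $f_{i-1}\le\tau$, to treat by hand the single step that straddles the two phases, and then to add up the three contributions using only the elementary inequality $\ln x\le x-1$.

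First I would note that the sequence $(f_i)$ is strictly decreasing: since $s_i>0$ and $\max\{f_{i-1}/\tau,1\}\ge 1$, recursion~(\ref{e:recursion}) forces $f_i<f_{i-1}$. Combined with $f_0>\tau\ge f_q$, this lets me fix $m$ to be the smallest index with $f_m\le\tau$; then $1\le m\le q$, we have $f_0,\dots,f_{m-1}>\tau$, and $f_i\le\tau$ for every $i\ge m$.

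In the multiplicative phase, i.e.\ for $1\le i\le m-1$, we have $f_{i-1}>\tau$, so~(\ref{e:recursion}) reads $f_i\le f_{i-1}(1-s_i/\tau)$, whence $s_i\le\tau(1-f_i/f_{i-1})\le\tau\ln(f_{i-1}/f_i)$, using $1-t\le-\ln t$; summing and telescoping gives $\sum_{i=1}^{m-1}s_i\le\tau\ln(f_0/f_{m-1})$. For the transition step $i=m$ we still have $f_{m-1}>\tau$, and the same line gives $s_m\le\tau(1-f_m/f_{m-1})=\tau-\tau f_m/f_{m-1}$; here it is essential to keep the bound in this form rather than passing to $\tau\ln(f_{m-1}/f_m)$, since $f_m$ may lie far below $\tau$. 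Finally, in the additive phase, i.e.\ for $m+1\le i\le q$, we have $f_{i-1}\le\tau$, so~(\ref{e:recursion}) reads $f_i\le f_{i-1}-s_i$, hence $s_i\le f_{i-1}-f_i$ and $\sum_{i=m+1}^q s_i\le f_m-f_q$ by telescoping.

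Adding the three estimates, $f_q+\sum_{i=1}^q s_i\le\tau\ln(f_0/f_{m-1})+\tau+f_m(1-\tau/f_{m-1})$. Since $0<f_m\le\tau$ and $0<1-\tau/f_{m-1}<1$, I may replace $f_m$ by $\tau$ in the last term, which reduces the claim to $\ln(\tau/f_{m-1})+1-\tau/f_{m-1}\le 0$, i.e.\ to $\ln u\le u-1$ with $u=\tau/f_{m-1}\in(0,1)$. The hard part is precisely the transition step: applying the logarithmic bound indiscriminately to all steps with $f_{i-1}>\tau$ would leave an unbounded $\ln(f_{m-1}/f_m)$ term, and the crux of the argument is to absorb that overshoot into the additive constant $1$ via the concavity of $\ln$.
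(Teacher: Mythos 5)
Your proof is correct: the phase split at the first index $m$ with $f_m\le\tau$, the bound $s_i\le\tau(1-f_i/f_{i-1})\le\tau\ln(f_{i-1}/f_i)$ in the multiplicative phase, the telescoping $s_i\le f_{i-1}-f_i$ in the additive phase, and the absorption of the transition term via $\ln u\le u-1$ all check out, and you rightly identify the straddling step as the only delicate point. Note that the paper itself imports this lemma from \cite{TAP} without proof, so there is no internal argument to compare against; your proof is the standard one for such greedy recursions and fills that gap self-containedly.
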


Let $q$ be an index such that $f_{q-1} > \tau \geq f_q$ holds.
We may assume that $f_0=c(F^*) > \tau>0$.
Note that  $c({\cal J}_q) = \sum_{i=1}^q s_i$ and that $c(F_i)+c({\cal J}_i) \leq c(F_{i-1})+c({\cal J}_{i-1})$
for any $i$. Hence from Lemma \ref{l:TAP} we conclude that 
$$
c({\cal T}) \leq c(F_q)+c({\cal J}_q) = f_q+\sum_{i=1}^q s_i \leq \tau(1+\ln(f_0/\tau))=
\tau \left(1+ \ln \frac{c(T^*)}{\tau} \right) \ .
$$

This finishes the proof of Theorem~\ref{t:hypergraph}.

\section{Proof of Theorem~\ref{t:main2}} \label{s:main2}

To illustrate our idea, we first prove Theorem~\ref{t:main2} for a particular simple case -- 
the {\sf Steiner Forest with Minimum Number of Steiner Points} ({\sf SF-MSP}) problem, when $r_{uv} \in \{0,1\}$.

\begin{definition}
For a subset $C$ of nodes of a graph $G=(V,E)$ let us use the following notation:
$\Gamma_G(C)$ is the set of neighbors of $C$ in $G$;
$\delta_G(C)=\delta_E(C)$ is the set of edges in $E$ with exactly one endnode in $C$;
$E(C)$ is the set of edges in $E$ with both endnodes in $C$.
Given $R \subseteq V$, 
an {\em $R$-component} of $G$ is a subgraph of $G$ with node set $C \cup \Gamma_G(C)$
and edge set $E(C) \cup \delta_G(C)$, where $C$ is a connected component of $G \setminus R$.
\end{definition}

The cut-LP relaxation for {\sf Steiner Forest} is:
\[ \displaystyle
\begin{array} {llll} 
& \tau^* = & \min          & \ \ \displaystyle \sum_{e \in E} c_e x_e   \\
&          & \ \mbox{s.t.} & \displaystyle \sum_{e \in \delta_E(Y)} x_e \geq f(Y) 
                                     \ \ \ \ \ \ \forall \ \emptyset \neq Y \subset V    \\
&          &               & \ \ \ 0 \leq x_e \leq 1 \ \ \ \ \ \ \ \ \ \ \ \ \ \forall e \in E
\end{array}
\]
where $f(Y)=1$ if there are $u,v \in V$ with $r_{uv}=1$ and $|\{u,v\} \cap Y|=1$,
and $f(Y)=1$ otherwise. 

Robins and Salowe \cite{ROB-SAL} proved that if $V$ is a set of ponts in a metric space,
then there exists a tree $T=(V,E)$ of minimum total length $\sum_{uv \in E} d(u,v)$
that has maximum degree $\leq \Delta$.
Since any inclusion-minimal solution to a {\sf Steiner Forest} instance is a forest,
this implies the following. 

\begin{lemma} \label{l:delta}
For any instance of {\sf SF-MSP} there exists an optimal solution $S,G$
such that $G$ has maximum degree $\Delta$. \qed
\end{lemma}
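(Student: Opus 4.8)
The plan is to deduce Lemma~\ref{l:delta} directly from the Robins--Salowe theorem cited just above it. First I would take an arbitrary optimal solution $(S,G)$ to the given {\sf SF-MSP} instance, where $G$ is the unit-disk graph of $R \cup S$ (or rather the relevant subgraph witnessing feasibility). Since we are in the $r_{uv}\in\{0,1\}$ regime, any inclusion-minimal feasible subgraph is a forest: if it contained a cycle we could delete an edge of that cycle without destroying any required $uv$-connection, because a single path suffices for each demand pair and an edge on a cycle is never a bridge. So without loss of generality $G$ is a forest on vertex set $R\cup S$, with $|S|$ Steiner points, and each tree of $G$ is the minimal connector of the terminals it spans.

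Next I would argue that we may reshape each tree of $G$ so that its maximum degree is at most $\Delta$, without increasing $|S|$ and without losing feasibility. The key point is that the Steiner points in $S$ are, by the bead-solution intuition, just points of the metric space $M$, and what matters for feasibility is only that within each connected component all the prescribed terminals end up connected via edges of length $\le 1$. Apply the Robins--Salowe result to the point set $V_j := R_j \cup S_j$ of the $j$-th tree (here $R_j$ are its terminals and $S_j$ its Steiner points): there is a tree $T_j$ on $V_j$ of minimum total Euclidean/metric length with maximum degree $\le \Delta$. Because the original tree had all edges of length $\le 1$ and was itself a spanning tree of $V_j$, the minimum-length spanning tree of $V_j$ — and hence the minimum-degree one $T_j$ produced by Robins--Salowe, which has the same (minimum) total length — also has all edges of length $\le 1$: indeed the minimum spanning tree of a point set, which is the unique-by-length-class bottleneck-optimal tree, has bottleneck edge length no larger than that of any spanning tree, so in particular $\le 1$. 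Replacing $G$'s component on $V_j$ by $T_j$ thus keeps all edges legal in the unit-disk graph and keeps all of $R_j$ connected, so feasibility is preserved; doing this for every component yields $G'$ with the same Steiner point set $S$ and $\Delta(G')\le\Delta$.

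Finally I would record that $|S|$ is unchanged throughout, so $(S,G')$ is still optimal, and $G'$ has maximum degree $\Delta$, which is exactly the claim. The main obstacle — and the one subtle point worth spelling out carefully rather than waving at — is the edge-length bound: one must be sure that passing to a bounded-degree minimum-length tree on $V_j$ does not introduce an edge longer than $1$. This follows from the standard fact that any minimum spanning tree minimizes the bottleneck (maximum) edge length over all spanning trees, so since a spanning tree of $V_j$ with bottleneck $\le 1$ exists (the original $G$-component), every MST of $V_j$ has bottleneck $\le 1$, and the Robins--Salowe tree, being an MST, inherits this. Everything else (minimality forcing a forest, monotonicity of $|S|$) is routine, so the lemma follows.
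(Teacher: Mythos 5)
Your proof is correct and follows essentially the same route as the paper, which derives the lemma directly from the Robins--Salowe minimum-length bounded-degree spanning tree theorem together with the observation that an inclusion-minimal solution to a $\{0,1\}$-requirement instance is a forest. The one detail you spell out that the paper leaves implicit --- that the bounded-degree minimum-length tree on each component's point set still has all edges of length at most $1$, because every minimum spanning tree is also bottleneck-optimal --- is exactly the right justification for why the replacement stays inside the unit-disk graph.
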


The following statement was first observed in \cite{KKS}.

\begin{lemma} \label{l:KKS} 
Let $R$ be a set of terminals and $S$ a set of points in a normed space such that 
the unit-disc graph of $R \cup S$ contains a tree $T$ with leaf set $R$.
Let $S'$ be obtained from $S$ by replacing each $v \in S$ by $\deg_T(v)$ copies of $v$.
Then the unit disc graph of $R \cup S'$ contains a simple cycle on $R \cup S'$.
\end{lemma}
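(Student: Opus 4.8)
The plan is to turn the tree $T$ into a single cycle by a standard "doubling / Euler tour" argument, but carried out so that each Steiner vertex is visited exactly as many times as its degree in $T$, while each terminal is visited exactly once.

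\smallskip

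\textbf{Approach.} First I would take the tree $T$ in the unit-disc graph of $R\cup S$, and \emph{double} every edge of $T$ to obtain a connected multigraph $D$ in which every node has even degree, so $D$ is Eulerian. An Euler tour of $D$ traverses every edge of $T$ exactly twice and returns to its start; reading off the sequence of vertices it visits gives a closed walk $W=v_0,v_1,\dots,v_m=v_0$ that visits each vertex $v$ exactly $\deg_T(v)$ times. Since consecutive vertices of $W$ are adjacent in $T$, every step of $W$ is an edge of the unit-disc graph, hence has length $\le 1$.

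\smallskip

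\textbf{From the walk to a simple cycle on $R\cup S'$.} Now I create $S'$ by giving each $v\in S$ exactly $\deg_T(v)$ copies (all placed at the same location as $v$), and I want to assign the occurrences of $W$ to distinct physical points so that the resulting sequence is a simple cycle. For a Steiner vertex $v$, its $\deg_T(v)$ occurrences in $W$ are matched bijectively to the $\deg_T(v)$ copies of $v$ in $S'$; since all copies sit at $v$'s location, any edge incident to an occurrence of $v$ still has length $\le 1$. The only remaining issue is a terminal $u\in R$: it is a leaf of $T$, so $\deg_T(u)=1$, hence $u$ occurs exactly once in $W$ — good, terminals are automatically visited once. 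Thus after relabelling, $W$ becomes a closed walk that visits each of the $|R|+|S'|$ points exactly once, i.e.\ a Hamiltonian (simple) cycle on $R\cup S'$, and all its edges lie in the unit-disc graph of $R\cup S'$.

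\smallskip

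\textbf{Main obstacle.} The one genuine point to be careful about is that the Euler-tour relabelling really does produce a \emph{simple} cycle and not merely a closed walk with repeated edges: I must check that distinct occurrences in $W$ of the same Steiner location $v$ get distinct copies (so no vertex of $R\cup S'$ is repeated), and — more subtly — that no \emph{edge} of the cycle is repeated, which could happen if the Euler tour used the two parallel copies of some edge $vw$ of $T$ at two places in $W$ that, after relabelling, happen to connect the same pair of copies. This is avoided by choosing the copy-assignment for each $v$ consistently along the tour (each traversal of $T$-edge $vw$ enters a copy of $v$ and exits a copy of $w$ that have not been used before in that direction), which is possible precisely because the number of copies equals the degree. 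Everything else — that the multigraph is Eulerian, that Euler tours exist, that leaves appear once — is routine, so this bookkeeping is the heart of the argument.
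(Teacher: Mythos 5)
Your proposal is correct and takes essentially the same approach as the paper, whose one-line proof is to traverse $T$ in DFS order (equivalently, an Euler tour of the doubled tree) and use a fresh copy of each $v\in S$ at every visit. Your extra worry about repeated edges is moot: once every vertex of the closed walk is assigned a distinct point of $R\cup S'$, the walk is a Hamiltonian, hence simple, cycle.
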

\begin{proof}
Traverse the tree $T$ in a DFS order; each time a node $v \in S$ is visited, choose a different copy of $v$.
\qed
\end{proof}

Given a tree $T$, we will call a cycle as in the lemma above a {\em DFS cycle} of $T$.

Now we can prove Theorem~\ref{t:main2} for the {\sf SF-MSP} case.
Let $S$ be an inclusion minimal solution to an {\sf SF-MSP} instance.
By Lemma~\ref{l:delta}, the unit-disc graph of $R \cup S$ contains an $r$-connected forest 
$H$ such that $\deg_H(v) \leq \Delta$ for every $v \in S$. 
Every $R$-component $T$ of $H$ (a.k.a. full Steiner component) is a tree with leaf set 
in $R$ and all internal nodes in $S$. It is easy to see that by replacing every 
$R$-component $T$ of $H$ by a DFS cycle of capacity $1/2$ results in a feasible solution to
the cut-LP relaxation, which proves Theorem~\ref{t:main2} for the {\sf SF-MSP} case.

Now we prove Theorem~\ref{t:main2} for {\sf $\{0,1,2\}$-\sf SN-MSP}.
We start by describing the cut-LP relaxation for {\sf SN}.
We need some definitions.

\begin{definition}
An ordered pair $\hat{X}=(X,X^+)$ of subsets of a groundset $V$ is called a {\em biset} if $X \subseteq X^+$; 
$X$ is the {\em inner part} and $X^+$ is the {\em outer part} of $\hat{X}$,
$\Gamma(\hat{X})=X^+ \setminus X$ is the {\em boundary} of $\hat{X}$, 
and $X^*=V \setminus X^*$ is the complementary set of $\hat{X}$. 
An edge $e=uv$ covers a biset $\hat{X}$ if it has one endnode in $X$ and the other in $V \setminus X^+$.
For a biset $\hat{X}$ and an edge-set/graph $J$ let $\delta_J(\hat{X})$ denote the set of edges in 
$J$ covering $\hat{X}$.
\end{definition}

By Menger's Theorem, a graph $G=(V,E)$ is $(r,Q)$-connected if, and only if, 
$|\delta_E(\hat{Y})| \geq f(\hat{Y})$, where $f$ is a biset-function defined by 
$$
f(\hat{Y}) = \left \{ \begin{array}{ll}
\max\limits_{uv \in \delta_{D_r}(\hat{Y})} r_{uv} - |\Gamma(\hat{Y})|  \ \ \  & \mbox{if} \ \ \Gamma(\hat{Y}) \subseteq Q \\
\ \ \ 0                                                                      & \mbox{ otherwise}
\end{array} \right .
$$
The cut-LP relaxation for {\sf SN} is

\[ \displaystyle
\begin{array} {llll} 
& \tau^* = & \min          & \ \ \displaystyle \sum_{e \in E} c_e x_e   \\
&          & \ \mbox{s.t.} & \displaystyle \sum_{e \in \delta_E(\hat{Y})} x_e \geq f(\hat{Y}) 
                                     \ \ \ \ \ \ \forall \mbox{ biset } \hat{Y}    \\
&          &               & \ \ \ 0 \leq x_e \leq 1 \ \ \ \ \ \ \ \ \ \ \ \ \ \forall e \in E
\end{array}
\]

We will say that a graph with edge capacities $x_e$ is {\em fractionally $(r,Q)$-connected} if $x$
is a feasible solution to the above cut-LP relaxation.

To prove Theorem~\ref{t:main2}, we prove in the next sections the following 
two theorems about $\{0,1,2\}$-connected graphs,
that are of independent interest, and may find further applications in low connectivity network design. 
An $r$-connected graph $G$ is {\em minimally $r$-connected} if no proper subgraph of $G$ is $r$-connected.

\begin{theorem} \label{t:C}
Let $G$ be a minimally $(r,Q)$-connected graph such that $Q \cup R=V$ and 
$r_{uv} \in \{0,1,2\}$ for all $u,v \in R$.
Then every $R$-component is a tree.
Furthermore, for any subset ${\cal C}$ of connected components of $G \setminus R$,
replacing for each $C \in {\cal C}$ the corresponding tree by a DFS cycle of capacity $1/2$ 
results in a fractionally $(r,Q)$-connected graph.
\end{theorem}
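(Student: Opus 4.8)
The plan is to prove Theorem~\ref{t:C} in two stages, mirroring the structure of the statement: first the structural claim that every $R$-component of a minimally $(r,Q)$-connected graph with $r_{uv}\in\{0,1,2\}$ and $Q\cup R=V$ is a tree, and then the fractional-feasibility claim about replacing trees by DFS cycles of capacity $1/2$.

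\medskip
\noindent\textbf{Step 1: $R$-components are trees.}
Let $C$ be a connected component of $G\setminus R$ and let $T$ be the corresponding $R$-component, with node set $C\cup\Gamma_G(C)$. First I would argue that every node $v\in C$ is an internal node with $\deg_G(v)\ge 2$: since $C\cup\Gamma_G(C)\subseteq Q\cup R$ and $v\in C\subseteq Q$, any biset $\hat Y$ with inner part $\{v\}$ has boundary $\Gamma_G(v)$, and Menger's inequality forces $|\delta_G(v)|\ge \max r_{uv}-|\Gamma_G(\{v\})|$... but the cleaner route is: if $\deg_G(v)\le 1$ then removing $v$ cannot destroy any $2$-path between terminals (internal vertices of the paths realizing $\lambda^Q$ may be rerouted since $v$ lies on at most... ), contradicting minimality. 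I would instead directly show $T$ is acyclic: suppose $T$ contains a cycle. A cheapest approach is to take an edge $e$ on a cycle of $T$ that is incident to $C$ (such an edge exists because the only way a cycle of $T$ could use a vertex of $\Gamma_G(C)\subseteq R$ is to pass through $C$, since distinct vertices of $\Gamma_G(C)$ are not adjacent within $T$ unless via $C$), and show $G-e$ is still $(r,Q)$-connected: any $uv$-path through $e$ can be rerouted around the cycle, and because the cycle lies inside a single $R$-component all its internal vertices are in $Q$, so disjointness constraints ($E\cup Q\setminus\{u,v\}$-disjointness) are unaffected — two paths that were element-disjoint remain element-disjoint after this local reroute. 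This contradicts minimality, so $T$ is a tree.

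\medskip
\noindent\textbf{Step 2: DFS-cycle replacement preserves fractional connectivity.}
Fix ${\cal C}$ and let $G'$ be obtained by replacing each $T_C$, $C\in{\cal C}$, by a DFS cycle of capacity $1/2$ (using Lemma~\ref{l:KKS}, after splitting each $v\in C$ into $\deg_{T_C}(v)$ copies; the DFS cycle visits exactly $R\cap(C\cup\Gamma_G(C))$-vertices plus the copies), and keep all other edges at capacity $1$. I must verify $\sum_{e\in\delta_{G'}(\hat Y)}x_e\ge f(\hat Y)$ for every biset $\hat Y$. The key observation is the classical one for $2$-connectivity: a cycle crosses every cut an even number of times, so for any biset $\hat Y$ whose boundary is disjoint from the split copies, the DFS cycle of $T_C$ contributes either $0$ or $\ge 2$ half-units to $\delta_{G'}(\hat Y)$, i.e. contributes $0$ or $\ge 1$. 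The plan is to compare, for each biset $\hat Y$ (with $\Gamma(\hat Y)\subseteq Q$, else $f(\hat Y)=0$ and nothing to check), the contribution of each replaced tree $T_C$ to $\delta_G(\hat Y)$ versus the contribution of its DFS cycle to $\delta_{G'}(\hat Y)$: whenever $T_C$ contributes $\ge 1$ edge to the original cut, the cycle contributes $\ge 1$ (two half-edges) to the new cut, provided the boundary $\Gamma(\hat Y)$ does not separate copies of a split vertex in a pathological way — this is where I must be careful, and it is the main obstacle (see below). Edges outside replaced components keep capacity $1$ and contribute identically. Summing, $\sum_{e\in\delta_{G'}(\hat Y)}x_e\ge |\delta_G(\hat Y)|\ge f(\hat Y)$ when $f(\hat Y)\le 2$; for $f(\hat Y)=2$ we need that if $T_C$ contributes exactly $1$ edge to $\delta_G(\hat Y)$, the cycle still contributes at least $1$, and if two distinct trees each contribute $1$, the two cycles together contribute $\ge 2$ — which follows edge-by-edge.

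\medskip
\noindent\textbf{The main obstacle.}
The delicate point is bisets with nonempty boundary $\Gamma(\hat Y)$ and the bookkeeping of split vertices: after splitting $v\in C$ into $\deg_{T_C}(v)$ copies, a biset of $G'$ may place some copies in the inner part, some in the outer part, and some outside; the "even crossing" argument for a single cycle must be applied to the DFS cycle as a cycle in $G'$, and I must ensure that the boundary vertices of $\hat Y$ in $G'$ correspond to boundary vertices in $G$ (or account for the extra copies, each of which only helps since $\Gamma(\hat Y)\subseteq Q$ lets $f$ subtract them but we have more edges too). I expect the cleanest formulation is to work with a biset of $G$ and its canonical lift to $G'$, show the DFS cycle crosses the lifted cut an even number of times (hence $\ge 2$ iff $\ge 1$), and handle the requirement function $f$ via $f_{G'}(\hat Y)\le f_G(\hat Y')+$ (number of split-copies in the boundary), matching the gain in edges. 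Getting this accounting exactly right — so that the $1/2$ capacities combine to meet requirement $2$ across possibly several simultaneously-replaced components — is the crux; the rest is routine.
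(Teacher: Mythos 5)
There are genuine gaps in both steps, and in both cases the missing ingredient is the same structural fact that the paper establishes first (Lemma~\ref{l:cycle}, via Lemma~\ref{l:block} and Calinescu's ear-decomposition result, Lemma~\ref{l:calinescu-cycle}): in a minimally $(r,Q)$-connected graph with $Q\cup R=V$, \emph{every cycle contains at least two terminals, and every chord path of a cycle contains an internal terminal}. In your Step~1 the claim that after rerouting a path around a cycle of $T_C$ ``two paths that were element-disjoint remain element-disjoint'' is unjustified and is exactly the difficulty: the rerouted path now traverses the other arc of the cycle, whose nodes lie in $C\subseteq Q$, and the second path of the pair may pass through one of those nodes, destroying $Q$-disjointness. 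So deleting a cycle edge need not preserve $(r,Q)$-connectivity by a local reroute, and minimality does not immediately yield acyclicity this way. The paper avoids this by working inside a $2$-connected block, showing (Lemma~\ref{l:block}) that no proper $2$-connected subgraph of the block contains its terminals, and then invoking the ear-decomposition argument.

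In Step~2 the even-crossing bound is correct per component (a cycle of capacity $1/2$ contributes $0$ or at least $1$ to any cut), but your summation $\sum_{e\in\delta_{G'}(\hat Y)}x_e\ge |\delta_G(\hat Y)|$ is false: if a single tree $T_C$ contributes \emph{two} elements to a cut with $f(\hat Y)=2$ and nothing else crosses it, the DFS cycle may cross that cut exactly twice and contribute only $2\cdot\frac12=1<2$. This is not a bookkeeping issue about split vertices; it is the one substantive case, and ruling it out requires minimality again. The paper's Lemma~\ref{l:replace-cycle} reduces any fractional deficiency to exactly this situation --- a demand pair $u,v$ with $r_{uv}=2$ separated by two elements $a,b$ of one tree $T_C$ --- and then derives a contradiction by taking two $Q$-disjoint $uv$-paths in $G$, extracting from their union a cycle $L$ through $a$ and $b$, and observing that the $ab$-path in $T_C$ yields an $L$-chord path with no internal terminal, contradicting Lemma~\ref{l:cycle}. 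You correctly identify the crux but defer it; without this argument (or an equivalent one) the feasibility claim for requirement~$2$ does not follow.
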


\begin{theorem} \label{t:D}
Let $R$ be a set of terminals in a normed space, let $B \subseteq R$,
and let $r$ be a $\{0,1,2\}$ requirement function on $R$.
Let $S$ be an inclusion minimal set of points  
such that the unit-disc graph of $R \cup S$ is $(r,B \cup S)$-connected.
Among all $(r,B \cup S)$-connected spanning subgraphs of the unit-disc graph of $R \cup S$, 
let $G=(V,E)$ be one of minimum total length $\sum_{uv \in E} d(u,v)$.
Then $\deg_G(v) \leq \Delta$ for all $v \in S$. 
\end{theorem}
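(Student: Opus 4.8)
\emph{Step 1: reduce to a single tree component.} The first observation is that a minimum-length $(r,B\cup S)$-connected subgraph of the unit-disc graph of $R\cup S$ is, up to zero-length edges between coincident points, inclusion-minimal: deleting a positive-length edge from a non-minimal solution yields a shorter feasible one. Hence $G$ is (after the standard treatment of zero-length edges) minimally $(r,Q)$-connected with $Q=B\cup S$, $Q\cup R=V$, and $r$ a $\{0,1,2\}$-requirement, so Theorem~\ref{t:C} applies and every $R$-component of $G$ is a tree. Any $v\in S$ lies in a unique connected component $C$ of $G\setminus R$; since $C$ is a full component of $G\setminus R$, every $G$-neighbour of $v$ lies in $C\cup\Gamma_G(C)$, so $\deg_G(v)=\deg_T(v)$ where $T$ is the $R$-component tree containing $C$. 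It therefore suffices to bound $\deg_T(v)$.

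\emph{Step 2: manufacture a short-cut edge.} Suppose for contradiction that $d:=\deg_T(v)\ge\Delta+1$, and let $u_1,\dots,u_d$ be the $T$-neighbours of $v$, so $d(v,u_i)\le 1$ for all $i$. Here I would use that we are in a normed space and normalise directions: let $w_i$ be the point at distance $1$ from $v$ on the ray from $v$ through $u_i$. The $w_i$ are $d>\Delta$ points of the unit ball around $v$, so by the definition of $\Delta$ some two of them, say $w_i,w_j$, satisfy $d(w_i,w_j)\le 1$. A short computation with the triangle inequality and the scaling then gives $d(u_i,u_j)\le\max\{d(v,u_i),d(v,u_j)\}\le 1$; say $d(v,u_j)$ is the larger of the two. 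In particular $u_iu_j$ is an edge of the unit-disc graph of $R\cup S$, and the graph $G':=G-vu_j+u_iu_j$ has total length at most that of $G$.

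\emph{Step 3 (the main obstacle): $G'$ is still $(r,B\cup S)$-connected.} I would prove this by re-routing. Fix a demand pair $xy\in D_r$ and $r_{xy}\le 2$ internally $(E\cup Q\setminus\{x,y\})$-disjoint $xy$-paths in $G$. Since $v\in S\subseteq Q$ and $v$ is not a demand endpoint, at most one of these paths, say $P$, passes through $v$, hence at most one uses the edge $vu_j$. If $P$ visits $u_i$ or enters $v$ from $u_i$, it can be shortcut to a path of $G'$ that avoids $v$ and uses only vertices of $P$; otherwise re-route $P$ so that where it used $vu_j$ it now goes from $v$ to $u_i$ to $u_j$, using the surviving edge $vu_i$ and the new edge $u_iu_j$. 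The one genuinely delicate case is when $u_i\notin V(P)$ but $u_i$ is an internal vertex of the other disjoint path and $u_i\in Q$, so that the naive re-route collides at $u_i$. Resolving this is the technical heart of the proof; I expect to exploit that the $R$-component $T$ is a \emph{tree}, so the two $Q$-disjoint paths meet $T$ in a very restricted way (a tree carries only one ``unit'' of connectivity between its interface vertices in $\Gamma_G(C)\subseteq R$), together with the freedom to delete $vu_i$ instead of $vu_j$ when $d(v,u_i)=d(v,u_j)$, or else to argue directly on the tight bisets via the cut condition defining $(r,Q)$-connectivity. This is exactly the step where $r_{uv}\in\{0,1,2\}$ and the fact that every non-terminal lies in $Q$ are used.

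\emph{Step 4: close the argument.} Once $G'$ is known to be feasible, minimality of the length of $G$ forces equality of lengths, hence $d(u_i,u_j)=d(v,u_j)$ and $\deg_{G'}(v)=\deg_G(v)-1$. To turn this into a contradiction I would choose $G$, among all minimum-length $(r,B\cup S)$-connected subgraphs, to also minimise a secondary potential (for instance $\sum_{w\in S}2^{\deg_G(w)}$), and take $v$ of maximum degree among Steiner points of degree $\ge\Delta+1$, so that $\deg_G(u_i)\le\deg_G(v)$; one then checks that the replacement does not increase the length and strictly decreases the potential, contradicting the choice of $G$. A mild subtlety remains when the new edge moves between two Steiner points of equal degree, which I would handle by a more careful choice of the pair $u_i,u_j$ or by iterating the replacement and noting it cannot cycle within the finite family of minimum-length solutions.
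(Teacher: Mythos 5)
Your overall strategy for the hard part coincides with the paper's: assume a Steiner point $v$ has degree $\geq\Delta+1$, use the Robins--Salowe packing argument to find two neighbours $u_i,u_j$ with $d(u_i,u_j)\leq\max\{d(v,u_i),d(v,u_j)\}$, swap $vu_j$ for $u_iu_j$, and derive a contradiction with minimality of the total length (the paper avoids your Step~4 entirely by first perturbing so that all edge lengths are distinct, which makes the swap strictly length-decreasing; your secondary-potential device is an unnecessary complication and, as you note yourself, does not close the equal-degree case). The paper also does not run this swap argument uniformly: it first decomposes $G$ into blocks, invokes Calinescu's theorem (Lemma~\ref{l:D-Calinescu}) together with Lemma~\ref{l:block} to dispose of every Steiner point lying in a single $2$-connected block, and only performs the edge swap at cut-nodes of $G$. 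Your unified treatment is a legitimate alternative in principle, but only if Step~3 is actually proved.

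And that is where the proposal has a genuine gap: Step~3, which you correctly identify as ``the technical heart of the proof,'' is not proved -- you list candidate tools and state that you ``expect'' one of them to work. This step is exactly Lemma~\ref{l:replace} in the paper, whose proof is a full Menger/biset case analysis: one extracts a tight biset $\hat{X}$ separating the failed demand pair with $\delta_G(\hat{X})=\{sa\}$ and $b\in\Gamma(\hat{X})$, a second tight biset $\hat{Y}$ witnessing criticality of $sb$, and then argues by cases on the position of $a$ relative to $\hat{Y}$ using intersections, unions and differences of the two bisets. None of the tools you propose obviously substitutes for this. The tree structure of the $R$-component containing $v$ does not rule out the collision you worry about: two $(E\cup Q)$-disjoint $xy$-paths can both traverse that component along vertex-disjoint subpaths of the tree, so the rerouted path through $u_i$ can genuinely collide with the second path at $u_i\in Q$. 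The ``freedom to delete $vu_i$ instead of $vu_j$'' is available only when $d(v,u_i)=d(v,u_j)$ (otherwise the swap may increase the length), so it cannot be invoked in the generic collision case. Until you supply the biset case analysis (or an equivalent rerouting argument that handles the collision), the proof is incomplete.
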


Particular cases of Theorem~\ref{t:D} were proved by Robins and Salowe \cite{ROB-SAL} for $r \equiv 1$,
and by Calinescu \cite{Calinescu} for $r \equiv 2$.
We prove Theorems \ref{t:C} and \ref{t:D} in Sections \ref{s:C} and \ref{s:D}, respectively,
relying on these particular cases.  
From Theorem~\ref{t:C}, Theorem~\ref{t:D}, and Lemma~\ref{l:KKS}, we obtain the following corollary,
that implies Theorem~\ref{t:main2}.

\begin{corollary}
For any feasible solution $S,G$ to an instance of $\{0,1,2\}$-{\sf SN-MSP}
there exists a half integral bead solution of value at most $\Delta|S|/2$.
\end{corollary}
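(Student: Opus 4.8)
We want to show that any feasible solution $S,G$ to a $\{0,1,2\}$-{\sf SN-MSP} instance yields a half-integral bead solution of value at most $\Delta|S|/2$. First I would reduce to a \emph{minimal} situation: replace $S$ by an inclusion-minimal set $S'\subseteq S$ with $|S'|\le|S|$ such that the unit-disc graph of $R\cup S'$ is still $(r,B\cup S')$-connected (such $S'$ exists by throwing away redundant points; if no point is redundant then $S'=S$). Then, among all $(r,B\cup S')$-connected spanning subgraphs of the unit-disc graph of $R\cup S'$, pick $G'=(V,E)$ of minimum total length $\sum_{uv\in E}d(u,v)$. By Theorem~\ref{t:D}, every $v\in S'$ has $\deg_{G'}(v)\le\Delta$. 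This $G'$ is also minimally $(r,B\cup S')$-connected, since a proper subgraph would have strictly smaller total length; so Theorem~\ref{t:C} applies to $G'$ with $Q=B\cup S'$ (note $Q\cup R=V$ holds because $V=R\cup S'$).

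Next I would apply Theorem~\ref{t:C} with ${\cal C}$ equal to the set of \emph{all} connected components of $G'\setminus R$: every $R$-component $T$ of $G'$ is a tree with leaf set in $R$ and all internal nodes in $S'$, and replacing each such $T$ by a DFS cycle of capacity $1/2$ produces a fractionally $(r,B\cup S')$-connected graph $G''$ on node set $R\cup S''$, where $S''$ is the multiset obtained by replacing each $v\in S'$ with $\deg_T(v)$ copies of $v$ across the trees $T$ containing it — precisely the construction of Lemma~\ref{l:KKS}. The key counting step is that the total number of copies created is $\sum_{v\in S'}\deg_{G'}(v)\le\Delta|S'|\le\Delta|S|$, since each internal node of each $R$-component contributes its degree in that component, and these degrees sum over all components to $\deg_{G'}(v)$ for each $v$. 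Hence $|S''|\le\Delta|S|$, and the half-integral fractional solution $G''$ has $\sum_e c_e x_e$ contribution from Steiner nodes bounded accordingly.

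Finally I would translate this fractional/half-integral object into a half-integral \emph{bead} solution in the sense of Definition~\ref{d:K}: the DFS cycles, each with capacity $1/2$, assign capacity $1/2$ to every edge on a cycle through $R\cup S''$; every node of $S''$ lies on exactly one such cycle with degree $2$, so the cut-LP value is $\frac{1}{2}$ times the number of bead nodes, i.e.\ at most $\frac{1}{2}\,|S''|\le\frac{\Delta}{2}|S|$. Reading the cut-LP through the bijective correspondence between fractional bead solutions and feasible fractional solutions to the corresponding {\sf SN} instance (with $k=2$, as in Definition~\ref{d:K}), this is exactly a half-integral bead solution $\tau^*$-value at most $\Delta|S|/2$, as claimed.

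\textbf{Main obstacle.} The delicate point is verifying that Theorem~\ref{t:C} is actually applicable — i.e.\ that the length-minimal $(r,B\cup S')$-connected subgraph $G'$ is genuinely \emph{minimally} $(r,B\cup S')$-connected in the combinatorial sense (no proper subgraph is $(r,Q)$-connected), not merely length-minimal; one must argue that any violation of minimality would allow deleting an edge and strictly decreasing length, contradicting the choice of $G'$ (using that $d(u,v)>0$ for distinct points, possibly after perturbing coincident points, which the paper's convention of allowing copies at the same location must be handled carefully). The second subtlety is bookkeeping the multiplicities: ensuring that the copies of $v\in S'$ used by DFS cycles of \emph{different} $R$-components are disjoint and total exactly $\deg_{G'}(v)$, so that the global bound $\sum_v\deg_{G'}(v)\le\Delta|S'|$ is tight enough — but this follows directly from $\deg_{G'}(v)=\sum_{T}\deg_T(v)$ over the $R$-components $T$ containing $v$.
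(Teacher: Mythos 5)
Your proposal is correct and follows essentially the same route the paper intends for this corollary: pass to an inclusion-minimal $S'$ and a length-minimal $(r,B\cup S')$-connected subgraph, invoke Theorem~\ref{t:D} for the degree bound $\deg(v)\le\Delta$, Theorem~\ref{t:C} for the fractional feasibility of the DFS-cycle replacement, and Lemma~\ref{l:KKS} to count $\sum_{v\in S'}\deg(v)\le\Delta|S|$ beads of capacity $1/2$. The minimality subtlety you flag (length-minimal versus edge-minimal in the presence of coincident points) is genuine but is resolved exactly as you suggest, by taking $G'$ edge-minimal among the length-minimal feasible subgraphs.
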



\subsection{Proof of Theorem~\ref{t:C}} \label{s:C}

A {\em block} of a graph $G$ is an inclusion-maximal $2$-connected subgraph of $G$, 
or a graph induced by a bridge of $G$. 
It is known that every edge belongs to exactly one block, hence 
the blocks of a graph partition its edge set. Furthermore, any two blocks have at most one node in common.

\begin{lemma} \label{l:block}
Let $G=(V,E)$ be a minimally $(r,Q)$-connected graph such that
$r_{uv} \in \{0,1,2\}$ for all $uv \in D_r$ and $Q \cup R = V$. 
Let $G'=(V',E')$ be a $2$-connected block of $G$ and let $R'=R \cap V'$. 
Then $|R \cap V'| \geq 2$ and 
no proper $2$-connected subgraph of $G'$ that contains $R'$ exists.
\end{lemma}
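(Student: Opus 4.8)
The plan is to exploit the minimality of $G$ together with the structure theory of $2$-connected blocks. First I would argue that $|R' \cap V'| \ge 2$. Suppose not, so $G'$ contains at most one terminal. Since $Q \cup R = V$, every node of $V' \setminus R'$ lies in $Q$. A $2$-connected block with at most one terminal contributes nothing to connectivity requirements between terminals: any $uv$-path for a demand pair $uv \in D_r$ that uses an edge of $G'$ must enter and leave $G'$ through its (at most two) cut vertices of $G$ with the rest of the graph, and I would show that such a path can be rerouted to avoid $E'$ entirely, or else that the biset-cut condition for $f$ is never violated by deleting $E'$ — because any biset $\hat Y$ with $f(\hat Y) > 0$ must have a demand pair $uv$ separated by $\hat Y$, and with $\le 1$ terminal inside $G'$ one checks that $\delta_{E'}(\hat Y)$ is redundant. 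Hence $E'$ could be removed while keeping $G$ $(r,Q)$-connected, contradicting minimality. The case where $G'$ is a single bridge is excluded from this part of the statement (a bridge block trivially has its two endpoints, which need not both be terminals), so I restrict attention to genuine $2$-connected blocks, and there the two ``attachment'' vertices to the rest of $G$ need careful handling — that is where I expect the bookkeeping to be heaviest.

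Next, for the ``no proper $2$-connected subgraph containing $R'$'' claim, I would again use minimality. Suppose $G''=(V'',E'')$ is a proper $2$-connected subgraph of $G'$ with $R' \subseteq V''$. The idea is to replace $G'$ inside $G$ by $G''$ (that is, delete $E' \setminus E''$) and show the result is still $(r,Q)$-connected, contradicting that $G$ is minimally so. Since $G'$ is a block, every biset $\hat Y$ we must worry about either is "local" to $G'$ or "global". For a global biset, the edges of $G'$ crossing it all pass between the (at most two) cut vertices $a,b$ of $G'$ in $G$; because $G''$ is $2$-connected and contains $R' \ni$ whatever terminals are attached, $G''$ still provides $\min\{2, \ldots\}$ internally disjoint $a$–$b$ connections, so it covers $\hat Y$ at least as well as the part of $G'$ did. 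For a biset local to $G'$ with $f(\hat Y) \ge 1$, there is a demand pair $uv$ with $u,v \in R'$ separated by $\hat Y$ (using $|R'|\ge 2$ from the first part), and since $G''$ is $2$-connected and spans $R'$, it has $2 \ge f(\hat Y) + |\Gamma(\hat Y)|$ internally disjoint $uv$-paths — here I would invoke that a $2$-connected graph containing two prescribed vertices has two internally disjoint paths between them (Menger), adjusting for the boundary $\Gamma(\hat Y)$.

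The main obstacle, I anticipate, is handling the boundary/biset bookkeeping cleanly: the function $f(\hat Y)$ subtracts $|\Gamma(\hat Y)|$, and $\Gamma(\hat Y)$ may or may not lie inside the block, so one must carefully split into cases according to how $\hat Y$ interacts with the at-most-two cut vertices of $G'$ and with the boundary nodes. A secondary subtlety is that a "block" here can be a bridge, and the lemma's hypothesis ``$G'$ is a $2$-connected block'' is meant to exclude that degenerate case, so I should state explicitly at the outset that $G'$ has at least $3$ vertices and no cut vertex of its own. Once the biset analysis is organized by these cases, each individual verification is a short Menger-type argument, and the contradiction with minimality closes both claims.
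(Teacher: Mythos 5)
Your proposal has two genuine gaps, both traceable to the same false premise: that a $2$-connected block of $G$ meets the rest of $G$ in ``at most two'' cut vertices. A block can contain arbitrarily many cut vertices of $G$ (think of a cycle with a pendant terminal-bearing tree at every node), and this breaks both halves of your argument. For the first claim, your plan to delete $E'$ outright and reroute paths around it cannot work: if $u$ and $v$ lie in components of $G \setminus V'$ attached at two different cut vertices of $G'$, every $uv$-path must traverse edges of $G'$, so $E'$ is not redundant even when $|R'| \le 1$ --- it may be carrying plain $1$-connectivity between many terminal-bearing components. The idea you are missing is the one the paper's proof is built on: call the unique node of $V'$ in a $V'$-component its \emph{attachment node}; if some demand pair with $r_{uv}=2$ has $u$ inside such a component and $v$ outside it, the attachment node is a one-element separator of $u$ from $v$, so it cannot lie in $Q$, hence by $Q \cup R = V$ it lies in $R$ and is itself a terminal of $V'$. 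Minimality of $G$ forces some $r_{uv}=2$ demand to ``cross'' the genuinely $2$-connected block $G'$ (otherwise an edge on a cycle of $G'$ could be dropped), and in every configuration of $u,v$ --- both in $V'$, one in $V'$ and one in a $V'$-component, or the two in distinct $V'$-components --- this observation yields two distinct terminals in $V'$.

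For the second claim, simply deleting $E' \setminus E''$ is not enough. The attachment nodes lying in $V' \setminus V''$ are in $Q \setminus R$ (any attachment node that is a terminal belongs to $R' \subseteq V''$), but they still support components of $G$ that must remain connected to the rest of the graph; after your deletion they may be cut off entirely. The paper's fix is to add back an inclusion-minimal tree $F$ in $G'$ joining these orphaned attachment nodes to (the shrunk) $V''$, and then to check both that $E'' \cup F$ is still a proper subset of $E'$ and that the resulting graph is $(r,Q)$-connected --- which works precisely because those attachment nodes are in $Q \setminus R$, so every demand routed through them has $\lambda^Q \le 1$ anyway and single connectivity suffices. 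Your Menger/biset case analysis for the bisets internal to $G'$ is fine in spirit, but without the tree $F$ the modified graph can fail even ordinary connectivity, so the contradiction with minimality does not go through as written.
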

\begin{proof}
We may assume that $G$ is connected, as otherwise we may consider each connected component of $G$ separately.
Any $V'$-component $C$ has exactly one node in $V'$, which we call the {\em attachment node} of $C$. 
Note that if $r_{uv}=2$ such that $v$ belongs to a $V'$-components $C_v$ of $G$ and $u \notin C_v$,
then the attachment node of $C_v$ is in $V \setminus Q$, and hence is in $R$, by the assumption $Q \cup R=V$.

We prove that $|R'| \geq 2$ 
Since $G'$ is $2$-connected, and $G$ is minimally $(r,Q)$-connected,
there exists $uv \in D_r$ with $r_{uv}=2$ such that $u \in V'$, 
or $u,v$ belong to disjoint $V'$-components.
Suppose that $u \in V'$. If $v \in V'$ then we are done.
Else, $v$ belongs to a $V'$-component, and the attachment node of this component is in $R$.
If $u,v$ belong to disjoint $V'$-components, then the attachment nodes of these components are 
distinct and belong to $R$. In all cases, we have $|R'| \geq 2$.

We prove that if $G''=(V'',E'')$ is a $2$-connected subgraph of $G'$ that contains $R'$, then $G''=G'$.
Suppose that $G'' \neq G'$.
Let $A$ be the set of attachment nodes that are in $V' \setminus V''$. 
Note that $A \subseteq Q \setminus R$.
In $G'$, shrink $V''$ into a single node $v''$, 
and take $F$ to be the edge set of some inclusion minimal tree in $G'$ 
that contains $A \cup \{v''\}$. Let $I=E'' \cup F$.
If $A=\emptyset$ then $F=\emptyset$, and $I=E''$. 
Otherwise, there is $a \in A$ that has degree exactly $1$ in $(V',I)$.
In both cases, $I$ must be a proper subset of $E' \setminus E''$. 
Let $\hat{G}$ be obtained from $G$ by replacing $E'$ by $I$. 
It is not hard to verify that $\hat{G}$ is $(r,Q)$-connected, since $A \subseteq Q \setminus R$.
Furthermore, $\hat{G}$ is a proper subgraph of $G$, since $I$ is a proper subset of $E' \setminus E''$
This contradicts the minimality of $G$.
\qed
\end{proof}

A path $P$ is an {\em $L$-chord path} of a cycle $L$ in a graph $G$ if the endnodes of $P$ are in $L$
but no internal node of $P$ is in $L$. 
Relying on ear decomposition of $2$-connected graphs, Calinescu \cite{Calinescu} proved the following.

\begin{lemma} [\cite{Calinescu}] \label{l:calinescu-cycle}
Let $G'=(V',E')$ be a $2$-connected graph and let $R' \subseteq V$ with $|R'| \geq 2$.
Suppose that no proper $2$-connected subgraph of $G$ that contains $R'$ exists.
Then any cycle $L$ in $G'$ contains at least $2$ nodes in $R'$, and any $L$-chord path
contains at least one node in $R'$ that does not belong to $L$.\footnote{
This statement is not true for edge-connectivity; for example, if $R=\{s,t\}$ and 
$G$ consists of $2$ edge-disjoint $st$-paths that have $2$ nodes $u,v$ in common, 
then the simple cycle that contains $u,v$ contains no node from $R$.}
\end{lemma}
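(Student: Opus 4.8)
The plan is to use only the minimality hypothesis on $G'$ (no proper $2$-connected subgraph of $G'$ contains $R'$) together with two classical facts about minimally $2$-connected graphs. First I would record what minimality buys. Deleting an \emph{edge} of $G'$ keeps the vertex set, hence keeps $R'$, so it must destroy $2$-connectivity; thus $G'$ is minimally $2$-connected in the ordinary edge sense, and (Dirac, Plummer) no cycle of $G'$ has a chord, and the vertices of degree $\ge 3$ form an independent set. Minimality also forbids deleting any vertex $v \in V' \setminus R'$ for which $G'-v$ remains $2$-connected.

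For the chord-path assertion I would isolate the following fact, valid in \emph{every} $2$-connected graph $G$: for a cycle $L$ and an $L$-chord path $P$ with ends $x,y \in V(L)$, the graph $H := G - \mathrm{int}(P)$ is again $2$-connected. The proof is short. Suppose $z$ is a cut vertex of $H$. If $z = x$ (or $z=y$), then in $G - x$ the set $\mathrm{int}(P)$ is a pendant path attached at $y$, so $G - x$, being $(H-x)$ with that pendant path added, is disconnected whenever $H-x$ is --- contradicting $2$-connectivity of $G$. If $z \notin \{x,y\}$, note $L \subseteq H$ (since $\mathrm{int}(P)$ avoids $V(L)$) and the arc of $L$ missing $z$ joins $x$ to $y$ inside $H - z$; therefore $G - z$, which is $(H-z)$ with a fresh $xy$-path glued in, has exactly the same components as $H-z$, so $G-z$ is disconnected --- again a contradiction. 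Now apply this to our $G'$: an $L$-chord path $P$ has non-empty interior (no chords), so if $\mathrm{int}(P) \cap R' = \emptyset$ then $G' - \mathrm{int}(P)$ would be a proper $2$-connected subgraph of $G'$ still containing $R'$, contradicting minimality. Hence $\mathrm{int}(P)$ meets $R'$, which is exactly the claim.

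For the cycle assertion --- every cycle $L$ has $|V(L)\cap R'|\ge 2$ --- I would argue by contradiction. If $G'=L$ then $|R'| = |V(L)\cap R'| \le 1 < 2$, so $G'$ properly contains $L$. Contract $L$ to a single vertex $\ell$; the multigraph $G'/L$ is still $2$-connected, and $R'_\ell := (R'\setminus V(L)) \cup \{\ell\}$ has at least two elements. Take a minimal $2$-connected subgraph $H_\ell$ of $G'/L$ containing $R'_\ell$, and let $Z \subseteq V(L)$ be the set of $L$-endpoints in $G'$ of those edges that are the edges of $H_\ell$ incident to $\ell$; then $|Z|\ge 1$. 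Each vertex of $Z$ has degree $\ge 3$ in $G'$ (two $L$-edges plus an edge leaving $L$), so by the independence of the degree-$\ge 3$ vertices no two elements of $Z$ are adjacent on $L$; hence every arc of $L$ between consecutive elements of $Z$ contains an edge. When $|Z|\ge 2$, pick such a gap-arc whose interior avoids $V(L)\cap R'$ (possible, since $V(L)\cap R'$ has at most one vertex while there are at least two gap-arcs, and a vertex can lie in the interior of at most one), and let $\pi$ be the complementary arc: it is a \emph{proper} sub-arc of $L$ containing $Z$, with both endpoints in $Z$, and containing $V(L)\cap R'$. Un-contract $\ell$ into the path $\pi$, reattaching each former $\ell$-edge to its vertex of $\pi$; by the standard fact that expanding a vertex of a $2$-connected graph into a path --- distributing its incident edges so that both ends of the path receive at least one --- preserves $2$-connectivity, the result $H$ is $2$-connected, $H \subseteq G'$, $H$ contains $R'$, and $H$ is proper since $\pi$ omits an edge of $L$: contradiction. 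When $|Z|=1$, say $Z=\{z\}$, un-contracting $\ell$ to $z$ gives a $2$-connected $H \subseteq G'$ with $V(H)\cap V(L)=\{z\}$ and $R'\setminus V(L) \subseteq V(H)$; if the lone vertex $r$ of $V(L)\cap R'$ exists and differs from $z$, I would attach to $H$ an $H$-chord path through $r$ (available by the fan form of Menger in the $2$-connected $G'$), which keeps $H$ $2$-connected, still proper (a path cannot contain all of the cycle $L$), and now containing $R'$: contradiction again.

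The chord-path part is clean --- its auxiliary statement is a short self-contained fact about $2$-connected graphs. The part I expect to need real care is the cycle assertion: extracting from $H_\ell$ a genuinely \emph{proper} sub-arc of $L$ that still carries the single allowed $R'$-vertex of $L$, and dispatching the few degenerate shapes of $H_\ell$ (notably $|Z|=1$, or $H_\ell$ collapsing to a pair of parallel edges, in each of which a short check shows it would force $R'\subseteq V(L)$, contrary to assumption). This is exactly the bookkeeping that Calinescu organizes via an ear decomposition of $G'$ rooted at $L$; I expect it to be the only delicate point.
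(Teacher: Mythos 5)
The paper does not prove this lemma at all --- it is imported from Calinescu's paper with the remark that the proof there rests on an ear decomposition of $G'$ rooted at $L$ --- so your argument has to stand on its own, and both halves contain steps that fail. The chord-path half hinges on the auxiliary claim that in \emph{every} $2$-connected graph, deleting the interior of an $L$-chord path $P$ leaves a $2$-connected graph. That is false: the definition only forbids internal nodes of $P$ from lying on $L$, not from having neighbours elsewhere in the graph. Take $L=xaybx$, $P=xcy$, and add a vertex $d$ with edges $cd$ and $da$; the resulting graph is $2$-connected and $P$ is an $L$-chord path, yet $G-c$ leaves $d$ pendant. Your proof of the auxiliary fact silently treats $\mathrm{int}(P)$ as attached to the rest of $G$ only at $x$ and $y$, i.e.\ as a \emph{last} ear --- which is precisely the property one must work to arrange, and is why Calinescu routes the argument through an ear decomposition rather than deleting an arbitrary chord path. (Within the actual hypotheses of the lemma the desired conclusion is of course true, but your deduction does not establish it.)

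The cycle half has two further problems. First, ``the multigraph $G'/L$ is still $2$-connected'' is false for a general $2$-connected $G'$: if $G'$ is the $4$-cycle $x_1x_2x_3x_4$ with two disjoint ears $x_1ax_3$ and $x_2bx_4$, then $G'/L-\ell=\{a,b\}$ is disconnected. You would need to derive connectivity of $G'-V(L)$ from the minimality of $G'$ and the assumption $|V(L)\cap R'|\le 1$ before a minimal $2$-connected $H_\ell\subseteq G'/L$ containing $R'_\ell$ is even guaranteed to exist; this is not done. Second, the ``classical fact'' that the degree-$\ge 3$ vertices of a minimally $2$-connected graph form an independent set is also false: glue two $4$-cycles $ua_1wa_2u$ and $vb_1wb_2v$ at $w$ and add the edge $uv$; every edge is critical, yet $u$ and $v$ are adjacent and both have degree $3$. (The correct statements are that no cycle has a chord and that the high-degree vertices induce a forest.) This particular error happens not to be load-bearing, since any gap-arc between two distinct consecutive elements of $Z$ contains an edge anyway, but the two preceding gaps are genuine: the chord-path assertion is not proved, and the contraction step that the whole cycle assertion rests on is unjustified. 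The un-contraction/expansion mechanics and the $|Z|=1$ case are handled correctly, but they sit on top of the unproved $2$-connectivity of $G'/L$.
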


We generalize this to $\{0,1,2\}$-$Q$-connectivity, as follows.

\begin{lemma} \label{l:cycle}
Let $G=(V,E)$ be a minimally $(r,Q)$-connected graph such that
$r_{uv} \in \{0,1,2\}$ for all $uv \in D_r$ and $Q \cup R = V$. 
Then any cycle $L$ in $G$ contains at least $2$ nodes in $R$, and any $L$-chord path
contains at least one node in $R$ that does not belong to $L$.
\end{lemma}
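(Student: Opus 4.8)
The plan is to localize the statement to a single $2$-connected block of $G$ and then quote Lemma~\ref{l:calinescu-cycle}. The starting observation is that the edges of any cycle $L$ pairwise lie on a common cycle (namely $L$ itself), hence they all belong to the same block of $G$; since this block contains a cycle it is not a bridge, so $L$ is contained in a $2$-connected block $G'=(V',E')$ of $G$. Setting $R'=R\cap V'$, Lemma~\ref{l:block} tells us that $|R'|\ge 2$ and that $G'$ has no proper $2$-connected subgraph containing $R'$, so the hypotheses of Lemma~\ref{l:calinescu-cycle} are met by the pair $(G',R')$. Applying that lemma to the cycle $L$ inside $G'$ immediately gives at least two nodes of $L$ in $R'\subseteq R$, which is the first assertion.

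For the second assertion, let $P$ be an $L$-chord path in $G$, i.e., a path with both endnodes on $L$ and no internal node on $L$. First I would check that $P$ lies inside the same block $G'$: joining $P$ to one of the two arcs into which its endnodes split $L$ produces a cycle $C$ that shares at least one edge with $L$, so $C$ — and in particular every edge and node of $P$ — lies in the block of $G$ containing that edge, which is $G'$. Therefore $P$ is an $L$-chord path of $L$ within the $2$-connected graph $G'$, and a second application of Lemma~\ref{l:calinescu-cycle} yields a node of $R'\subseteq R$ on $P$ that does not lie on $L$.

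Since the nontrivial combinatorial content — the ear-decomposition argument — is already packaged in Lemma~\ref{l:calinescu-cycle}, and the structural input about minimally $(r,Q)$-connected graphs with $Q\cup R=V$ is packaged in Lemma~\ref{l:block}, the only delicate points are the two ``locality'' claims that a cycle, and a chord path attached to it, never leave a single $2$-connected block. I expect these to be the part that requires the most care, although they are standard facts about block structure; no genuinely new difficulty should arise.
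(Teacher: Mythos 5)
Your proposal is correct and follows exactly the paper's own argument: localize $L$ (and any $L$-chord path) to a single $2$-connected block $G'$, invoke Lemma~\ref{l:block} to verify the hypotheses of Lemma~\ref{l:calinescu-cycle} for $(G',R')$, and conclude. The only difference is that you spell out the block-locality facts that the paper states without proof.
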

\begin{proof}
Let $L$ be a cycle in $G$. Then $L$ is contained in some $2$-connected block $G'=(V',E')$ of $G$;
moreover, any $L$-chord path is also contained in $G'$. Let $R'=R \cap V'$. 
By Lemma~\ref{l:block}, $G',R'$ satisfy the conditions of Lemma~\ref{l:calinescu-cycle}; 
hence the statement follows from Lemma~\ref{l:calinescu-cycle}.
\qed
\end{proof}

By Lemma~\ref{l:cycle}, the graph $G \setminus R$ is a forest, 
and every $v \in R$ has at most one neighbor in each connected component of $G \setminus R$. 
This implies the first part of Theorem~\ref{t:C}.
Now we prove the second part, namely, the following. 

\begin{lemma} \label{l:replace-cycle}
Let $G$ be a minimally $(r,Q)$-connected graph such that $r_{uv} \in \{0,1,2\}$ for all $uv \in D_r$
and $Q \cup R = V$.
Then for any subset ${\cal C}$ of connected components of $G \setminus C$,
replacing for each $C \in {\cal C}$ the corresponding tree $T_C$ by a DFS cycle on $\Gamma_G(C)$ of capacity $1/2$ 
results in a fractionally $(r,Q)$-connected graph $H$.
\end{lemma}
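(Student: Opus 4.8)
The plan is to show directly that $H$, equipped with its edge capacities $x$ (and with every split Steiner copy again regarded as a member of $Q$), satisfies the cut-LP, i.e.\ $\sum_{e\in\delta_H(\hat Y)}x_e\ge f(\hat Y)$ for every biset $\hat Y$ with $\Gamma(\hat Y)\subseteq Q$ (here $f$ is taken with respect to $r$ and the updated $Q$). By the max-flow/min-cut theorem for element connectivity this is equivalent to producing, for every demand pair $uv$ with $r_{uv}=k\in\{1,2\}$, a fractional $uv$-flow in $H$ of value $k$ that obeys the edge-capacities and routes at most one unit through each element of $Q\setminus\{u,v\}$. So I fix such a $uv$ and start from $k$ pairwise $(E\cup Q\setminus\{u,v\})$-disjoint $uv$-paths $P_1,\dots,P_k$ in $G$, which exist since $G$ is $(r,Q)$-connected.

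The heart of the argument is to replace the $P_i$ by well-behaved paths. By the first part of Theorem~\ref{t:C} each component $C$ of $G\setminus R$ induces a tree $T_C$ whose leaves are exactly the terminals of $\Gamma_G(C)$; hence every maximal sub-path of a $P_i$ that uses an edge of $T_C$ — call it a \emph{$T_C$-segment} — runs between two terminals of $\Gamma_G(C)$, and since the $P_i$ are simple and element-disjoint and each terminal touches $C$ by a single edge, any two $T_C$-segments among $P_1,\dots,P_k$ are vertex-disjoint and share no terminal. I claim one may choose the $P_i$ so that, for every $C\in{\cal C}$, at most one $T_C$-segment occurs among them and the $P_i$ meet $\Gamma_G(C)$ only at the endpoints of that segment. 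To see this, among all choices of element-disjoint $P_1,\dots,P_k$ take one minimizing the total number of $T_C$-segments (with ties broken by the number of other visits to $\bigcup_{C\in{\cal C}}\Gamma_G(C)$). If some $T_C$ had two segments, choose two of them adjacent in the tree obtained by contracting all $T_C$-segments, and let $\pi$ be the path joining them inside $C$, which then consists of non-terminal nodes and avoids every other segment. If the two segments lie on the same $P_i$, splicing $\pi$ in merges them into one segment, keeps the paths element-disjoint ($\pi$ avoids the others), and strictly decreases the count — a contradiction. If they lie on different $P_i,P_j$, close them into a simple cycle $L\subseteq G$ using the ``outside'' portions of $P_i$ and $P_j$ (shortcutting at any shared terminal, which is necessarily a terminal outside $Q$); then $\pi$ is an $L$-chord path with no node in $R$, contradicting Lemma~\ref{l:cycle}. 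The terminal-avoidance part is handled the same way: if a $P_i$ visited a terminal $t'\in\Gamma_G(C)$ other than a segment endpoint, then — splicing in the edge from $t'$ to $C$ in the $k=1$ case, or forming a cycle through $t'$ as above in the $k=2$ case — one again gets either a strict decrease of the minimized quantity or a forbidden chord path.

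Granting this, the reroute is the easy half, just as in the {\sf SF-MSP} case. Outside the replaced components the $P_i$ use only surviving capacity-$1$ edges of $G$ and (by the claim) avoid $\Gamma_G(C)$, so route one unit along each such portion. For the unique $T_C$-segment, running between terminals $a$ and $b$, route $1/2$ unit along each of the two arcs of the DFS cycle $L_C$ from $a$ to $b$: these two arcs partition the edge set of $L_C$, so every capacity-$1/2$ edge carries exactly $1/2$; each split Steiner copy on $L_C$ has degree $2$ and hence carries at most $1/2\le 1$; $a$ and $b$ carry one unit each, supplied by the outside portions of the relevant $P_i$; the other terminals of $\Gamma_G(C)$, used by no $P_j$, carry only the $1/2$ of this one reroute; and no two segments share an $L_C$. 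Patching the pieces together gives the desired $uv$-flow of value $k$, and letting $uv$ range over all demand pairs proves that $H$ is fractionally $(r,Q)$-connected.

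The whole difficulty is concentrated in the structural claim of the middle paragraph, which is exactly where Lemma~\ref{l:cycle} (the chord-path property of minimal $(r,Q)$-connected graphs) is needed; the book-keeping around terminals shared by several $P_i$ (necessarily terminals outside $Q$) requires a careful but routine shortcutting argument.
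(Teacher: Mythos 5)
Your overall strategy is genuinely different from the paper's: you argue on the primal (flow) side, first normalizing the integral paths in $G$ so that each replaced component carries at most one segment and then rerouting that segment fractionally around the DFS cycle, whereas the paper argues on the dual (cut) side: it assumes some biset constraint fails in $H$, deduces that two elements $a,b$ of a single $T_C$ disconnect $u$ from $v$ in $G$, and only then builds a cycle through $a,b$ and contradicts Lemma~\ref{l:cycle} via the chord path $P_{ab}$. The ingredient you correctly isolate---a cycle $L$ plus an $L$-chord path inside $C$---is the same in both arguments, but the way you reach it has a gap.

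The gap is in the case of two segments lying on different paths $P_i,P_j$. You ``close them into a simple cycle $L$ using the outside portions of $P_i$ and $P_j$,'' but the union of two element-disjoint $uv$-paths need \emph{not} contain a simple cycle through two prescribed internal nodes, one on each path: if the paths share a terminal $t\in R\setminus B$ (which is allowed, since such a $t$ is not in $Q$), as in $P_i$ traversing $u,a,t,v$ and $P_j$ traversing $u,t,b,v$, then no simple cycle of $P_i\cup P_j$ contains both $a$ and $b$, and ``shortcutting at $t$'' necessarily discards one of the two segments. The paper never faces this because its nodes $a,b$ come from a violated biset and hence form a two-element cut separating $u$ from $v$; that cut property is exactly what guarantees the cycle (the neighbours of $a$ and of $b$ along their respective paths fall into the two distinct components of $(P_a\cup P_b)\setminus\{a,b\}$, one containing $u$ and one containing $v$, so one can splice a simple cycle through $a$ and $b$). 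Your two segments carry no such cut property, so the cycle may simply fail to exist and the exchange argument stalls; in the example above one can in fact reroute so that both segments land on the same path and then merge them, but you have not shown anything like this in general, and your minimizer could well be stuck in a configuration your argument cannot improve or refute. The same looseness affects the terminal-avoidance half of your claim, which you do need (to keep the through-flow at unstable terminals of $\Gamma_G(C)$ within their unit element capacity) but only sketch. The fix is to follow the paper's order of quantifiers: start from a violated fractional cut, extract the two-element cut $\{a,b\}$ inside a single $T_C$, and only then invoke the cycle-plus-chord-path contradiction, where the existence of the cycle is actually guaranteed.
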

\begin{proof}
Suppose to the contrary that there exists $uv \in D_r$ such that $u,v$ are not 
fractionally $(r_{uv},Q)$-connected in $H$.
This may happen only if $r_{uv}=2$ and there exists $C \in {\cal C}$ such that $u,v$ can be disconnected 
by removing two elements $a,b$ of $T_C$ from $G$. If one of $a,b$ is an edge
we can replace it by its endnode in $T_C$, hence we may assume that each of $a,b$ is a node.
Note that $a \neq b$, since otherwise $u,v$ can be disconnected by removing 
the single element $a$, contradicting that $\lambda_G^Q(u,v) \geq r_{uv}=2$.
Let $P_{ab}$ be the $ab$-path in $T_C$. Note that all the internal nodes of $P_{ab}$ are in $C$,
so none of them is a terminal. 
Consider two $(Q \cup E)$-disjoint $u,v$ paths in $G$.
One of them must contain $a$ and the other contains $b$; denote these paths by $P_a$ and $P_b$, respectively.
The union of the paths $P_a$ and $P_b$ contains a simple cycle $L$ that contains $a,b$.
Hence the path $P_{ab}$ has a subpath $P$ such that $P$ is an $L$-chord path.
This contradicts Lemma~\ref{l:cycle}, since no internal node of $P$ is a terminal. 
\qed
\end{proof}

The proof of Theorem~\ref{t:C} is complete.


\section{Proof of Theorem~\ref{t:decomposition}} \label{s:decomposition}

For a tree $T=(V,F)$ and $A \subseteq V$ let $T_A=(V_A,F_A)$ be 
the inclusion minimal subtree of $T$ that contains $A$.
To prove Theorem~\ref{t:decomposition} it is sufficient to prove the following.

\begin{lemma} \label{l:decomposition}
Let $T=(V,F)$ be a tree of maximum degree $\Delta \geq 2$, let $R \subseteq V$, and let $S=V \setminus R$. 
Then for any integer $k \geq 2\Delta-2$ there exists a connected hypergraph 
${\cal H}=(R,{\cal E})$ of rank $\leq k$ such that 
$\sum_{A \in {\cal E}} |V_A \cap S| \leq \left(1+\frac{2}{\lfloor\lg \lfloor k/(\Delta-1)\rfloor \rfloor}\right)|S|$.
\end{lemma}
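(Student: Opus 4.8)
The plan is to reduce the statement to a cut decomposition of a single ``full component'' of $T$, carried out in the spirit of Du and Zhang's $\left(1+\tfrac1{\lfloor\lg k\rfloor}\right)$-decomposition of an optimal Steiner tree, with two modifications forced on us: the objective counts Steiner \emph{nodes} of $S$ rather than a metric length, and $T$ may have maximum degree $\Delta$ rather than $3$.

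First I would reduce to a single full component. We may assume $T$ has no leaf in $S$, since deleting such a leaf removes a node of $S$ and changes no subtree $T_A$ with $A\subseteq R$. Splitting $T$ at every terminal of degree $\ge 2$ partitions its edge set into subtrees $T^1,\dots,T^p$ (the full components), each with all leaves in $R$, all internal nodes in $S$, and $\sum_j|S\cap V(T^j)|=|S|$. Two full components meet only in a common terminal, so the union of connected hypergraphs built separately on the sets $R\cap V(T^j)$ is a connected hypergraph on $R$; and a full component with at most $k$ leaves needs only the single hyperedge equal to its leaf set, contributing exactly $|S\cap V(T^j)|$ to the sum. Thus we may assume $T$ is a single full component with leaf set $R$, $n:=|R|>k$, internal (Steiner) node set $S$, and $\deg_T\le\Delta$. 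It is convenient to contract each maximal path of degree-$2$ Steiner nodes into one edge of weight equal to the number of nodes it absorbs, and to put unit weight on every remaining internal node (now of degree $\ge 3$, hence at most $n-1$ of them) and zero weight on terminals; this gives a weighted tree $T'$ of total weight $|S|$ in which $|V_A\cap S|$ equals the weight of the minimal subtree $T'_A$ spanning $A$, for every $A\subseteq R$.

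Next I would decompose $T'$. Put $k'=\lfloor k/(\Delta-1)\rfloor\ge 2$ and $t=\lfloor\lg k'\rfloor\ge 1$, root $T'$ at a leaf, and --- following Du and Zhang --- assign each internal node $v$ a rank equal to $\lfloor\lg(\text{number of leaves of }T'\text{ below }v)\rfloor$; ranks are nonincreasing towards the leaves. For a shift $j\in\{0,\dots,t-1\}$, cut at every internal node whose rank and whose parent's rank lie in different blocks $[\,jt+it,\,jt+(i+1)t\,)$; the cut nodes, kept in each incident part, tile $T'$ into pieces, and any piece whose size still exceeds the budget is split further so that every piece contains a leaf of $T'$ and has at most $k'$ internal nodes. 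Since an internal node of $T'$ has at most $\Delta-1$ children, a piece then has at most $(\Delta-2)k'+2\le k$ leaves, the last inequality being exactly the hypothesis $k\ge 2\Delta-2$. For a piece $P$, let $A_P\subseteq R$ consist of the leaves of $T'$ in $P$ together with, for the cut node rooting $P$ and for each cut node $w$ bounding $P$ from below, a representative $\mathrm{rep}(w)\in R$ chosen as a leaf of $T'$ inside the corresponding sub-piece nearest to $w$. Then $2\le|A_P|\le k$, adjacent pieces share the representative of their common cut node, so $\{A_P\}_P$ is a connected spanning hypergraph of rank at most $k$, and $T'_{A_P}$ equals $P$ together with the $w$--$\mathrm{rep}(w)$ paths of the cut nodes $w$ below $P$.

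The main obstacle is the quantitative bound. Since the pieces partition the weight of $T'$ we get $\sum_P|V_{A_P}\cap S|\le|S|+\Omega_j$, where $\Omega_j$ is the total weight of the $w$--$\mathrm{rep}(w)$ paths of the shift-$j$ decomposition --- the ``doubly paid'' Steiner weight, with two contributions: the representative paths forced by the rank-block boundaries, and the representatives created by the size-splitting. The crux is to show $\min_{0\le j<t}\Omega_j\le\tfrac2t|S|$, and this is where I expect to re-use Du and Zhang's amortized analysis: averaging the first contribution over the $t$ shifts, each unit of Steiner weight is paid twice by at most two of the shifts, because any (sub-)piece has its internal ranks in a single block of width $t$, so a representative path through it can be separated from the ambient terminals by the boundary of at most two shifts at the relevant scale; and the size-splitting should contribute at most $\tfrac1t|S|$ by a similar averaging. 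Assembling Steps 1--3 then yields a connected hypergraph ${\cal H}=(R,{\cal E})$ of rank at most $k$ with $\sum_{A\in{\cal E}}|V_A\cap S|\le\bigl(1+\tfrac2{\lfloor\lg\lfloor k/(\Delta-1)\rfloor\rfloor}\bigr)|S|$. I expect the genuinely delicate points to be making the size-splitting compatible with ``every piece contains a leaf and has short representative paths'' (in particular when heavy contracted edges are present), and turning the two-fold charging sketched above into a rigorous amortization; everything else --- the ranks, the passage from the internal-node budget $k'$ to the terminal budget $k$, and the reassembly of the full components --- is routine bookkeeping.
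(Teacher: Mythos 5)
Your high-level plan is essentially the paper's: reduce to full components, run a Du--Zhang-style rank decomposition with a Steiner-node budget of $\lfloor k/(\Delta-1)\rfloor$, convert back to terminal hyperedges with a $(\Delta-1)$-factor blowup of the rank (this is where $k\ge 2\Delta-2$ enters, and your inequality $(\Delta-2)k'+2\le k$ is the right check), and absorb all overhead into a $\frac{2}{t}$ term. But the one genuinely non-routine step --- the bound $\min_{j}\Omega_j\le\frac{2}{t}|S|$ --- is asserted rather than proved, and you yourself flag the ``delicate points'' as unresolved. Two concrete problems. First, ``the pieces partition the weight of $T'$'' is false for node weights: a cut node shared by $d$ pieces is counted $d$ times on the left-hand side of $\sum_P|V_{A_P}\cap S|\le|S|+\Omega_j$ but only once in $|S|$, so the overhead is not just the representative paths but also an additive term of the form $|{\cal E}|-1$ coming from the node-versus-edge discrepancy. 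The paper isolates exactly this term (Lemma~\ref{l:e-decomposition} bounds $\sum_A c(F_A)+|{\cal E}|-1$, not just $\sum_A c(F_A)$) and controls it by a new ingredient, Lemma~\ref{l:mapping}: the proper mapping can be chosen so that every terminal-connecting path contains an edge of cost at least $1$, whence $|{\cal E}|-1$ is at most the total cost of the connecting paths, which Du--Zhang already bound by $c(T)/\lfloor\lg p\rfloor$. You have no analogue of this lemma, and your substitute --- ``the size-splitting should contribute at most $\frac1t|S|$ by a similar averaging'' --- is unsupported; in particular it is unclear why a size-splitting that does not depend on the shift $j$ should average favourably over the $t$ shifts at all.

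Second, even the first contribution to $\Omega_j$ is only invoked, not adapted: Du and Zhang's amortization is stated for edge lengths in a binary tree, whereas you need it for a node-weighted tree of degree up to $\Delta$ with heavy contracted edges, and the interaction of the representatives ``nearest to $w$'' with those contracted edges is exactly where the charging argument has to be redone. (The paper sidesteps this by first normalizing to a full binary tree with $\{0\}\cup[1,\infty)$ edge costs via standard reductions, and only then applying the Du--Zhang partition.) Until the double-counting of cut nodes is folded into $\Omega_j$ and the bound $\min_j\Omega_j\le\frac{2}{t}|S|$ is actually established, the proof is incomplete at its central quantitative step.
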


To prove Lemma~\ref{l:decomposition} we prove the following.

\begin{lemma}\label{l:e-decomposition}
Let $T=(V,F)$ be a tree with edge costs $\{c(e) \geq 1:e \in F\}$ and let $R \subseteq V$.
Then for any integer $p \geq 2$ there exists a connected hypergraph 
${\cal H}=(R,{\cal E})$ of rank $\leq p$ such that 
$\sum_{A \in {\cal E}}c(F_A)+|{\cal E}|-1 \leq \left(1+\frac{2}{\lfloor\lg p\rfloor}\right)c(T)$.
\end{lemma}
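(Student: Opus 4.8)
The plan is to build the hypergraph by a recursive edge-partition of the tree $T$. The key device is a centroid-type decomposition: given a subtree with edge-cost-total $w$, find an edge (or a node) whose removal splits the remaining cost roughly in half. Iterating this $\lfloor \lg p \rfloor$ times produces a partition of $F$ into at most $p$ "pieces," each a connected subtree carrying cost roughly $w/p$. From each piece I would form one hyperedge, namely the set of terminals (vertices of $R$) lying in the closure of that piece — more precisely, the terminal endpoints of the boundary cuts together with the terminals interior to the piece. The rank is controlled because each piece, being obtained by $\lfloor \lg p\rfloor$ halvings of a bounded-degree-free structure, touches at most $p$ "attachment points," but I will need to be slightly careful and instead bound the number of terminals per hyperedge directly by $p$; this is why the hypothesis $p \geq 2$ and the exponent $\lfloor \lg p\rfloor$ appear.

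The heart of the argument is the cost accounting. When I cut an edge $e$ to split a subtree into two parts, that edge $e$ is "paid for twice" — once on each side — so a single halving step duplicates a $1/2$-fraction (in the worst case, by the split guarantee) of the current cost. After $\lfloor \lg p\rfloor$ levels of halving, the total over-counting telescopes: summing $c(F_A)$ over all hyperedges $A$ exceeds $c(T)$ by at most roughly $(1/\lfloor\lg p\rfloor)\cdot c(T)$ coming from boundary edges counted on both sides, plus a similar term absorbing the $+|{\cal E}|-1$ correction using the assumption $c(e)\geq 1$ (so that the number of hyperedges, bounded by $p$, is dominated by the cost $c(T)\geq$ number of pieces when each piece is nonempty). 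The clean way to organize this is the recursion $g(w) \leq \max\{w,\ \min_{\text{split } w=w_1+w_2}\ (w_1+w_2 + g(w_1)+g(w_2))\}$ where the split can be made with $w_1,w_2 \leq w/2 + 1$, solved by $\lfloor\lg p\rfloor$ unfoldings to give $g(w)\leq w(1+2/\lfloor\lg p\rfloor)$; I would phrase the induction on $p$ (or on $\lfloor\lg p\rfloor$) so that the factor $2/\lfloor\lg p\rfloor$ emerges as "two units of overhead ($+s_i$ and $+1$) amortized over $\lfloor\lg p\rfloor$ levels."

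The main obstacle I anticipate is the interaction between two constraints that pull in opposite directions: making the pieces small enough in cost to get the $2/\lfloor\lg p\rfloor$ bound, while keeping the number of terminals in each piece at most $p$ so the hypergraph has rank $\leq p$. A naive halving controls cost but a piece could still contain many terminals if they are densely packed along a short-cost path; conversely, splitting by terminal count controls rank but not cost. The resolution will be to split on whichever quantity currently exceeds its target — a "balanced cut exists" lemma of the form: any subtree either has cost $\leq$ threshold and $\leq p$ terminals, in which case it becomes a leaf of the recursion, or admits an edge/node separator achieving a $\tfrac12$-balanced split of the offending quantity. Verifying that such a separator always exists in a tree (the standard centroid argument, applied to a weighting that is either the edge costs or the terminal-indicator) and that the recursion depth stays $\lfloor\lg p\rfloor$ in both coordinates simultaneously is the delicate bookkeeping step; everything else is routine telescoping.
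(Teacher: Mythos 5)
There is a genuine gap, and it sits exactly where you flagged the ``delicate bookkeeping'': the factor $2/\lfloor\lg p\rfloor$ cannot be extracted from a cost-halving recursion of depth $\lfloor\lg p\rfloor$. If you stop after $\lfloor\lg p\rfloor$ halvings you get at most $p$ pieces of cost roughly $c(T)/p$ each, but since edge costs are only bounded \emph{below} by $1$, a piece of cost $C$ can contain up to $C+1$ terminals, so the rank bound fails whenever $c(T)\gg p^2$; if instead you recurse until every piece has at most $p$ terminals, the number of pieces is no longer bounded by $p$, and your accounting for the $+|{\cal E}|-1$ term (which assumes $|{\cal E}|\le p$) collapses. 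Moreover the recursion $g(w)\le w+g(w_1)+g(w_2)$ unrolled to depth $k$ yields roughly $k\cdot w$, not $w(1+2/k)$, so the claimed solution does not follow from the recurrence as stated. A second, independent problem is your definition of the hyperedges: a piece of an edge-partition may contain no terminal at all (take a path $r_1 s_1\cdots s_n r_2$ with $R=\{r_1,r_2\}$ and cut it into three pieces), so ``the terminal endpoints of the boundary cuts together with the terminals interior to the piece'' can be empty, and the resulting hypergraph need not connect $R$.

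The paper resolves both issues with machinery your proposal does not contain. It first normalizes $T$ to a full binary tree with leaf set $R$ (properties (A)--(C)), then applies the Du--Zhang partition, which is not a balanced halving but a partition into subtrees of height at most $m=\lfloor\lg p\rfloor$ (hence at most $2^m\le p$ leaves each), with the $1/m$ factor coming from averaging over the $m$ possible depth offsets at which the cuts are made. Cuts land at Steiner nodes, and connectivity of the hypergraph is restored by attaching to each such node $u$ a terminal connecting path $P_T(u,f(u))$ for a proper mapping $f:S\to R$ whose paths are pairwise edge-disjoint; the averaging bounds the total cost of the paths actually used by $c(T)/m$. Finally, Lemma~\ref{l:mapping} chooses $f$ so that every connecting path contains an edge of cost at least $1$, which is precisely what converts that cost bound into $|{\cal E}|-1\le c(T)/\lfloor\lg p\rfloor$. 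All three devices (binary normalization with offset averaging, terminal connecting paths, and the cost-$\ge 1$ routing of $f$) are needed for the stated inequality, and none is supplied by the centroid/halving scheme.
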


Lemma~\ref{l:e-decomposition} will be proved later. 
Now we show that it implies Lemma~\ref{l:decomposition}.
An {\em $R$-component} of $T$ is a maximal inclusion subtree
of $T$ such that all its leaves are in $R$ but no its internal node is in $R$.
It is easy to see that it is sufficient to prove Lemma~\ref{l:decomposition} for each $R$-component separately,
hence we may assume that $R$ is the set of leaves of $T$.

If $T$ is a star, then since $k \geq 2\Delta-2 \geq \Delta$, 
we let ${\cal E}$ to consist of a single hyperedge $A=R$.
Then $|V_A \cap S|=1=|S|$, and Lemma~\ref{l:decomposition} holds in this case.

Henceforth assume that $T$ is not a star.
For $v \in S$ let $R(v)$ be the set of neighbors of $v$ in $R$, 
and note that $|R(v)| \leq \Delta-1$.
Let $T'=(V',F')=T \setminus R$ and let $R'=\{v \in S: R(v) \neq \emptyset\}$. 
Applying Lemma~\ref{l:e-decomposition} on $T'$ with unit edge-costs and $R'$, we obtain that 
for $p=\lfloor k/(\Delta-1) \rfloor$ there exists a connected hypergraph 
${\cal H}'=(R',{\cal E}')$ of rank $\leq p$ such that 
$\sum_{A' \in {\cal E}'}|F'_{A'}|+|{\cal E}'|-1 \leq \left(1+\frac{2}{\lfloor\lg p\rfloor}\right)|F'|$.
Note that $|F'|=|V'|-1$ and that $|V'_{A'}|=|F'_{A'}|-1$ for every $A' \in {\cal E}'$. 
Hence 
$$\sum_{A' \in {\cal E}'}|V'_{A'}|-1 \leq \left(1+\frac{2}{\lfloor\lg p\rfloor}\right)(|V'|-1) \leq
  \left(1+\frac{2}{\lfloor\lg p\rfloor}\right)|V'|-1 \ .$$
For $A' \in {\cal E}'$ let $A=\cup_{v \in A'}R(v)$; then $|A| \leq p(\Delta-1)$. 
Let ${\cal E}=\{A:A' \in {\cal E}'\}$.
Then ${\cal H}=(R,{\cal E})$ is a connected hypergraph of rank $\leq p(\Delta-1) \leq k$, and
$$\sum_{A \in {\cal E}} |V_A \cap S| = \sum_{A' \in {\cal E}'} |V'_{A'}|
\leq \left(1+\frac{2}{\lfloor\lg p\rfloor}\right)|V'| = 
\left(1+\frac{2}{\lfloor\lg \lfloor k/(\Delta-1)\rfloor \rfloor}\right)|S| \ .$$

In the rest of this section we prove Lemma~\ref{l:e-decomposition}, by 
extending the proof of Du and Zhang \cite{DuZhang} of an existence of a connected 
hypergraph ${\cal H}=(R,{\cal E})$ of rank $\leq p$ such that 
$\sum_{A \in {\cal E}}c(F_A) \leq \left(1+\frac{1}{\lfloor\lg p\rfloor}\right)c(T)$.
We have an extra term of $|{\cal E}|-1$, and we show that 
this term can be bounded by $\frac{c(T)}{\lfloor\lg p\rfloor}$.

We start by transforming the tree into a (rooted) binary tree $T$ with edge-costs, 
which node set is partitioned into a set $R$ of terminals and a set $S$ of non-terminals, 
such that the following properties hold:
\begin{itemize}
\item[(A)]
$R$ is the set of leaves of $T$.
\item[(B)]
The cost of any edge of $T$ is either $0$ or is at least $1$,
and among the edges that connect a node in $S=V \setminus R$ to its children, 
at most one has cost $0$.
\item[(C)]
$T$ is a full binary tree, namely, every $v \in S$ has exactly $2$ children.
\end{itemize}
To obtain such a tree, root $T$ at an arbitrary non-leaf node $\hat{s} \in S=V\setminus R$,
and apply the following standard reductions. 
\begin{enumerate}
\item
While $T$ has a leaf in $S$, remove this leaf; hence every leaf of $T$ is in $R$.  
Then, for every $v \in R$ that is not a leaf, add to $T$ a new node $v'$ and an edge $vv'$ of cost $0$,
add $v'$ to $R$, and move $v$ from $R$ to $S$. 
After this step, properties (A) and (B) hold.
\item
While there is $v \in S$ that has one child, replace the path $P$ of length $2$ that contains $v$
by a single edge of cost $c(P)$, and exclude $v$ from $S$.
After this step, every $v \in S$ has at least $2$ children.
\item
While there is $v \in S$ that has more than $2$ children, do the following.
Let $u$ be a child of $v$ such that the cost of the edge $vu$ is at least $1$. 
Add a new node $v'$ and the edge $vv'$ of cost $0$,
and for every child of $u'$ of $v$ distinct from $u$ replace the edge $vu'$ by the edge $vu'$.
After this step, all the three properties (A), (B), and (C) hold.
\end{enumerate}

Consequently, to prove Lemma~\ref{l:e-decomposition},
it is sufficient to prove the following.

\begin{lemma}\label{l:abc-decomposition}
Let $T=(V,F)$ be a tree with edge costs $\{c(e):e \in F\}$ and leaf set $R$, satisfying properties (A),(B),(C),
Then for any integer $p \geq 2$ there exists a connected hypergraph 
${\cal H}=(R,{\cal E})$ of rank $\leq p$ such that 
$\sum_{A \in {\cal E}}c(F_A)+|{\cal E}|-1 \leq \left(1+\frac{2}{\lfloor\lg p\rfloor}\right)c(T)$.
\end{lemma}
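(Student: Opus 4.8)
The plan is to prove Lemma~\ref{l:abc-decomposition} by adapting the recursive decomposition of Du and Zhang~\cite{DuZhang} to the rooted full binary tree $T$ satisfying (A),(B),(C), and carefully accounting for the extra additive term $|{\cal E}|-1$. Set $\i=p$ and let $\t=\lfloor\lg p\rfloor$. The idea is to process $T$ top-down: starting from the root, repeatedly cut $T$ into subtrees, each of which has at most $\i$ leaves but — crucially — has ``depth'' (in terms of accumulated edge weight) at least roughly $\t$ along its longest root-to-leaf weighted path, so that the number of pieces, hence $|{\cal E}|$, is controlled by $c(T)/\t$. Concretely, I would associate to each node $v$ the value $w(v)$ equal to the cost of the path from $v$ to the root restricted to weight-$\ge 1$ edges (equivalently, count weighted edges, with zero-cost edges free), and grow a piece downward from a current top node until either it accumulates $\i$ leaves of $T$, or the weighted depth within the piece reaches $\t$; then cut, make the leaves of the piece that are internal in $T$ the new top nodes, and recurse. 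Each cut produces one hyperedge $A$ (the set of $R$-leaves of $T$ lying below the piece's top node and reachable within the piece), and the union of all these hyperedges forms a connected spanning hypergraph ${\cal H}=(R,{\cal E})$ of rank $\le\i$, since the tops of the pieces are linked in a tree structure mirroring $T$.

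The key accounting has two halves. For $\sum_{A\in{\cal E}}c(F_A)$: each original edge $e$ of $T$ with $c(e)\ge 1$ lies in $F_A$ for those hyperedges $A$ whose piece contains $e$; since a piece other than the topmost one containing $e$ can only arise when a cut was forced, and cuts forced by the weight-$\t$ rule can duplicate $e$ at most $O(1/\t)$ fraction of the time while binary branching (the $\i$-leaf rule) duplicates it with multiplicity bounded using $\lfloor\lg\i\rfloor=\t$ — this is exactly where the $1+1/\t$ of Du--Zhang comes from. For the new term $|{\cal E}|-1$: I want to charge it to $c(T)/\t$. The point is that along any root-to-leaf path of $T$, consecutive cut-points are separated by weighted distance $\ge\t$ \emph{unless} the cut was caused by the $\i$-leaf rule; and cuts caused by the $\i$-leaf rule can be charged against the binary-tree structure (each such cut ``uses up'' a chunk of leaves, and there are $\le |R|-1 \le c(T)$ of them only after noting the zero-cost padding edges ensure leaves align with weight) — more carefully, I would show the number of pieces is at most $\max\{1,c(T)/\t\}+(\text{number of }\i\text{-leaf cuts})$, and that the $\i$-leaf cuts are themselves absorbed into the first term because each forces $\ge \i/2 \ge 2^{\t-1}$ leaves, while the total ``leaf-budget'' is tied to $c(T)$ via property (B). Summing the two halves gives $\sum_A c(F_A)+|{\cal E}|-1 \le (1+1/\t)c(T)+(1/\t)c(T)=(1+2/\t)c(T)$.

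I would organize the write-up as: (i) define $w$ and the piece-growing procedure precisely, with the two stopping rules and a tie-breaking convention; (ii) verify ${\cal H}$ is connected of rank $\le\i$; (iii) bound $\sum_{A\in{\cal E}}c(F_A)\le(1+1/\t)c(T)$ by quoting/reproving the Du--Zhang multiplicity estimate on the binary tree (this is where properties (B) and (C) do their work: (C) makes branching binary so logarithms are clean, (B) ensures every unit of cost corresponds to a genuine weight-$\ge 1$ edge and at most one child-edge per internal node is free); (iv) bound $|{\cal E}|-1\le c(T)/\t$ by the charging argument above; (v) add the two bounds.

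The main obstacle I anticipate is step (iv): cleanly separating the two types of cuts and showing the $\i$-leaf cuts do not blow up $|{\cal E}|$ beyond $c(T)/\t$. Naively, a tree could have many leaves but small total cost (if many edges are zero-cost), which would seem to force many $\i$-leaf cuts with no cost to charge them to; property (B) is exactly what rules this out — among the edges from any internal node to its children at most one is free, so a subtree with $L$ leaves has weighted cost $\ge \log_2 L$ along some path, i.e. $\ge\t$ whenever $L\ge\i$, which means every $\i$-leaf cut \emph{is} accompanied by $\ge\t$ units of cost on the path it cuts, letting me charge it to $c(T)/\t$ after all. Making this simultaneous charging (each unit of cost charged by at most one cut of each type) rigorous — rather than double-charging — is the delicate bookkeeping, and I would handle it by a clean potential/credit argument walking down $T$ and releasing $1/\t$ credit per unit weighted depth, spending $1$ credit per cut.
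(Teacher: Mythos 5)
Your overall architecture (a top-down Du--Zhang-style partition into at-most-$p$-leaf pieces, the standard $(1+1/\lfloor\lg p\rfloor)c(T)$ bound on $\sum_{A}c(F_A)$, and a separate charge of $|{\cal E}|-1$ against $c(T)/\lfloor\lg p\rfloor$) matches the shape of the paper's argument, but the pivotal claim in your step (iv) is false. You assert that property (B) forces a subtree with $L$ leaves to have weighted cost at least $\lg L$ along some root-to-leaf path. Consider a ``caterpillar'': a root $r$ with a child $a$ via a cost-$0$ edge and a leaf child via a cost-$1$ edge, where $a$ again has a cost-$0$ child $a_1$ and a cost-$1$ leaf child, and so on for $h$ levels. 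This tree satisfies (A), (B), (C) and has $h+1$ leaves, yet every root-to-leaf path contains at most one edge of positive cost, so its maximum weighted depth is $1$, not $\lg(h+1)$. Property (B) guarantees one expensive child-edge per internal node, but nothing prevents all the free edges from concatenating into a single spine. Consequently a $p$-leaf cut need not be ``accompanied by $\geq\lfloor\lg p\rfloor$ units of cost on the path it cuts,'' and your credit scheme, which releases $1/\lfloor\lg p\rfloor$ credit per unit of weighted depth along a path, cannot pay for these cuts.

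The bound you want is recoverable, but by charging to the \emph{total} cost of a piece rather than to path depth: a connected subtree of a full binary tree containing $p$ leaves of $T$ has at least $p-1$ branching nodes, each contributing a distinct child-edge of cost $\geq 1$ by (B), so such a piece has total cost at least $p-1\geq\lfloor\lg p\rfloor$; since the pieces are edge-disjoint, this bounds their number by $c(T)/\lfloor\lg p\rfloor$ plus one. The paper sidesteps this bookkeeping altogether: it takes Lemma~\ref{l:subtrees} (Du--Zhang) as a black box, which already yields at least $|{\cal T}|-1$ ``terminal connecting paths'' of total cost at most $c(T)/\lfloor\lg p\rfloor$, and adds Lemma~\ref{l:mapping}, showing the proper mapping $f$ can be chosen (with $F_1$ the set of cost-$\geq 1$ edges, which meets every internal node by (B)) so that each path $P_T(u,f(u))$ contains an $F_1$ edge and hence has cost at least $1$. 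Then $|{\cal E}|-1$ is at most the number of terminal connecting paths, which is at most their total cost, which is at most $c(T)/\lfloor\lg p\rfloor$. If you keep your self-contained decomposition, you must replace the per-path depth claim with the per-piece edge count above (and reconcile it with the rank accounting, since the hyperedge $\hat{T}'\cap R$ also collects terminals reached through connecting paths); otherwise the proper-mapping device is the cleaner route.
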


Let $T=(V,F)$ be a rooted tree with leaf set $R$ and let $S=V \setminus R$. 
For two nodes $u,v$ of $T$ let $P_T(u,v)$ denote the unique path in $T$ between $u$ and $v$.

\begin{definition}
We say that $T$ is {\em proper} if every node in $S$ has at least $2$ children.
We say that a mapping $f:S \rightarrow R$ is $T$-proper if
\begin{itemize}
\item 
For every $u \in S$, $f(u)$ is a descendant of $u$.
\item 
The paths $\{P_T(u,f(u)):u \in S\}$ are edge disjoint.
\end{itemize}
Given a subtree $T'$ of $T$ with leaf set $L'$ and a proper mapping $f$,
the set of {\em terminal connecting paths} of $T'$ is  $\{P_T(u,f(u)): u \in L' \setminus R\}$.
Let $\hat{T}'$ denote the tree obtained from $T'$ by adding to $T'$ all the terminal connecting paths. 
\end{definition}

Du and Zhang \cite{DuZhang} proved that any proper tree $T$ admits a proper mapping.
We prove the following.

\begin{lemma} \label{l:mapping}
Let $T=(V,F)$ be a proper tree and let $F_1 \subseteq F$ be such that 
any $u \in S$ has a child connected to $u$ by an edge in $F_1$.
Then there exists a $T$-proper mapping $f$ such that for every $u \in S$,
the path $P_T(u,f(u))$ contains at least one edge in $F_1$.
\end{lemma}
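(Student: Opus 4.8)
The plan is to construct the mapping greedily from the bottom of the rooted tree $T$ upward, forcing the path emanating from each internal node to begin with the prescribed $F_1$-edge at that node. Fix, for every $u\in S$, a child $c(u)$ with $uc(u)\in F_1$ (which exists by hypothesis), and process the nodes of $S$ in order of non-increasing depth. When $u$ is processed, start the path $P_T(u,f(u))$ with the edge $uc(u)$ and then extend it downward greedily: having arrived at an internal node $w$, continue along some edge joining $w$ to one of its children that has not yet been used by any path constructed so far; stop upon reaching a leaf, which is declared to be $f(u)$. (If $c(u)$ is itself a leaf, the path is the single edge $uc(u)$ and $f(u)=c(u)$.) Since $uc(u)\in F_1$, every path produced contains an $F_1$-edge, and since all paths only descend, $f(u)$ is a descendant of $u$; what remains is to show that the greedy extension never gets stuck and that the resulting paths are pairwise edge-disjoint.

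The core of the argument is the following invariant: at the moment $u$ is processed and the extension reaches an internal node $w$ (necessarily a strict descendant of $u$, hence already processed), exactly one of the $\ge 2$ edges from $w$ to its children has been used, namely the first edge $w\,c(w)$ of $w$'s own path; consequently a free child-edge of $w$ exists, the extension proceeds, and since $T$ is finite it terminates at a leaf. To establish this I would argue that a path $P_T(z,f(z))$ constructed for a previously processed node $z\neq w$ cannot use a child-edge of $w$: that path consists only of edges strictly below $z$, so it could use a child-edge of $w$ only if $z$ were a proper ancestor of $w$; but every proper ancestor of $w$ that is a descendant of $u$ lies on the portion of $u$'s current path strictly between $u$ and $w$, and when $u$'s extension passed through such a node $z$ it left $z$ along a child-edge that was free at that time, hence different from the already-used first edge $z\,c(z)$ of $z$'s own path — so $z$'s own path departed $z$ in a different direction and, descending, never revisits $w$. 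Edge-disjointness then follows at once: the edge $uc(u)$ cannot lie strictly below any already-processed node (such a node would be a proper ancestor of $u$, contradicting the processing order), and every subsequently chosen child-edge was free when chosen, so $P_T(u,f(u))$ shares no edge with any earlier path; induction on the processing order yields that $\{P_T(u,f(u)):u\in S\}$ are pairwise edge-disjoint, i.e. $f$ is $T$-proper.

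I expect the main obstacle to be precisely the verification of this invariant — in particular, excluding the possibility that the path of some previously processed node $z$ lying between $u$ and $w$ has already consumed the very child-edge of $w$ that the extension needs. The resolution is the observation above: at $z$, the extension of $u$'s path and $z$'s own path are forced to leave $z$ along different child-edges (one was selected because it was free, the other is $z$'s reserved first edge), so $z$'s path heads into a disjoint subtree and cannot reach $w$ at all. Once this is nailed down, the remaining claims — that $f(u)$ is a descendant of $u$, that each path meets $F_1$, and that the paths are edge-disjoint — are immediate from the construction.
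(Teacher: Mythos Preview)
Your argument is correct. The invariant you isolate --- that when the extension from $u$ arrives at an internal descendant $w$, the only child-edge of $w$ already consumed is $w\,c(w)$ --- does hold, and your justification is sound once one reads it as an induction along the path $u=w_0,w_1,\dots$: having established the invariant at $w_1,\dots,w_{i-1}$, the extension left each $w_j$ by a free edge, necessarily distinct from $w_j\,c(w_j)$, so $w_j$'s own path lives in the $c(w_j)$-subtree and cannot touch $w_i$. Edge-disjointness and termination then follow exactly as you say.

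The paper proceeds differently: it argues by induction on the height of $T$, stripping off the deepest level to get $T'$, invoking the inductive hypothesis to obtain a proper mapping $f'$ for $T'$, and then extending each path $P_{T'}(u,f'(u))$ by a single edge whenever its endpoint $f'(u)$ is internal in $T$ (and defining $f$ from scratch on the new internal nodes at depth $h-1$). Your construction is instead a one-shot greedy procedure that builds each path in full, processed by non-increasing depth. What your approach buys is a cleaner bookkeeping of edge-disjointness via the single invariant; the paper's approach is a shorter write-up but leaves to the reader the check that the one-edge extensions at depth $h-1$ do not collide with the new length-one paths created there (which relies on injectivity of $f'$ and on each depth-$(h-1)$ node having at least two children). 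Both arguments ultimately exploit the same two facts --- properness gives a spare child-edge, and descending paths from nodes of greater-or-equal depth cannot enter the subtree above $u$ --- but organize them in opposite directions.
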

\begin{proof}
The proof is by induction on the height of the tree.
Let $T$ be a tree as in the lemma of height $h$. 
If $h=1$, then $T$ has one internal node (the root), say $u$, and we set $f(u)$ 
to be the node that is connected to $u$ by an edge in $F_1$.
Suppose that the statement is true for trees with height $h-1 \geq 1$, 
and we prove it for trees of height $h$.
Let $T'$ be obtained from $T$ by removing nodes of distance $h$ from the root.
By the induction hypothesis, for $T'$ there exists a mapping $f'$ as in the lemma.
Let $u$ be an internal node of $T$. Consider two cases.

Suppose that $u$ is an internal node of $T'$. 
If $f'(u)$ is a leaf of $T$, then define $f(u)=f'(u)$. 
If $f'(u)$ is an internal of $T$, then  $f'(u)$ is a leaf of $T'$,
and all its children in $T$ are leaves.
Then we set $f(u)$ to be a child of $f'(u)$
that is connected to $f'(u)$ by an edge in $F_1$

Suppose that $u$ is a leaf of $T'$. 
Then the children of $u$ in $T$ are leaves, and we set $f(u)$ to be a child of $u$
that is connected to $u$ by an edge in $F_1$.

It is easy to verify that the obtained mapping $f$ meets the requirements.
\qed
\end{proof}

The following statement is implicitly proved by Du and Zhang \cite{DuZhang}.

\begin{lemma} [\cite{DuZhang}] \label{l:subtrees}
Let $T$ be a proper binary tree with non-negative edge costs and let $f$ be a proper mapping.
Then for any integer $p \geq 2$ 
there exists an edge-disjoint partition ${\cal T}$ of $T$ into subtrees 
such that the following holds:
\begin{itemize} 
\item[{\em (i)}]
The hypergraph with node set $R$ and hyperedge set 
${\cal E}=\{\hat{T}' \cap R:T' \in {\cal T}\}$ is connected and has rank at most $p$.
\item[{\em (ii)}]
The total number of terminal connecting paths of all subtrees in ${\cal T}$ is at least $|{\cal T}|-1$,
and their total cost is at most $c(T)/\lfloor \lg p \rfloor$.
\end{itemize}
\end{lemma}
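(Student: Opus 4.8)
Set $t=\lfloor\lg p\rfloor\ge1$, so that $2^t\le p$. I would split the statement into a purely combinatorial inequality — that the number of terminal connecting paths is at least $|{\cal T}|-1$, which in fact holds for \emph{any} partition of $E(T)$ into subtrees — and the metric content (rank $\le p$, connectedness of ${\cal H}=(R,{\cal E})$, and total terminal-connecting-path cost $\le c(T)/t$), which is precisely the tree decomposition of Du and Zhang \cite{DuZhang}. So the plan is: first isolate and prove the combinatorial inequality directly; then recall the Du--Zhang decomposition and verify the remaining three properties for it.

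\emph{Step 1: the combinatorial bound.} Given any partition ${\cal T}=\{T'_1,\dots,T'_m\}$ of $E(T)$ into subtrees, I would set $\mu(v)=|\{i:v\in V(T'_i)\}|$ for $v\in V(T)$. Since each $T'_i$ is a tree, $|V(T'_i)|=|E(T'_i)|+1$, so summing over $i$ and using $\sum_i|E(T'_i)|=|E(T)|=|V(T)|-1$ gives $\sum_{v}(\mu(v)-1)=m-1$. Next, if $a_1,\dots,a_{\mu(v)}\ge1$ are the numbers of edges incident to $v$ lying in the $\mu(v)$ subtrees through $v$, then $\deg_T(v)=\sum_i a_i\ge\mu(v)+|\{i:a_i\ge2\}|$, so $v$ is a leaf (degree $1$) in at least $\mu(v)-|\{i:a_i\ge2\}|\ge2\mu(v)-\deg_T(v)$ of these subtrees. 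Because $T$ is a full binary tree with leaf set $R$ (properties (A),(C)), $\deg_T(v)\le3$ for $v\in S$ and $\mu(v)=1$ for $v\in R$; hence every $v\in S$ is a leaf of at least $\max\{0,2\mu(v)-3\}\ge\mu(v)-1$ subtrees, and — since internal nodes are non-terminals — each such occurrence is a non-terminal leaf, i.e.\ a terminal connecting path. Summing over $v\in S$ (terminals contribute $\mu(v)-1=0$), the number of terminal connecting paths is at least $\sum_v(\mu(v)-1)=m-1=|{\cal T}|-1$. Only $\deg_T(v)\le3$ is used here, and it is genuinely needed (a degree-$4$ node split into two parts of size $2$ loses a unit).

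\emph{Step 2: the Du--Zhang decomposition.} For the remaining properties I would reproduce the decomposition of \cite{DuZhang}. Root $T$ at $\hat s$; for $u\in S$ let $n(u)$ be the number of terminals below $u$ and $\ell(u)=\lfloor\lg n(u)\rfloor$, so $\ell$ is non-increasing along root-to-leaf paths and every $u\in S$ has a child $v$ with $\ell(v)\ge\ell(u)-1$. For $j\in\{0,\dots,t-1\}$, let ${\cal T}^{(j)}$ be the edge-partition of $T$ obtained by \emph{splitting} (separating the parent edge from the two child edges) at a suitable set of nodes $u$ with $\ell(u)\equiv j\pmod t$. The three features I would check are: (a) the split nodes of ${\cal T}^{(j)}$ all satisfy $\ell(u)\equiv j\pmod t$, so each $u\in S$ is split in at most one of the $t$ partitions, and each split node is a non-terminal leaf of exactly one piece; (b) within each piece $\ell$ ranges over at most $t$ consecutive values, from which a count of descendant-terminals bounds the number of leaves of every piece by $p$, so $\mathrm{rank}({\cal H}^{(j)})\le p$; (c) the piece rooted at a split node $u$ meets its parent-piece exactly at $u$, and the connecting path $P_T(u,f(u))$ puts a common terminal into the two corresponding hyperedges (chasing a short chain of pieces if needed), so ${\cal H}^{(j)}$ is connected. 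Granting these, by (a) the non-terminal leaves of the pieces of ${\cal T}^{(j)}$ are exactly its split nodes, each counted once, so the total terminal-connecting-path cost of ${\cal T}^{(j)}$ is $\sum_u c(P_T(u,f(u)))\le\sum_{u:\ell(u)\equiv j}c(P_T(u,f(u)))$, the first sum ranging over the nodes split in ${\cal T}^{(j)}$. Since $f$ is proper, the paths $\{P_T(u,f(u)):u\in S\}$ are pairwise edge-disjoint, so summing over $j$ gives $\sum_{j=0}^{t-1}(\mbox{cost of }{\cal T}^{(j)})\le\sum_{u\in S}c(P_T(u,f(u)))\le c(T)$, whence some $j$ has cost $\le c(T)/t$. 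Taking ${\cal T}={\cal T}^{(j)}$ for that $j$, together with Step~1, would prove the lemma.

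\emph{Expected main obstacle.} The substance lies entirely in the Du--Zhang construction behind (a)--(c): the splits must lie in a window of only $t$ levels so that each piece has at most $p$ leaves, yet the split set must stay sparse — confined to a $1/t$-fraction of the level classes — so that averaging over the $t$ residues produces the $1/t$ saving, and at the same time adjacent pieces must share enough terminals for ${\cal H}^{(j)}$ to remain connected. Reconciling these three demands simultaneously is the combinatorial core that I would import from \cite{DuZhang}; everything else, including Step~1, is bookkeeping.
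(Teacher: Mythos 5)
The paper offers no proof of this lemma at all: it is stated with the remark that it is ``implicitly proved by Du and Zhang,'' and the reader is sent to \cite{DuZhang}. Measured against that, your proposal is at least as complete and in one respect strictly more so. Your Step~2 is the same move the paper makes -- importing the level-based splitting of \cite{DuZhang} and the averaging over the $t=\lfloor\lg p\rfloor$ residue classes to get one partition of terminal-connecting-path cost at most $c(T)/t$ -- so I have no complaint there beyond noting that item (c) (connectedness of ${\cal H}^{(j)}$ via ``chasing a short chain of pieces'') is the one spot where your sketch is genuinely hand-wavy and really does lean on the inductive structure of the Du--Zhang construction. The valuable part is your Step~1: the claim that the number of terminal connecting paths is at least $|{\cal T}|-1$ is exactly the piece of the lemma that goes beyond what Du and Zhang literally prove (they only need the cost bound), it is the piece the paper's later argument crucially uses to control the extra $|{\cal E}|-1$ term, and it is nowhere verified in the paper. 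Your double count is correct: $\sum_v(\mu(v)-1)=|{\cal T}|-1$ follows from $|V(T'_i)|=|E(T'_i)|+1$, and the bound $2\mu(v)-\deg_T(v)\ge\mu(v)-1$ on the number of pieces in which $v\in S$ is a (non-terminal) leaf uses only $\deg_T(v)\le 3$, which property (C) guarantees; your observation that this fails at degree $4$ correctly identifies why the reduction to a full binary tree is not cosmetic. So: same route as the paper for the decomposition itself, plus a self-contained and correct proof of the counting half of (ii) that the paper leaves implicit.
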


We now finish the proof of Lemma~\ref{l:abc-decomposition}, and thus also of Lemma~\ref{l:e-decomposition}.
Let $F_1=\{e \in F:c(e) \geq 1\}$ and let $f$ be a proper mapping as in Lemma~\ref{l:mapping}.
Let ${\cal T}$ be a partition as in Lemma~\ref{l:subtrees}, 
and let ${\cal E}$ be as in Lemma~\ref{l:subtrees}(i), so the hypergraph ${\cal H}=(R,{\cal E})$
is connected and has rank at most $p$.
By Lemma~\ref{l:subtrees}(ii), 
the total number of terminal connecting paths of all subtrees is at least $|{\cal T}|-1=|{\cal E}|-1$,
while their total cost is at most $c(T)/\lfloor \lg p \rfloor$.
Every terminal connecting path contains an edge from $F_1$, by Lemma~\ref{l:mapping}, 
and thus has cost at least $1$.
Hence the total cost of all terminal connecting paths is at least $|{\cal E}|-1$.
Consequently 
$$|{\cal E}|-1 \leq \frac{c(T)}{\lfloor\lg p\rfloor} \ .$$

For $A=\hat{T}' \cap R \in {\cal E}$
let $P(T')$ denote the union of the edge sets of the terminal connecting paths of $T'$.
Then $c(F_A) \leq c(\hat{T}') = c(T)+c(P(T'))$, hence
$$
\sum_{A \in {\cal E}} c(F_A) \leq \sum_{T' \in {\cal T}} [c(T')+c(P(T'))]=
\sum_{T' \in {\cal T}} c(T') + \sum_{T' \in {\cal T}} c(P(T')) \leq c(T)+\frac{c(T)}{\lfloor\lg p\rfloor} \ .
$$

Summarizing, we have 
$$\sum_{A \in {\cal E}}c(F_A)+|{\cal E}|-1 \leq 
c(T)+\frac{c(T)}{\lfloor\lg p\rfloor}+\frac{c(T)}{\lfloor\lg p\rfloor} =
\left(1+\frac{2}{\lfloor\lg p\rfloor}\right)c(T) \ .$$

The proof of Lemma~\ref{l:abc-decomposition}, and thus also of Lemma~\ref{l:e-decomposition}
and Theorem~\ref{t:decomposition} is now complete.

\section{Proof of Theorem~\ref{t:D}} \label{s:D}

To prove Theorem~\ref{t:D}, we use the following result of Calinescu \cite{Calinescu}.

\begin{lemma}[\cite{Calinescu}] \label{l:D-Calinescu}
Let $R'$ be a set of terminals in a normed space and
let $S'$ be an inclusion minimal set of points  
such that the unit-disc graph of $R' \cup S'$ is $2$-connected.
Among all $2$-connected spanning subgraphs of the unit-disc graph of $R' \cup S'$, 
let $G'=(V',E')$ be one of minimum total length $\sum_{uv \in E'} d(u,v)$.
Then $\deg_{G'}(v) \leq \Delta$ for all $v \in S'$. 
\end{lemma}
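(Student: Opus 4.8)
The plan is to run the standard exchange argument for minimum‑length geometric structures, arranged so that the $2$‑connectivity bookkeeping reduces to known facts about minimally $2$‑connected graphs. First I would reduce the setting. A minimum‑length $2$‑connected spanning subgraph of the unit‑disc graph has positive edge lengths, hence is edge‑minimal, i.e.\ minimally $2$‑connected, so we may assume $G'$ is minimally $2$‑connected; we may also assume $|V'|\ge 4$ and $|R'|\ge 2$, the smaller cases being trivial since inclusion‑minimality of $S'$ then forces $S'=\emptyset$. Next, using inclusion‑minimality of $S'$ I would check that $G'$ has no proper $2$‑connected subgraph containing $R'$: such a subgraph, if spanning, contradicts edge‑minimality of $G'$, and if not spanning it omits a point of $S'$, producing a $2$‑connected spanning subgraph on a strictly smaller point set and contradicting inclusion‑minimality of $S'$. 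Consequently Lemma~\ref{l:calinescu-cycle} applies to $G'$ and $R'$ (every cycle of $G'$ contains at least two terminals, and every chord path contains a terminal not on the cycle), and I would also record the classical fact that in a minimally $2$‑connected graph with at least four vertices the vertices of degree $\ge 3$ form an independent set.

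Now the geometric core. Suppose some $v\in S'$ has $\deg_{G'}(v)\ge\Delta+1\ge 3$. All neighbours of $v$ lie in the closed unit ball about $v$; projecting each neighbour $u$ radially onto the unit sphere about $v$ gives $\ge\Delta+1$ points of that ball, so by the definition of $\Delta$ two of them — the projections of two neighbours $a,b$ of $v$ — lie within distance $1$. Writing $a$ and $b$ as $v$ plus their scaled projections, with $d(v,a)\ge d(v,b)$ say, the triangle inequality gives $d(a,b)\le d(v,a)\le 1$, so $ab$ is an edge of the unit‑disc graph of $R'\cup S'$. Since $\deg_{G'}(v)\ge 3$ and $a,b$ are neighbours of $v$, the independence fact above forces $\deg_{G'}(a)=\deg_{G'}(b)=2$; let $a'$ and $b'$ be the second neighbours of $a$ and $b$, and note $a'\neq b$ and $b'\neq a$, since otherwise $\{a,b\}$ would form a pendant triangle at $v$ and $G'-v$ would be disconnected.

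Finally the exchange. Let $G''=G'-va-vb+ab$. It is a spanning subgraph of the unit‑disc graph of $R'\cup S'$, and $\mathrm{len}(G'')\le\mathrm{len}(G')-d(v,a)-d(v,b)+d(v,a)=\mathrm{len}(G')-d(v,b)<\mathrm{len}(G')$. Hence, if $G''$ is still $2$‑connected we contradict the minimality of $G'$, and $\deg_{G'}(v)\le\Delta$ follows. Thus everything comes down to verifying that $G''$ is $2$‑connected, and this is the step I expect to be the main obstacle. The intuition is that $G''$ only re‑routes $G'$ locally around $v$: deleting $va,vb$ and adding $ab$ turns the two two‑edge paths $(v,a,a')$ and $(v,b,b')$ into the single three‑edge path $(a',a,b,b')$, while $v$ stays attached through its $\ge\Delta-1\ge 2$ surviving edges (the degenerate case $\Delta=2$ being handled directly). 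To make this rigorous I would, following Calinescu, take an open ear decomposition of $G'$ in which $a$ and $b$ are interior to ears, perform the corresponding surgery on the decomposition, reintroduce $v$ through one of the later ears meeting it, and verify case by case — using Lemma~\ref{l:calinescu-cycle} to exclude the configurations in which a cut vertex could appear — that what results is again an open ear decomposition, hence that $G''$ is $2$‑connected. The geometric close‑pair lemma and the length accounting are routine; the delicate part is this ear‑decomposition surgery.
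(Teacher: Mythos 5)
There is a genuine gap, and it sits exactly where you predicted: the verification that the exchanged graph is still $2$-connected. You never carry out the ``ear-decomposition surgery''; you only describe a plan for it. Worse, the exchange you chose --- delete \emph{both} edges $va$ and $vb$ and add $ab$ --- is the wrong one, and the deferred step is not merely hard but false for it. Take $G'$ to be a generalized theta graph $K_{2,\Delta+1}$: $v$ and another node $w$ joined by $\Delta+1$ internally disjoint paths of length $2$ through $a_1,\dots,a_{\Delta+1}$. This is minimally $2$-connected, every $a_i$ has degree $2$, and your safeguards are all satisfied: the degree-$\ge 3$ vertices $v,w$ are independent, and $a'=w\neq b$, $b'=w\neq a$. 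Yet $G''=G'-va-vb+ab$ has the pendant triangle $a,b,w$ attached to the rest only at $w$, so $w$ is a cut vertex and $G''$ is not $2$-connected. Your re-routing claim that the two paths $(v,a,a')$ and $(v,b,b')$ become one path $(a',a,b,b')$ silently assumes $a'\neq b'$, which nothing forces; you excluded a pendant triangle at $v$ but not at $a'=b'$. (Whether this configuration is geometrically realizable as a minimum-length solution is irrelevant: your argument is a purely graph-theoretic exchange applied to an arbitrary minimally $2$-connected $G'$ with a high-degree Steiner node, so it must work for all of them.)

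The correct move, which is Calinescu's and is mirrored in this paper's Lemma~\ref{l:replace} for the $\{0,1,2\}$-$Q$-connectivity generalization, is to delete only the \emph{longer} of the two edges, say $va$ with $d(v,a)\ge d(v,b)$, and add $ab$; the strict length decrease then comes from the standard perturbation making all edge lengths distinct (so $d(a,b)<d(v,a)$), not from deleting a second edge. Preserving $2$-connectivity under this one-edge replacement is still nontrivial --- $G-va+ab$ minus the vertex $b$ equals $G-b-va$, which can be disconnected in general --- and this is precisely where the criticality of both $va$ and $vb$ and the case analysis via Lemma~\ref{l:calinescu-cycle} (or the biset analysis of Lemma~\ref{l:replace}) are needed. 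Your geometric close-pair step and the reductions in your first paragraph are fine (and note that the paper itself does not reprove this lemma but imports it from \cite{Calinescu}), but as written the proof both defers its central claim and, in the form proposed, that claim is refutable.
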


Let $G,B,S,r$ be as in Theorem~\ref{t:D}.
As in the proof of Theorem~\ref{t:C}, we may assume that $G$ is connected.
Consider a $2$-connected block $G'=(V,E')$ of $G$.
Let $R'=R \cap V'$ and $S'=S \cap V'$. 
Then by Lemma~\ref{l:block} no proper $2$-connected subgraph of $G'$ that contains $R'$ exists,
hence $S'$ is an inclusion minimal set of points  
such that the unit-disc graph of $R \cup S$ is $2$-connected.
Furthermore, since $G$ has minimum total length, so is $G'$.
Thus by Lemma~\ref{l:D-Calinescu}, $\deg_{G'}(v) \leq \Delta$ for all $v \in S'$.
Consequently, $\deg_G(v) \leq \Delta$ holds for any $s \in S$ that belongs to exactly one block of $G$.
A node $s$ is a {\em cut-node} of a connected graph if its removal disconnects the graph.
It is known that $s$ is a cut-node of a graph if and only if $s$ belongs to at least two blocks of the graph.
Our goal now is to show that $\deg_G(v) \leq \Delta$ holds for any cut-node $s \in S$ of $G$.

Let $s \in S$ be a cut-node of $G$. Suppose to the contrary that $\deg_G(v) \geq \Delta+1$. 
Then by \cite{ROB-SAL} there are neighbors $a,b$ of $s$ in $G$ such that $d(a,b) \leq d(a,s)$.
By a reduction from \cite{ROB-SAL,Calinescu}, we may assume that all the lengths of the edges in $G$ are distinct,
hence $d(a,b) < d(a,s)$.
Let $H$ be obtained from $G$ by replacing the edge $sa$ by the edge $ab$.
We claim that $H$ is $(r,Q)$-connected, which gives a contradiction, 
since $H$ has smaller total length than $G$. 
Thus to finish the proof of Theorem~\ref{t:D}, it is sufficient to prove the following.

\begin{lemma} \label{l:replace}
Let $G=(V,E)$ be an $(r,Q)$-connected graph with $r_{uv} \in \{0,1,2\}$ for all $uv \in D_r$,
and let $sa,sb \in E$ be a pair of $(r,Q)$-connectivity critical edges with $s \in Q \setminus R$.
Then the graph $H$ obtained from $G$ by replacing the edge $sa$ by the edge $ab$ is also $(r,Q)$-connected.
\end{lemma}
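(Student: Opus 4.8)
The plan is to argue by contradiction via the biset form of Menger's theorem: assume $H=G-sa+ab$ is not $(r,Q)$-connected, pick a biset violated by $H$, and use that $H$ differs from $G$ in a single edge to force a very rigid structure on that biset; then rule out each possibility, the last one using the criticality of $sb$ together with standard uncrossing. Throughout we use that $s\in Q\setminus R$ while every demand edge has both endpoints in $R$, so $s$ differs from the endpoints of every demand edge; we also treat $H$ as a multigraph (adding $ab$ as a parallel edge if $ab\in E$ already).

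Suppose $H$ is not $(r,Q)$-connected and fix a biset $\hat{Y}=(X,X^+)$ violated by $H$: $\Gamma(\hat Y)\subseteq Q$ and $|\delta_H(\hat Y)|+|\Gamma(\hat Y)|<r_{uv}$ for some $uv\in D_r$ covering $\hat Y$, where we choose $u,v$ realizing the maximum requirement over demand edges covering $\hat Y$, so $r_{uv}\le 2$. Since $G$ is $(r,Q)$-connected, $|\delta_G(\hat Y)|+|\Gamma(\hat Y)|\ge r_{uv}$, while $|\delta_G(\hat Y)|-|\delta_H(\hat Y)|\le 1$ because $H=G-sa+ab$. Hence this difference is exactly $1$: the edge $sa$ covers $\hat Y$, the edge $ab$ does not, $\hat Y$ is tight in $G$ (so $|\delta_G(\hat Y)|=r_{uv}-|\Gamma(\hat Y)|\ge 1$), and $|\delta_H(\hat Y)|=r_{uv}-|\Gamma(\hat Y)|-1$. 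In particular $|\Gamma(\hat Y)|\le r_{uv}-1\le 1$, leaving two cases: $|\Gamma(\hat Y)|=0$ with $r_{uv}\in\{1,2\}$, or $|\Gamma(\hat Y)|=1$ with $r_{uv}=2$.

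If $|\Gamma(\hat Y)|=0$, write $\hat Y=(Y,Y)$, so $sa\in\delta_G(Y)$ with $s,a$ on opposite sides of $Y$. Since $ab$ does not cover $Y$, the node $b$ lies on $a$'s side, hence $sb$ covers $Y$ and $|\delta_H(Y)|\ge 1$. If $r_{uv}=1$ this contradicts $|\delta_H(Y)|=0$. If $r_{uv}=2$, then $|\delta_H(Y)|=1$ forces $\delta_H(Y)=\{sb\}$ and therefore $\delta_G(Y)=\{sa,sb\}$; as both cut edges are incident to $s$, each of the two element-disjoint $uv$-paths in $G$ passes through $s$, contradicting that they are $(E\cup Q\setminus\{u,v\})$-disjoint. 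In the remaining case $|\Gamma(\hat Y)|=1$, say $\Gamma(\hat Y)=\{w\}$ with $w\in Q$, $\delta_G(\hat Y)=\{sa\}$, $|\delta_H(\hat Y)|=0$, $r_{uv}=2$: since $ab$ does not cover $\hat Y$ one gets $b\notin X$, and $b\notin V\setminus X^+$, as otherwise $sb$ would cover $\hat Y$ and contradict $|\delta_H(\hat Y)|=0$; hence $b=w$. So we are reduced to: $\hat Y=(X,X\cup\{b\})$ with $b\in Q$, $\delta_G(\hat Y)=\{sa\}$, $u\in X$, $v\in V\setminus X^+$, $r_{uv}=2$, and $s,a$ the endpoints of the unique cut edge.

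This last case is the crux, and is where the criticality of $sb$ is needed. First, $sb$ is \emph{not} critical for $(u,v)$: the two element-disjoint $uv$-paths in $G$ cross the ``cut'' $\{sa,b\}$, one through the edge $sa$ and one through the node $b$; being disjoint, neither can also use the other crossing, and then neither can contain the edge $sb$ (the $sa$-path contains $s$ but not $b$, the $b$-path contains $b$ but not $s$), so both survive in $G-sb$. Hence $sb$ is critical for some other demand pair, witnessed by a biset $\hat Z$ that is tight in $G$, covered by $sb$, and separating that pair. The plan is to uncross $\hat Y$ with $\hat Z$ (replacing $\hat Z$ by its complementary biset if needed, which preserves all relevant quantities): by submodularity of the biset cut function $\delta_G(\cdot)$ and skew-supermodularity of the element-connectivity biset function for $\{0,1,2\}$ requirements, one of the pairs $\{\hat Y\cap\hat Z,\hat Y\cup\hat Z\}$ or $\{\hat Y\setminus\hat Z,\hat Z\setminus\hat Y\}$ consists of two bisets that are again tight in $G$; since $sa$ and $sb$ are both incident to $s$, tracking their endpoints through the uncrossing yields a tight biset $\hat W$ covered by both $sa$ and $sb$ and with empty boundary, whence $\delta_G(\hat W)=\{sa,sb\}$ and its requirement (equal to $2$) again forces two disjoint paths for some demand pair through $s$ — the same contradiction as in the previous case. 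The hardest step, I expect, is precisely this bookkeeping: confirming the skew-supermodularity of the biset function in the presence of the ``$\Gamma\subseteq Q$ or else $0$'' clause, selecting the correct orientation of $\hat Z$, and verifying that $sa$ and $sb$ land together on one resulting tight biset in every position of $a$ relative to $\hat Z$.
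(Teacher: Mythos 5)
Your setup is sound and matches the paper's opening moves: reducing to a violated biset in $H$, showing it is tight in $G$ with $sa$ the unique crossing edge and $ab$ not crossing, disposing of the $|\Gamma|=0$ cases via the cut $\{sa,sb\}$ through $s\in Q$, and isolating the crux case $\Gamma(\hat Y)=\{b\}$, $\delta_G(\hat Y)=\{sa\}$, $r_{uv}=2$ (this is exactly the paper's biset $\hat X$). Your observation that $sb$ cannot be critical for the pair $(u,v)$ itself is correct and is a nice point the paper leaves implicit.

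The gap is that the crux case is only a plan, and the plan's predicted endpoint is wrong. You assert that uncrossing the deficient biset with a tight biset $\hat Z$ witnessing the criticality of $sb$ will always produce a tight biset $\hat W$ with $\Gamma(\hat W)=\emptyset$ and $\delta_G(\hat W)=\{sa,sb\}$, yielding the ``two paths through $s$'' contradiction. That is not what happens in general. Whether $\hat Y\cap\hat Z$ and $\hat Y\cup\hat Z$ (rather than the two set-differences) are the pair that inherits tightness depends entirely on where the two separated demand pairs sit relative to the two bisets, and in several configurations the contradiction is of a different shape: one obtains a biset of total capacity $|\delta_G|+|\Gamma|=1$ (e.g.\ with boundary $\{b\}$ and no crossing edges) that separates the requirement-$2$ pair $(u,v)$, contradicting $\lambda_G^Q(u,v)\ge 2$ directly — no $\{sa,sb\}$ cut ever appears. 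The paper's proof is precisely the bookkeeping you defer: an explicit case analysis on the position of $a$ relative to $\hat Z$ ($a$ in the boundary, in the far side, or in the near side of $\hat Z$), then on the nature and location of the second capacity element of $\hat Z$ and on the location of the separated terminal, producing in each case a concrete small biset such as $\hat X\setminus\hat Z$, $\hat Z\setminus\hat X$, $\hat X\cup\hat Z$, or $(X\cap Z\setminus\{s\},\,X\cap Z)$ whose capacity is too small. Your own worry about skew-supermodularity under the ``$\Gamma\subseteq Q$ or else $0$'' clause is also legitimate and is only discharged by checking, case by case, that the boundaries of the constructed bisets consist of nodes already known to lie in $Q$ (namely $b$ and the second capacity element of $\hat Z$). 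Since all of this is exactly the hard content of the lemma and none of it is carried out, the proposal does not constitute a proof of the remaining case.
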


\begin{figure} \label{f:cases2}
\centering
\epsfbox{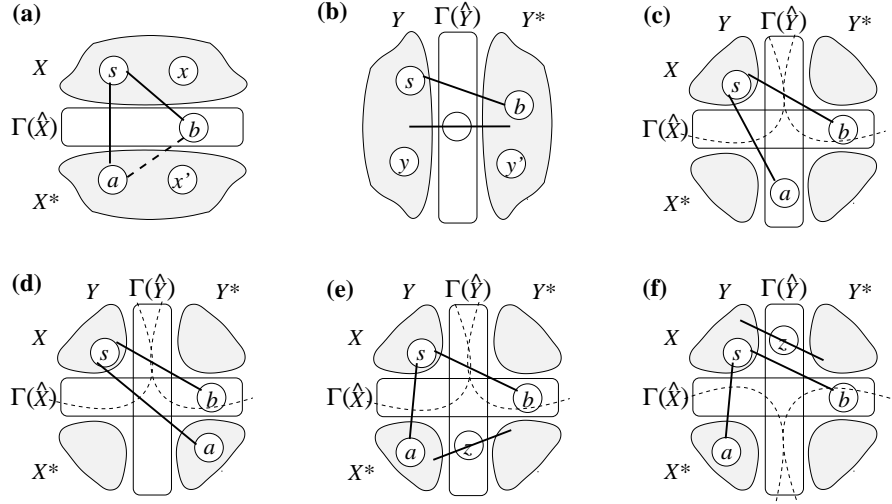}
\caption{Illustration to the proof of Lemma~\ref{l:replace}.}
\end{figure}

\begin{proof}
Suppose to the contrary that there is $xx' \in D_r$ such that $\lambda_{H}^Q(xx') \leq r_{xx'}-1$.
It is easy to see that any $u,v$ that are connected in $G$ also connected in $H$, hence we must have 
$r_{xx'}=2$. Consider the graph $J=G \setminus \{sa\}$.
Since $\lambda_{J \cup \{sb\}}^Q(x,x')=1$ and $\lambda_{J \cup \{sa\}}^Q(x,x')=2$, 
then by Menger's Theorem, there exists a biset $\hat{X}$ such that 
$s \in X$, $a \in X^*$, $b \in \Gamma(\hat{X}) \subseteq Q$, $\delta_G(\hat{X})=\{sa\}$, 
and one of $x,x'$ belongs to $X$ and the other to $X^*$, 
say $x \in X$ and $x' \in X^*$; see Figure~\ref{f:cases2}(a).
Similarly, since the edge $sb$ is $(r,Q)$-connectivity critical, there exist $yy' \in D_r$ with 
$r_{yy'}=2$ and a biset $\hat{Y}$, such that $s \in Y$, $b \in Y^*$,  
$sb \in \delta_G(\hat{Y})$, $|\delta_G(\hat{Y})|+|\Gamma(\hat{Y})|=2$, $\Gamma(\hat{Y}) \subseteq Q$, 
and one of $y,y'$ belongs to $Y$ and the other to $Y^*$, 
say $y \in Y$ and $y' \in Y^*$; see Figure~\ref{f:cases2}(b).
Now we consider the three cases, $a \in \Gamma(\hat{Y})$, $a \in Y^*$, and $a \in Y$, 
and at each of them arrive to a contradiction.

Suppose that $a \in \Gamma(\hat{Y})$; see Figure~\ref{f:cases2}(c).
Then $x \notin \Gamma(\hat{Y})$, so $x \in X \cap Y$ or $x \in X \cap Y^*$.
If $x \in X \cap Y^*$ then the biset $\hat{Z}=\hat{X} \setminus \hat{Y}$ satisfies
$|\Gamma(\hat{Z})|+|\delta_G(\hat{Z})|=1$
(since $\Gamma_G(\hat{Z})=\{b\}$ and $\delta_G(\hat{Z})=\emptyset$),
$x \in Z$, and $x' \in Z^*$;
this contradicts the assumption $\lambda_G^Q(x,x')=2$.
In the case $x \in X \cap Y$, we obtain a similar contradiction for 
the biset $\hat{Z}=(X \cap Y \setminus \{s\},X \cap Y)$.

The analysis of the case  $a \in Y^*$, see Figure~\ref{f:cases2}(d), 
is similar to that of the case $a \in \Gamma(\hat{Y})$.

Now suppose that $a \in Y$; see Figure~\ref{f:cases2}(e,f).
Since $|\Gamma(\hat{Y})|+|\delta_G(\hat{Y})|=2$ and since $sb \in \delta_G(\hat{Y})$, 
there is another element $z \in \Gamma(\hat{Y}) \cup \delta_G(\hat{Y})$.
Note that if $z$ is a node then $z \in X^* \cap \Gamma(\hat{Y})$ (Figure~\ref{f:cases2}(e))
or $z \in X \cap \Gamma(\hat{Y})$ (Figure~\ref{f:cases2}(f)).
If $z$ is an edge then $z$ connects $Y \cap X^*$ and $Y^* \setminus X$ (Figure~\ref{f:cases2}(e))
or $X \cap Y$ and $Y^* \setminus X^*$ (Figure~\ref{f:cases2}(f)).
In the cases in Figure~\ref{f:cases2}(e), when $z \in X^* \cap \Gamma(\hat{Y})$ is a node, 
or $z$ is an edge that connects $Y \cap X^*$ and $Y^* \setminus X$, the contradiction is obtained 
in the same way as in the case $a \in \Gamma(\hat{Y})$.
We therefore are left with the cases in Figure~\ref{f:cases2}(f), when $z \in X \cap \Gamma(\hat{Y})$
or $z$ is an edge that connects $X \cap Y$ and $Y^* \setminus X^*$.
Then we consider the location of $x'$. Note that $x' \notin \Gamma(\hat{Y})$, 
hence $x' \in Y$ or $x' \in Y^*$. 
In the case $x' \in Y$ we obtain a contradiction by considering the biset $\hat{Z}=\hat{Y} \setminus \hat{X}$, and
in the case $x' \in Y^*$ we obtain a contradiction by considering the biset $\hat{Z}=\hat{X} \cup \hat{Y}$.
\qed
\end{proof}

The proof of Theorem~\ref{t:D} is complete.

\section{Conclusions}
In this paper we considered the {\sf Survivable Network with Minimum
Number of Steiner Points} problem in a normed space.
The main results of this paper are 
a $(1+\ln (\Delta-1) + \epsilon)$-approximation scheme for {\sf ST-MSP}, and 
a $\Delta$-approximation algorithm for {\sf $\{0,1,2\}$-SN-MSP}.
For {\sf ST-MSP} in $\mathbb{R}^2$ this improves the ratio $2.5+\epsilon$ of \cite{cheng2008relay}.
For {\sf $\{0,1,2\}$-SN-MSP}, no nontrivial approximation algorithm was known before,
but for the specific case of {\sf SF-MSP} this improves the ratio $2\Delta$ that can be deduced
from the work of \cite{KKS}. Obtaining even better approximation ratios is an important future work.


\end{document}